\providecommand{\nnreals}{\mathbb{R}_{\geq 0}}
\providecommand{\reals}{\mathbb{R}}
\definecolor{darkpastelgreen}{rgb}{0.01, 0.75, 0.24}
\definecolor{bleudefrance}{rgb}{0.19, 0.55, 0.91}
\tikzset{>={Latex[width=1.5mm,length=1.5mm]}}
\def\R{\mathbb{R}}
\newcommand{\cP}{\mathcal{P}}
\newcommand{\opt}{\textsf{OPT}}
\newcommand{\adv}{\textsf{ADV}}
\newtheorem{theorem}{Theorem}[section]
\newtheorem{lemma}[theorem]{Lemma}
\newtheorem{claim}[theorem]{Claim}
\newtheorem{corollary}[theorem]{Corollary}
\newtheorem{fact}[theorem]{Fact}
\theoremstyle{definition}
\newtheorem{definition}[theorem]{Definition}
\newtheorem{remark}[theorem]{Remark}
\newcommand{\eps}{\varepsilon}
\def\*#1{\mathbf{#1}}
\def\+#1{\mathcal{#1}}
\newcommand{\cO}{\mathcal O}
\def\bfone{\mathbf{1}}
\def\bfzero{\mathbf{0}}
\newcommand{\amax}{a_{\max}}
\newcommand{\amin}{a_{\min}}
\newcommand*{\inlineequation}[2][]{
  \begingroup
    \refstepcounter{equation}
    \ifx\\#1\\
    \else
      \label{#1}
    \fi
    \relpenalty=10000 
    \binoppenalty=10000 
    \ensuremath{
      #2
    }
    ~\@eqnnum
  \endgroup
}
\renewcommand*{\@fnsymbol}[1]{\textcolor{darkpastelgreen}{\ensuremath{\ifcase#1\or *\or \dagger\or \ddagger\or
 \mathsection\or \triangledown\or \mathparagraph\or \|\or **\or \dagger\dagger
   \or \ddagger\ddagger \else\@ctrerr\fi}}}
\def\final{0}  
\def\iflong{\iffalse}
\newcommand{\rnote}[1]{{\color{red}[{Raymond: \bf #1}]\marginpar{\color{red}*}}}
\newcommand{\enote}[1]{{\color{green}[{\small Elena: \bf #1}]\marginpar{\color{red}*}}}
\newcommand{\ynote}[1]{{\color{purple}[{Young-San: \bf #1}]\marginpar{\color{red}*}}}
\newcommand{\new}[1]{{\color{red} #1}}
\newcommand{\rnote}[1]{}
\newcommand{\enote}[1]{}
\newcommand{\ynote}[1]{}
\providecommand{\email}[1]{\href{mailto:#1}{\nolinkurl{#1}\xspace}}
\renewcommand{\emph}{\textit}
\title{Learning-Augmented Algorithms for Online Concave Packing and Convex Covering Problems}
 \author{
 Elena Grigorescu\thanks{Purdue University. Supported in part by NSF CCF-1910659, NSF CCF-1910411 and NSF CCF-2228814.
 E-mail: \email{elena-g@purdue.edu}}
 \and
 Young-San Lin\thanks{University of Melbourne. Work done while at Purdue University. Supported in part by NSF CCF-1910659, NSF CCF-1910411 and NSF CCF-2228814.
 E-mail: \email{nilnamuh@gmail.com}}
 \and
 Maoyuan Song\thanks{Purdue University.  Supported in part by NSF CCF-2127806 and NSF CCF-2228814. 
 E-mail: \email{song683@purdue.edu}}
 }
\date{\today}
\begin{document}

\maketitle

\begin{abstract}

\emph{Learning-augmented algorithms} have been extensively studied across the computer science community in the recent years, driven by advances in machine learning predictors, which can provide additional information to augment classical algorithms. Such predictions are especially powerful in the context of online problems, where decisions have to be made without knowledge of the future, and which traditionally exhibits impossibility results bounding the performance of any online algorithm. The study of learning-augmented algorithms thus aims to use external \emph{advice} prudently, to overcome classical impossibility results when the advice is accurate, and still perform comparably to the state-of-the-art online algorithms even when the advice is inaccurate.

In this paper, we present learning-augmented algorithmic frameworks for two fundamental optimizations settings, extending and generalizing prior works. For \emph{online packing with concave objectives}, we present a simple but overarching strategy that \emph{switches} between the advice and the state-of-the-art online algorithm. For \emph{online covering with convex objectives}, 
we greatly extend primal-dual methods for online convex covering programs 
by~\cite{azar2016online} (FOCS 2016) 
and previous learning-augmented framework for online covering linear programs from the literature, to many new applications.
We show that our algorithms break impossibility results when the advice is accurate, while maintaining comparable performance with state-of-the-art classical online algorithms even when the advice is erroneous. 

\end{abstract}

\section{Introduction}

In the classical online model, algorithms are given parts of the input over time, and must make irrevocable decisions to process the input elements as they arrive, without knowledge of the future. 
The performance of an online algorithm is often measured by its ``competitive ratio", which is defined as the ratio between the ``cost" or ``value" of the algorithm's solution and that of the optimal solution in hindsight, or equivalently that of the solution of an optimal offline algorithm.
Due to the uncertainty around future inputs, online problems are traditionally hard, and often exhibit impossibility results, which are lower bounds on the competitive ratio of any online algorithm (e.g., \cite{kmmo,alon2009online,balseiro2019learning}).
We refer the readers to~\cite{buchbinder2009survey} and~\cite{hoi2021online} for excellent surveys on online algorithms. 

The on-going success of machine learning has led to the break-through concept of using predictions~\cite{purohit2018improving,lykouris2021competitive}, i.e., additional information inferred from historical data about future inputs or the problem instance as a whole, and incorporating them into classical online algorithms, to achieve better performance and break impossibility results. However, due to the probabilistic nature of machine learning, as well as the inability of online algorithms to verify the accuracy of these predictions, blindly trusting and following the advice can lead to undesirable performance compared to even online algorithms without predictions, as established by~\cite{szegedy2013intriguing} and~\cite{bamas2020primal}. As a result, the focus of the community has shifted to \emph{learning-augmented algorithms}, namely algorithms that utilize the advice prudently, maintaining rigorous guarantees and high performance, both when the advice is accurate, in which case we say it maintains \emph{consistency}, and when the advice is arbitrarily inaccurate, in which case we say that it satisfies \emph{robustness}. We refer the readers to~\cite{mitzenmacher2020algorithms} for a survey on learning-augmented algorithms.

\paragraph{Online covering and packing problems.} A recent line of work studies learning-augmented algorithm in context of online covering linear programs (LPs). The seminal work of~\cite{bamas2020primal} built upon the primal-dual framework of~\cite{buchbinder2009online} and presented the first primal-dual learning-augmented (PDLA) algorithm for online set cover. 
Combining their algorithms with ideas from~\cite{elad2016online}, the work of~\cite{grigorescu2022learning} generalized the primal-dual learning-augmented framework to solve general online covering LPs and semidefinite programs, allowing for fractional cost, constraint, and advice.
The PDLA framework utilizes the duality of covering and packing linear programs, and maintains a primal covering solution as well as a dual packing solution simultaneously, while fine-tuning the growth rate of each individual variable using the advice provided to the algorithm. Compared to many other learning-augmented algorithms for specific online problems, the PDLA framework is general-purpose, and can apply to a variety of problems oblivious of structure.

\paragraph{Our contributions at a glance.} 
In this paper, we extend the line of work on online covering and packing problems, by designing algorithmic frameworks for online packing maximization with concave objectives, and online covering minimization with convex objectives, further generalizing the setting of~\cite{grigorescu2022learning}. 
Our settings and frameworks are general-purpose, and can be directly applied to a variety of problems, agnostic of problem structure.
Concave packing and convex covering are traditionally considered strongly associated ``dual'' problems to each other in operations research and online optimization; Nonetheless, we show that these two settings admit vastly different learning-augmented algorithms despite their similarity. 

For online packing with concave objectives, we present a simple class of algorithms utilizing a similar ``switching" strategy to that outlined in~\cite{grigorescu2022arxiv} (attributed to Roie Levin), complementing their observation that even such simple strategies can outperform sophisticated algorithms for online covering linear programs. Switching between different algorithms has been a classical design philosophy in classical online algorithms~\cite{fiat1991competitive}, but is much less prevalent in learning-augmented algorithms (e.g., \cite{antoniadis2023online}).

For online covering with convex objectives, it appears that ``switching" strategies do not admit efficient algorithms, thus we instead present a primal-dual-based framework, taking inspiration from the classical primal-dual method for online convex covering in~\cite{azar2016online}. As an extended result, we adapt our PDLA framework to the online covering setting with $\ell_q$-norm objectives studied in~\cite{shen2020online}, utilizing problem structure to obtain better analyses. 
Finally, we apply our proposed frameworks to a variety of online optimization problems, including network utility maximization, optimization under inventory constraints, mixed covering and packing, and buy-at-bulk network design.

Conceptually, from our observations around the ``switching" paradigm, we raise some basic questions: What properties enables these switching strategies to perform well, that is present in concave packing, but absent in convex covering? More generally, which online problems allow such simple algorithms and solutions to exist? We believe that understanding these conceptual questions are important for the field of learning-augmented algorithms.

\paragraph{Organization.}  
We state necessary background knowledge in \Cref{sec:prelim}, and outline our contributions in more detail in \Cref{sec:contribution}. We survey and provide additional references to prior works in \Cref{sec:Prior}. In \Cref{sec:Switch}, we present a simple ``switching"-based overarching framework for online concave packing. In \Cref{sec:Convex}, we present an overview for our primal-dual learning-augmented framework for online convex covering. In \Cref{sec:Lq}, we present our extension of our PDLA framework onto online covering with $\ell_q$-norm objectives. In \Cref{sec:Appl}, we apply our algorithmic frameworks onto a variety of well-motivated practical problems. We conclude our paper with closing remarks and discussions of future directions in \Cref{sec:Conclusion}.

\subsection{Preliminaries} \label{sec:prelim}

We denote by $\bfzero_{(n)}$ and $\bfone_{(n)}$ the vector of all zeroes and all ones of length $n$, respectively. When the length of the vector is unambiguous, we omit the subscript and use $\bfzero$ and $\bfone$. We use $x \geq y$ for $x,y \in \nnreals^n$ to denote the relation that $x_i \geq y_i$ for all $i \in [n]$. 
A function $f : \reals^n \to \reals$ is monotone if for all $x \geq x'$, $f(x) \geq f(x')$ as well\footnote{Such functions are usually called `monotone non-decreasing'. Since all monotone functions in this paper are non-decreasing, we omit the classifier for simplicity.}.

We use $A := \{a_{ij}\}_{i \in [m], j \in [n]} \in \R_{\geq 0}^{m \times n}$ to denote the constraint matrix of the covering and packing problems we study, and we use $i \in [m]$ and $j \in [n]$ to denote the row and column index of $A$, respectively. 

\paragraph{Online concave packing.} The online packing setting with concave objectives we study is defined as follows:
\begin{align}
  \begin{aligned}
    \text{max } & g(y) 
    \text{ over } y \in \nnreals^m 
    \text{ subject to } A^T y \leq b.
  \end{aligned} \label{eq:concavepacking}
\end{align}
Here, $g: \nnreals^m \to \nnreals$ is a concave and monotone objective function with $g(\mathbf{0}) = 0$, and $b \in \R_{> 0}^n$ denotes an upper bound vector for the linear constraints. 

In the online setting, the objective function $g$, the values in $b$, and the number of constraints $n$ are given in advance. In each round $i \in [m]$, a new packing variable $y_i$ is introduced, along with all the associated coefficients $a_{ij}$ for all $j \in [n]$ (the $i$-th row of $A$). The number of packing variables $m$ might be unknown, and each packing constraint is gradually revealed column-wise to the algorithm. In round $i \in [m]$, the algorithm can only irrevocably assign a value for $y_i$. The goal is to approximately maximize $g(y)$ by assigning values to the variables online while maintaining (approximate) feasibility.

\paragraph{Online convex covering.} The online covering setting with convex objectives we study is defined as follows:
\begin{align}
  \begin{aligned}
    \text{min } & f(x) 
    \text{ over } x \in \nnreals^n 
    \text{ subject to } A x \geq \bfone.
  \end{aligned} \label{eq:convexcovering}
\end{align}
Here, $f: \nnreals^n \to \nnreals$ is a convex, monotone, and \emph{differentiable} function, with $f(\bfzero) = 0$. We make the technical assumption that the gradient of $f$, $\nabla f$, is monotone as well, and later show that it is possible to remove this assumption in a more structured setting.

The online setting for convex covering is similar to that of concave packing: The objective function $f$ and the numbers of covering variables $n$ are given in advance, but the constraint matrix $A$ and the number of constraints $m$ is unknown to the algorithm. In each round $i \in [m]$, the $i$-th row of $A$ arrives online, and the algorithm must update $x_j$ for all $j \in [n]$ in a non-decreasing fashion to satisfy the constraint, while approximately minimizing $f(x)$.

\paragraph{Online optimization with advice.} In the learning-augmented setting, the algorithm is additionally given an advice on the variables of interest: $y' \in \nnreals^m$ for packing, and $x' \in \nnreals^n$ for covering. We do not make any additional assumptions about the advice, and their objective values $g(y')$ or $f(x')$ can be arbitrarily worse compared to the optimal solution. Our frameworks additionally utilize a hyper-parameter $\lambda \in [0, 1]$, denoted the \emph{confidence parameter}, chosen by the user a priori.

The advice can be interpreted as a suggestion on what the solution, and thus what the main variables at the end of the algorithm's execution should be, and the confidence parameter $\lambda$ indicates how much the user trusts this advice. A value of $\lambda$ close to $0$ indicates that the advice is trusted and thus likely to be accurate, while a value close to $1$ indicates that the advice should not be trusted. The algorithm's goal is thus to incorporate the advice prudently, obtaining a final solution $\bar x$ or $\bar y$ whose performance is comparable to the advice when $\lambda$ is small, and comparable to the state-of-the-art online algorithm when $\lambda$ is large.

We formalize the metrics we use to measure the performance of learning-augmented algorithms via two notions, \emph{consistency} and \emph{robustness}:
\begin{definition}
An online (covering) solution $\bar x$ is $C(\lambda)$-consistent if $f(\bar x) \leq C(\lambda) \cdot f(x')$, where $f(x')$ is the cost of the advice. A learning-augmented algorithm is $C(\lambda)$-consistent if the solution it generates is $C(\lambda)$-consistent.
\end{definition}
\begin{definition}
An online (covering) solution $\bar x$ is $R(\lambda)$-robust if $f(\bar x) \leq R(\lambda) \cdot \opt$, where $\opt$ is the cost of the optimal offline solution. A learning-augmented algorithm is $R(\lambda)$-robust if the solution it generates is $R(\lambda)$-robust.
\end{definition}

The packing version of these definitions follows symmetrically: A solution $\bar y$ is $C$-consistent and $R$-robust if $g(\bar y) \geq \frac{1}{C} g(y')$ and $g(\bar y) \geq \frac{1}{R} \opt$. For packing maximization problems, it is common practice to allow the solutions found by an online algorithm to violate the packing constraints by a certain factor; an exactly feasible solution can be recovered by scaling down the approximately feasible solution. We make this notion of approximate feasibility explicit in the following definition:
\begin{definition}
An online (packing) solution $\bar y$ is $V(\lambda)$-feasible if $A^T \bar y \leq V(\lambda) \cdot b$. An online algorithm is $V(\lambda)$-feasible if the solution it generates is $V(\lambda)$-feasible.
\end{definition}

Intuitively, when we are confident in the advice, we should follow it as much as possible and obtain a solution that is close to the advice, so $C(\lambda)$ should tend to 1 as $\lambda$ tends to 0. On the other hand, when the advice is possibly inaccurate, our algorithm should follow the non-learning-augmented classical online algorithm, so $R(\lambda)$ should tend to the competitive ratio of the state-of-the-art online algorithm as $\lambda$ tends to 1.
The role of the confidence hyper-parameter $\lambda$ is thus to control the \emph{tradeoff} between consistency and robustness.

\subsection{Our Contributions} \label{sec:contribution}

\subsubsection{Online Packing with Concave Objectives}

Our result for the online packing problem with concave objectives is a general framework for devising learning-augmented algorithms for all online concave packing problems. 
We present a simple algorithm utilizing a switching strategy, which is reminiscent of the switching algorithm mentioned in~\cite{grigorescu2022arxiv}, but ultimately relies on properties of maximization problems distinct to that for covering minimization problems.
The algorithm uses any state-of-the-art classical online algorithm as a black-box subroutine, and obtains a solution that matches both the value of the advice and the value of the subroutine online algorithm asymptotically:
\begin{theorem}[Informal]\label{thm:packinginformal}
Given an instance of the online concave packing problem \eqref{eq:concavepacking}, there exists an online algorithm that takes an $\alpha$-competitive $\beta$-feasible online algorithm as a subroutine, an advice $y' \in \nnreals^n$ which is $\beta$-feasible, and a confidence parameter $\lambda$. The algorithm is $\frac{1}{1-\lambda}$-consistent, $\frac{\alpha}{\lambda}$-robust, and $(2 - \lambda) \beta$-feasible.
\end{theorem}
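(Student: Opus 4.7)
The plan is to run the $\alpha$-competitive, $\beta$-feasible subroutine algorithm in parallel on the online stream, producing $y^{\text{ALG}}_i$ at each round $i$, and to output the deterministic combination
\[
\bar y_i \;:=\; (1-\lambda)\, y'_i + y^{\text{ALG}}_i.
\]
This is computable online: the advice $y'$ is given in advance, and $y^{\text{ALG}}_i$ is produced by the subroutine upon seeing the $i$-th row of $A$. All three claimed bounds will then follow from standard properties of $g$ together with the linearity of the constraints.

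For feasibility, I would just use linearity of $A^T \cdot$ together with the $\beta$-feasibility of both components:
\[
A^T \bar y \;=\; (1-\lambda)\, A^T y' + A^T y^{\text{ALG}} \;\leq\; (1-\lambda)\beta b + \beta b \;=\; (2-\lambda)\beta b.
\]
For consistency, I would first note that concavity of $g$ combined with $g(\bfzero)=0$ gives the subhomogeneity inequality $g(cz)\geq c\, g(z)$ for any $c\in[0,1]$, via $g(cz) = g\bigl(c z + (1-c)\bfzero\bigr) \geq c\, g(z) + (1-c)\, g(\bfzero) = c\, g(z)$. Combined with monotonicity of $g$ and $\bar y \geq (1-\lambda) y'$, this yields $g(\bar y) \geq g((1-\lambda)y') \geq (1-\lambda)\, g(y')$, which is $\frac{1}{1-\lambda}$-consistency. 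For robustness, monotonicity and $\bar y \geq y^{\text{ALG}}$ give $g(\bar y) \geq g(y^{\text{ALG}}) \geq \opt/\alpha$; since $\lambda \in [0,1]$ this is at least $\lambda\, \opt/\alpha$, establishing $\alpha/\lambda$-robustness.

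The main conceptual point --- and the only nontrivial obstacle --- is recognizing that online packing with concave monotone objectives admits a \emph{linear-additive} combination of solutions: adding two feasible primal solutions never decreases the objective (by monotonicity) while feasibility degrades only additively (by linearity of $A^T$). This is exactly the property that online covering with convex objectives lacks, since adding two covering solutions that each under-satisfy different constraints cannot remedy either; this is why a simple switching/interpolation strategy succeeds cleanly for concave packing while the paper needs a full primal-dual framework for convex covering. There is no substantial technical difficulty beyond this observation, the concavity-plus-$g(\bfzero)=0$ lemma, and the $\alpha$-competitiveness of the chosen subroutine; the confidence parameter $\lambda$ just controls how heavily the advice component is scaled down, trading consistency against feasibility slack.
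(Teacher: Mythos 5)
Your proposal is correct and recovers all three claimed bounds, but via a genuinely different interpolation scheme than the paper's. The paper's Algorithm~\ref{alg:simple} uses the \emph{convex} combination $\bar y_i = \lambda y^\cO_i + (1-\lambda) y'_i$, plus a per-round trimming step that falls back to $\bar y_i = y^\cO_i$ whenever the advice has become $\beta$-infeasible on the constraints seen so far; the $\frac{\alpha}{\lambda}$-robustness then comes from $g(\bar y) \geq g(\lambda y^\cO) \geq \lambda g(y^\cO)$. Your combination $\bar y_i = y^\cO_i + (1-\lambda) y'_i$ never scales down the subroutine, which buys you the \emph{stronger} guarantee of $\alpha$-robustness (you only weaken it to $\alpha/\lambda$ at the last step to match the statement), but at the cost of only $(2-\lambda)\beta$-feasibility even when the advice is fully $\beta$-feasible, whereas the paper's convex combination achieves $\beta$-feasibility in that regime since $\lambda\beta + (1-\lambda)\beta = \beta$. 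One thing to flag: the formal version (Theorem~\ref{thm:packingmain}) does \emph{not} assume the advice is $\beta$-feasible and still guarantees $(2-\lambda)\beta$-feasibility in all cases; that relies critically on the trimming your construction omits. Since the informal statement you are proving does assume a $\beta$-feasible advice, your argument is valid as written, but it would not extend to the formal theorem without re-introducing the trim. Your subhomogeneity lemma (concavity plus $g(\bfzero)=0$ gives $g(cz)\geq c\,g(z)$) and the structural observation about why switching works for concave packing but not convex covering both match the paper's reasoning.
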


The formal version of \Cref{thm:packinginformal} is \Cref{thm:packingmain}. We state and analyze our algorithm in \Cref{sec:Switch}.

We remark that our algorithm is a general-purpose framework, and does not rely on any specific classical online algorithm. This property absolves users of our framework from the responsibility of understanding potentially sophisticated online algorithm used as a black-box subroutine. In addition, future advancements in the field of classical online problems would immediately imply advances in their variants augmented by advice. 

Our advice model on the packing variables also generalizes many prior models (e.g.,\cite{IKMP2021} for online knapsack problems), since any form of advice that suggests a course of action can be simulated by our packing program formulation by setting the advice entries to an appropriate value. The exact values of the advice entries depend on the exact problem and advice formulation: We give some examples of these reductions in \Cref{sec:ApplPacking}.

\subsubsection{Online Covering with Convex Objectives}

As our main result, we present a general framework for designing learning-augmented algorithms for online covering problems with convex objective functions \eqref{eq:convexcovering} and possibly non-integral advice and constraints, which generalizes previous works such as online set cover~\cite{bamas2020primal}, and online general covering LPs~\cite{grigorescu2022learning}.
We employ the standard assumption that the objective function $f$ is convex, monotone, differentiable, with $f(\bfzero) = 0$. We additionally assume that the gradient $\nabla f$ is monotone, which is a technical assumption made in works studying the classical online version of the problem as well~\cite{buchbinder2014online,azar2016online}.

We point out that the switching strategy for online covering linear programs in~\cite{grigorescu2022arxiv} crucially relies on the linearity (more specifically, subadditivity) of the objective function, which is not satisfied by convex functions in general, even with the monotone gradient assumption. The cost of ``switching" between solutions is at most a multiplicative factor of 2 in the linear case, but may be unbounded for convex objectives. Thus, such simple switching strategies do not apply to our setting.

Motivated by this, our algorithm for online convex packing returns to more sophisticated methods and follows the primal-dual learning-augmented (PDLA) paradigm~\cite{bamas2020primal,grigorescu2022learning}, taking the advice into consideration while carefully tuning the growth rate of each variable, in order to approximately minimize the cost of the solution. The PDLA algorithms we present are both robust and consistent, with performance close to that of the optimal offline solution when the advice is accurate, and close to that of the state-of-the-art online algorithm when the advice is inaccurate. The performance of our algorithm is (informally) characterized by the following theorem:
\begin{theorem}[Informal]\label{thm:convexinformal}
Given an instance of the online convex covering problem \eqref{eq:convexcovering}, there exists an online algorithm that takes an advice $x' \in \nnreals^n$ and a confidence parameter $\lambda$. The algorithm is $O(\frac{1}{1-\lambda})$-consistent and $O((p \log \frac{d}{\lambda})^p)$-robust. Here, $p := \sup_{x \geq \bfzero} \frac{\langle x, \nabla f(x) \rangle}{f(x)}$, and $d$ is the row sparsity of the constraint matrix.
\end{theorem}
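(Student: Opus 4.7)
The plan is to design a primal-dual learning-augmented (PDLA) algorithm, generalizing the primal-dual framework of Azar et al.\ for online convex covering and the PDLA framework of prior work on online covering LPs. I will work with the primal \eqref{eq:convexcovering} together with its Fenchel--Lagrangian dual, maintaining a primal $x$ and a dual $y$ in tandem. On the arrival of each new row $i$ of $A$ that leaves $x$ infeasible, the algorithm performs a continuous update, raising $y_i$ at unit rate and simultaneously updating each $x_j$ (for $j$ with $a_{ij} > 0$) via a differential rule, until $\sum_j a_{ij} x_j \ge 1$.

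The primal update rule I propose is an advice-augmented analogue of the classical rule, schematically of the form
\[
  \frac{dx_j}{dy_i} \;=\; \frac{a_{ij}\,(x_j + \eta_j)}{\text{normalizer involving }\nabla f(x)},
\]
where the additive offset $\eta_j$ depends on $\lambda$ and the advice coordinate $x'_j$: $\eta_j$ is inflated by a term proportional to $x'_j$ (weighted by something like $\lambda$), so that when the advice is trusted the variable $x_j$ is nudged toward $x'_j$, while in the absence of advice-driven growth $\eta_j$ reduces to a classical seed scaled by $1/\lambda^{\Theta(1)}$. This interpolation is the nonlinear analogue of the PDLA update rules of \cite{bamas2020primal,grigorescu2022learning}, with $\nabla f$ taking the place of the linear objective coefficients.

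The analysis has three components. First, \emph{feasibility}: the update is monotone non-decreasing in every coordinate and terminates exactly when the arriving constraint becomes tight, so $\bar x$ is feasible. Second, \emph{consistency}: integrating the multiplicative differential coordinate-wise will yield $x_j \le \tfrac{1}{1-\lambda}\, x'_j + o(1)$, and monotonicity of $f$ then gives $f(\bar x) = O\bigl(1/(1-\lambda)\bigr) \cdot f(x')$. Third, \emph{robustness}: a weak-duality argument will show that along the continuous update, the rate of growth of the primal objective $df(x)/dy_i$ is at most $(p \log(d/\lambda))^p$ times the rate of growth of the dual objective, so that at termination $f(\bar x) \le (p \log(d/\lambda))^p \cdot \opt$ by weak duality. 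Here the exponent $p$ enters through repeated application of $\langle x, \nabla f(x) \rangle \le p \cdot f(x)$ inside the Fenchel-based potential bound, and the monotone gradient assumption on $\nabla f$ is what certifies that dual feasibility is preserved throughout the trajectory.

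The main obstacle will be the robustness analysis. Inserting the advice-dependent offset $\eta_j$ into the primal update threatens to inflate the primal-to-dual growth ratio and break the delicate $(p \log d)^p$ potential-function argument of \cite{azar2016online}. Handling this requires decomposing $\eta_j$ into an advice part (whose total contribution to the primal cost must be amortized against $f(x')$, and is therefore absorbed into the consistency bound rather than the robustness bound) and a robust part (the classical seed scaled by $1/\lambda$, which causes only a logarithmic blowup and accounts for the $\log(d/\lambda)$ in place of $\log d$). Since convex $f$ lacks the subadditivity that made the switching strategy in \Cref{sec:Switch} and \cite{grigorescu2022arxiv} viable in the linear covering case, the argument cannot simply combine two independent primal solutions; instead, the advice must be woven into a single primal-dual trajectory, which is the most delicate part of the proof and where the nonlinear setting is genuinely harder than previous linear analyses.
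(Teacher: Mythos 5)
Your high-level plan (a primal-dual learning-augmented continuous-update scheme with an advice-dependent additive offset $\eta_j$ in the primal growth rate, analyzed via a primal-vs-dual growth-rate comparison) is the right one and matches the paper's \Cref{alg:ConvexCover}. However, two of your three analysis components are wrong or incomplete in ways that matter.

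The consistency step has a genuine gap. You claim that integrating the ODE coordinate-wise yields $x_j \le \tfrac{1}{1-\lambda} x'_j + o(1)$, and then invoke monotonicity of $f$. This pointwise bound is false: if a late constraint $i$ has $a_{ij} > 0$ only for some $j$ with $x'_j = 0$ (or $x'_j$ tiny), the algorithm must still raise $x_j$ to roughly $1/a_{ij}$, which is unbounded in terms of $x'_j$. Moreover, even if such a coordinate bound held, pushing it through monotonicity would require $f(c x') \le c\, f(x')$, which convexity does \emph{not} give you for $c > 1$ --- the bounded-growth inequality is $f(c x') \le c^p f(x')$, so you would get $(1/(1-\lambda))^p$-consistency, not $O(1/(1-\lambda))$. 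The paper's actual argument works at the level of the \emph{rate} of growth of $f(x)$, not the coordinates: it partitions $\partial f(x)/\partial\tau = r_c + r_u$ where $r_c$ sums over indices with $x_j < x'_j$ (there $x$ never exceeds the advice, so by monotonicity that part of the accrued cost is at most $f(x')$) and $r_u$ over indices with $x_j \ge x'_j$, and then proves the rate inequality $r_u \le \frac{1+\lambda}{1-\lambda} r_c$ using the structure of the offset $D_j^{(t)}$. The $O(1/(1-\lambda))$ factor comes out of this rate comparison, not out of bounding coordinates. This is the key idea you are missing.

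The robustness step is also missing a crucial mechanism: the paper's algorithm does not merely raise $y_t$; it simultaneously \emph{decrements} the largest-coefficient dual variable $y_{m^*_j}$ whenever a dual constraint $\sum_i a_{ij} y_i = \mu_j$ becomes tight, so that the dual remains exactly feasible. Without that decrement (or some substitute), dual feasibility forces a dependence on the condition number $\kappa$ of $A$ --- that is exactly what happens in the paper's $\ell_q$-norm variant (\Cref{alg:lqCover}, \Cref{thm:lqmain}), which omits the decrement and pays $O(\log \kappa d/\lambda)$ instead. Relatedly, your attribution of the monotone-gradient hypothesis is off: it is used in the lower-bound lemma on $x_j^{(\tau)}$ (to replace $\nabla_j f(x^{(\tau)})$ by $\nabla_j f(\bar x) = \mu_j/\delta$ in the differential inequality $\partial x_j/\partial y_t \gtrsim a_{tj}(x_j+D_j^{(t)})/\nabla_j f(\bar x)$), not to certify dual feasibility, which is enforced explicitly by the decrement. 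Finally, the paper's dual grows at rate $r = \delta/\log(1+2d^2/\lambda)$, not unit rate; the choice of $\delta$ then balances the two terms of the dual objective via the bounded-growth facts $f^*(\delta\nabla f(x)) \le \delta^{p/(p-1)}(p-1)f(x)$ and $\langle x,\nabla f(x)\rangle \le p f(x)$, which is where the $(p\log(d/\lambda))^p$ actually comes from.
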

The formal version of \Cref{thm:convexinformal} is \Cref{thm:convexmain} in \Cref{sec:Convex}, which addresses the case when the advice $x'$ is infeasible. We remark that our consistency ratio tends to 1 as $\lambda$ tends to $0$, and our robustness ratio tends to $O((p \log d)^p)$ as $\lambda$ tends to $1$, matching the competitiveness of the online algorithm presented in~\cite{azar2016online}, meeting the ideal expectations. While our algorithms and analyses resemble their counterparts in prior works, we employ some subtle yet vital modifications, since a more direct application of~\cite{grigorescu2022learning}'s learning-augmentation model onto~\cite{azar2016online}'s primal-dual framework suffers from various technical issues and cannot yield satisfactory results.

\subsubsection{Online Covering with $\ell_q$-norm Objectives}

Our technical assumption that the gradient is monotone follows from an identical assumption in~\cite{buchbinder2014online} and~\cite{azar2016online}. Subsequently, \cite{shen2020online} restricted their attention to online covering problems with $\ell_q$-norm objectives, and presented new analyses of a variant of~\cite{azar2016online}'s algorithm that removed the monotone gradient assumption,  using the structural properties of $\ell_q$-norms. Specifically, they consider covering problems of the following form:
\begin{equation}\label{eq:lqcovering}
\text{min } \sum_{e=1}^r c_e || x (S_e) ||_{q_e} \text{ over } x \in \nnreals^n \text{ subject to } A x \geq 1
\end{equation}
where each $S_e \subseteq [n]$ is a subset of indices, $q_e \geq 1$, $c_e \geq 0$, $x(S)$ denote the vector $x$ restricted to indices in $S$, and $||x(S)||_q$ is the $\ell_q$-norm, i.e., $\left( \sum_{i \in S} x_i^q \right)^{1/q}$.

We adapt the analysis in~\cite{shen2020online} to our PDLA framework for general online convex covering problems, and show that it is also consistent and robust for online covering problems with $\ell_q$-norm objectives, similarly in lieu of the monotone gradient assumptions in~\cite{azar2016online}. The performance of our algorithm for online covering problems with $\ell_q$-norm objectives is (informally) described by the following theorem:
\begin{theorem}[Informal]\label{thm:lqinformal}
Given an instance of the online covering problem with $\ell_q$-norm objectives \eqref{eq:lqcovering}, there exists an online algorithm that takes an advice $x' \in \nnreals^n$ and a confidence parameter $\lambda$. The algorithm is $O(\frac{1}{1-\lambda})$-consistent and $O(\log \frac{\kappa d}{\lambda})$-robust. Here, $\kappa$ is the condition number of the constraint matrix, and $d$ is the row sparsity of the constraint matrix.
\end{theorem}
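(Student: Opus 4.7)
The plan is to adapt the PDLA framework behind \Cref{thm:convexmain} to the $\ell_q$-norm setting by combining it with the structural analysis of~\cite{shen2020online}. First, I would reformulate \eqref{eq:lqcovering} by introducing auxiliary variables $z_e \geq \|x(S_e)\|_{q_e}$ so the objective becomes the linear $\sum_e c_e z_e$ subject to the covering constraints and to the (convex) $\ell_q$ constraints $z_e \geq \|x(S_e)\|_{q_e}$. This mirrors Shen et al.'s decomposition and isolates the non-monotonicity of $\nabla \|x(S_e)\|_{q_e}$ inside the $z_e$-constraints. The primal--dual algorithm will then maintain $x$, $z$ online together with dual variables $y_i$ for the arriving covering constraints and dual variables $w_e$ for the $\ell_q$-constraints.

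Second, I would modify the growth rule. In the spirit of~\cite{grigorescu2022learning} and our PDLA algorithm for general convex covering, on the arrival of a constraint $\sum_j a_{ij} x_j \geq 1$, I would update each $x_j$ (and the affected $z_e$) as an exponential-style rule, but split the additive offset into a $\lambda$-scaled advice term $\lambda \cdot x'_j$ and a $(1-\lambda)$-scaled classical term. Starting each $x_j$ at a value proportional to $\lambda x'_j / (\kappa d)$ (replacing the $1/m$-type initialization used in the advice-free setting) is what gives both the $O(1/(1-\lambda))$-consistency factor and the $\log(\kappa d /\lambda)$-robustness factor: after $O(\log(\kappa d/\lambda))$ doublings, every $x_j$ is large enough that each dual packing constraint, after scaling, is satisfied up to the claimed factor.

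Third, I would bound the primal cost by the standard ``split'' argument. The per-round increase of $f$ decomposes into (i) a term of the form $\langle \nabla f(x), \Delta x\rangle$ that, by our choice of the advice-following offset, telescopes into $\tfrac{1}{1-\lambda} f(x')$ and yields the consistency bound; and (ii) a term proportional to the dual objective $\sum_i y_i - \sum_e (\text{concave dual term in }w_e)$, which by weak LP duality is bounded by $\opt$ times the slack introduced by the exponential rule. The key quantitative step here is the $\ell_q$-norm identity $\langle x(S_e), \nabla_{x(S_e)} \|x(S_e)\|_{q_e}\rangle = \|x(S_e)\|_{q_e}$, i.e.\ Euler's identity for the $1$-homogeneous $\ell_{q_e}$-norm. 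This is precisely the property that lets~\cite{shen2020online} avoid the monotone-gradient assumption and replace the $(p\log(d/\lambda))^p$ blow-up of \Cref{thm:convexmain} with a single logarithmic factor.

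The main obstacle I anticipate is the dual feasibility analysis for the $\ell_q$-constraints, since $\nabla \|x(S_e)\|_{q_e}$ is not coordinate-wise monotone in $x$: as one coordinate $x_k$ in $S_e$ grows, the partial derivative in another coordinate $x_j\in S_e$ can shrink, so the dual packing constraints are not automatically preserved by the online updates. Here I would rely on Shen et al.'s observation that it suffices to track the monotone quantities $z_e$ and use Hölder's inequality to relate the non-monotone gradient contributions back to $z_e$, at the cost of a factor that is absorbed into the condition number $\kappa$. Combining this with the standard exponential-growth/$\lambda$-offset argument should yield exactly the claimed $O\!\left(\frac{1}{1-\lambda}\right)$-consistency and $O\!\left(\log\tfrac{\kappa d}{\lambda}\right)$-robustness guarantees.
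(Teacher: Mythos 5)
Your high-level plan (adapt the PDLA machinery, exploit $\ell_q$-structure to bypass the monotone-gradient assumption, and look to~\cite{shen2020online}) is on the right track, but the mechanism you propose is not the one that makes the proof work, and you miss the main algorithmic change. The paper's \Cref{alg:lqCover} does \emph{not} reformulate with auxiliary variables $z_e$; it works directly with the $\ell_q$ objective and the Fenchel dual~\eqref{eq:lqpacking}. The decisive modification from \Cref{alg:ConvexCover} is that it \emph{drops the decrement step} that maintained exact dual feasibility: the dual $(\bar y,\mu)$ is now only $O(\log \frac{\kappa d}{\lambda})$-approximately feasible, and the entire robustness bound (including the appearance of $\kappa$) comes from quantifying this slack. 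Your proposal never mentions this change and instead attributes the $\kappa$ factor to a Hölder step relating gradient contributions to $z_e$, which is not where the condition number arises: it comes from the fact that the primal never grows past $1/\amin$ while the additive offset contributes a starting scale of $\lambda/(\amax d^2)$, so the ratio of the endpoints of the growth range is controlled by $\kappa$.

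Second, the claim that Euler's identity $\langle x,\nabla\|x\|_q\rangle=\|x\|_q$ (i.e.\ $p=1$) ``is precisely the property that lets Shen et al.\ avoid the monotone-gradient assumption'' is not correct. Homogeneity gives $p=1$ in \Cref{fact:boundedgrowth}, but if you simply plugged $f(x)=\sum_e c_e\|x(S_e)\|_{q_e}$ into the analysis of \Cref{thm:convexmain} you would still hit the inequality~\eqref{eq:dxdy-mono-grad}, which replaces $\nabla_j f(x)$ by $\nabla_j f(\bar x)$ and genuinely requires coordinate-wise monotonicity of the gradient. What actually replaces that step in the paper is a \emph{potential-function and phase argument}: one tracks $\Phi=\sum_{j\in S_e}x_j^{q_e}$, partitions the execution into phases $\theta^{\ell-1}\zeta\le\Phi\le\theta^\ell\zeta$ with $\zeta=(\lambda/(\amax d^2))^{q_e}$, bounds $\|\Delta\mu(S_e)\|_{p_e}$ within each phase separately (using $p_e(q_e-1)=q_e$ and $p_e(1-1/q_e)=1$, together with the elementary inequality $a^{p}+b^{p}\le(a+b)^p$), and then sums over $O\bigl(q_e\log(\kappa d/\lambda)/\log\theta\bigr)$ phases. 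Without this argument your proposal has no way to control the dual packing constraint $\|\mu(S_e)\|_{p_e}\le c_e$, and ``a factor absorbed into $\kappa$'' is not a substitute for it. (Also note your description of the algorithm as ``starting each $x_j$ at a value proportional to $\lambda x'_j/(\kappa d)$'' does not match the paper: $x$ starts at $\bfzero$, the offset lives inside the multiplicative update $D_j^{(t)}$, and it does not involve $\kappa$.)
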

The formal version of \Cref{thm:lqinformal} is \Cref{thm:lqmain} in \Cref{sec:Lq} which addresses the case when the advice $x'$ is infeasible.

\subsubsection{When is switching optimal?}
Designing simple solutions to natural problems of wide general interest is an ultimate goal of both theory (as they can be taught even in undergraduate courses!) and practice (as they can be implemented with a few lines of code, and would have strong provable guarantees!). However, while the study of learning-augmented algorithm is currently flourishing, many solutions and formulations are often somewhat ad-hoc, and hence there is a need for general frameworks, techniques, and paradigms. 

Inspired by the switching strategy noted in~\cite{grigorescu2022arxiv} and our own observations in this paper, we raise a basic question:
\begin{center}
    {\it Can one characterize the space of online problems augmented with advice, for which one may use classical online algorithms as a {\em black-box} to obtain {\em optimal} solutions?}
\end{center}
In particular, what are the necessary and sufficient conditions for such simple switching strategies to be close to being optimal? What features should a problem exhibit that would allow it to be solvable in a black-box fashion in the advice setting? Recently, \cite{daneshvaramoli2024competitive} showed that a similar switching-based approach can achieve optimal constants in online knapsack with succinct predictions, which also raises similar questions. We believe that a better understanding of this conceptual direction may lead to unifying frameworks in the study of algorithms with advice.

In this work, we make progress on sufficient conditions for the question above. In particular, we study online concave packing problems with advice and show that the switching framework is close to being optimal in this context. As a step towards necessary conditions, we show that switching strategies, at least ones found in~\cite{grigorescu2022arxiv} and our algorithmic framework for online concave packing problems, are not applicable to online convex covering problems, and instead requires more sophisticated methods to solve in general.

\subsubsection{Applications}

We explicitly study the application of our frameworks to well-motivated problems in both online concave packing and online convex covering settings in \Cref{sec:Appl}. Our algorithms match the state-of-the-art algorithms by setting $\lambda=1$ and can potentially outperform the best-known online algorithms when the advice is accurate.

We apply \Cref{thm:packinginformal} to a variety of packing problems, including knapsack, resource management benefit, throughput maximization, network utility maximization, and optimization with inventory constraints. 
We then present an application of Theorem \ref{thm:convexinformal} to online mixed covering and packing, and by extension a variety of sub-problems such as capacity-constrained facility location and capacitated multicast, and an application of Theorem \ref{thm:lqinformal} to online buy-at-bulk network design.

\subsection{Additional Prior Works} \label{sec:Prior}

\paragraph{Learning-augmented algorithms.} Learning-augmented algorithms have been extensively studied for many fundamental online problems. \cite{Rohatgi20} and~\cite{lykouris2021competitive} showed that accurate predictions can lead to competitive ratios better than classical impossibility results for online caching, and subsequent works studied problems such as set cover~\cite{bamas2020primal}, ski rental~\cite{purohit2018improving,bamas2020primal}, clustering~\cite{ErgunFSWZ22}, graph problems~\cite{banerjee2023graph,brand2024dynamic,henzinger2024complexity}, facility location~\cite{jiang2021online}, knapsack~\cite{IKMP2021,boyar2022online,daneshvaramoli2023online}, matching~\cite{DinitzILMV21,jin2022online}, and others~\cite{hsu2019learning,antoniadis2020secretary,BamasMRS20,bamas2020primal,lattanzi2020online,DiakonikolasKTV21,dutting2021secretaries,ChenEILNRSWWZ22,chen2022,cohen2023general,enikHo2023primal,brand2024dynamic}. While most works in learning-augmented algorithms treat the advice as a black-box device, there are some recent works that studies how to properly obtain such advices (e.g., \cite{DBLP:journals/corr/abs-2011-07177, khodak2022learning}). A comprehensive online archive of recent works in the field of learning-augmented algorithms can be found at~\cite{ALPS}.

Other lines of work exploring alternative forms of additional information given to the algorithm studies the stochastic setting, where the assumption is that the input instance is drawn from an underlying distribution known to the algorithm. Problems studied under this model include stochastic matching~\cite{Feldman18}, graph optimization~\cite{azar2022online}, and others~\cite{mahdian2012online,Mitzenmacher18}. In comparison, the advice model studied by the learning-augmented algorithm community at large considers only explicit advices about the future inputs of the instance. While these advices may take forms that admits a distributional interpretation, there is no assumption made on the input distribution itself.

\paragraph{The primal-dual method.} The primal-dual method, introduced in the seminal work of~\cite{GoemansW95}, is a powerful algorithmic technique used to solve a variety of problems in the field of approximation algorithms. The primal-dual method has been applied to individual problems such as online set cover~\cite{alon2009online}, network optimization~\cite{alon2006general}, ski rental~\cite{kmmo}, and generalized into a unifying framework for online LP-based problems by~\cite{buchbinder2009online}. We refer the readers to~\cite{buchbinder2009survey} for a survey on the topic of primal-dual methods for online algorithms.

In the field of learning-augmented algorithms, Bamas, Maggiori, and Svensson~\cite{bamas2020primal} initiated the study of primal-dual learning-augmented algorithms, and inspired follow-up work to extend the framework to more generalized models. \cite{anand2022online} considered PDLA algorithms on online covering LPs with multiple predictions; \cite{grigorescu2022learning} generalized the PDLA framework to general online covering problems with fractional constraints and advices, as well as online semidefinite programs~\cite{elad2016online}. 

Towards generalizing the framework of~\cite{buchbinder2009online} and~\cite{grigorescu2022learning} to non-linear objectives, a concurrent line of work by~\cite{thang2021online} and~\cite{kevi2023primaldual} studies online packing and covering problems using configuration programs and multilinear extensions of monotone objective functions. We remark that their model admits more generality in allowing non-convex or non-concave objective functions, but also incurs a potential loss in performance due to the generalized setting, and thus are incomparable to ours. Their advice is also integral, while the form of advice we consider allows for fractional predictions. 

Some prior works have studied problems that allow a convex covering (e.g., \cite{BamasMRS20,cohen2023general,golrezaei2023online,balkanski2024energy,lechowicz2024online}) and concave packing (e.g.,\cite{IKMP2021,dutting2021secretaries,jin2022online,enikHo2023primal}) program formulation in the context of learning-augmented algorithms. 
Our learning-augmented frameworks unifies and generalizes these settings and also applies to many other problems.

\section{Switching Algorithms for Online Concave Packing}\label{sec:Switch}

We present a simple, possibly folklore, learning-augmented online algorithm for concave packing, formulated as follows:
\begin{align*}
  \begin{aligned}
    \text{max } & g(y) 
    \text{ over } y \in \nnreals^m 
    \text{ subject to } A^T y \leq b.
  \end{aligned}
\end{align*}

Let $\cO$ be an $\alpha$-competitive $\beta$-feasible online algorithm for the packing problem \eqref{eq:concavepacking}, for some $\alpha, \beta \ge 1$.
The algorithm keeps track of two candidate solutions: the advice $y'$, and $y^\cO$ from the non-learning-augmented online algorithm $\cO$, and combines them. Whenever a row of $A$ arrives online, the algorithm obtains a solution from both the online algorithm and the advice and sets $y_i$ to a weighted interpolation between the two values, $\lambda \cdot y_i^{\cO} + (1-\lambda) \cdot y_i'$. Once the algorithm finds out that the advice violates any constraint by a factor of $\beta$, it discards the advice for this round and follows the online algorithm only. The algorithm is presented in \Cref{alg:simple}.

\begin{algorithm}[!htb]
\caption{A Simple Learning-Augmented Online Algorithm for the Packing Problem \eqref{eq:concavepacking}} \label{alg:simple}
\renewcommand{\algorithmicrequire}{\textbf{Input:}}
\renewcommand{\algorithmicensure}{\textbf{Output:}}

\algorithmicrequire{ Any online algorithm algorithm $\cO$, advice $y'$, confidence parameter $\lambda$.}

\algorithmicensure{ The online solution $y$.}
\begin{algorithmic}[1]
\For{$i = 1, 2, ...$} \Comment{each arriving row or constraint}
    \State Update $A$ by adding a new row $i$.
    \State Run $\cO$ for round $i$ and obtain $y_i^{\cO}$.
    \If{$A^T y' \le \beta b$} \label{line:alg-simple-case-1}\Comment{The advice is approximately feasible} 
        \State $y_i \gets \lambda y_i^{\cO} + (1-\lambda) y_i'$. \label{line:xj1}
    \Else
    \label{line:alg-simple-case-2}
        \State $y_i \gets y_i^{\cO}$. \label{line:xj2}
    \EndIf
\EndFor
\end{algorithmic}
\end{algorithm}

We formally characterize the performance of \Cref{alg:simple} in \Cref{thm:packingmain}, presented below:

\begin{theorem}\label{thm:packingmain}
For the learning-augmented online concave packing problem, there exists an online algorithm that takes an $\alpha$-competitive $\beta$-feasible online algorithm $\cO$, an advice $y' \in \nnreals^n$, and a confidence parameter $\lambda \in [0, 1]$, and generates a solution $\bar y$ satisfying $A^T \bar y \leq (2 - \lambda) \beta b$ such that $g(\bar y) \geq \frac{\lambda}{\alpha} \opt$.
Additionally, if $y'$ is $\beta$-feasible, i.e., $A^T y' \leq \beta b$, then $g(\bar y) \geq (1-\lambda) g(y')$.
\end{theorem}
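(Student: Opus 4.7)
The plan is to perform a case analysis on the first round (if any) at which the feasibility check on Line~\ref{line:alg-simple-case-1} fails, and then chain the guarantees for $y^{\cO}$ and $y'$ through concavity of $g$ and the normalization $g(\bfzero) = 0$. Let $t^\star$ denote the smallest round after which the check fails, with the convention that $t^\star = m+1$ if it never fails; since the entries of $A$ and $y'$ are nonnegative, once the check fails it keeps failing. Thus the output satisfies $\bar y_i = \lambda y_i^{\cO} + (1-\lambda) y'_i$ for $i < t^\star$, and $\bar y_i = y_i^{\cO}$ for $i \geq t^\star$.

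For feasibility, I would fix a column $j \in [n]$ and split $\sum_i a_{ij} \bar y_i$ into the two phases, upper bounding it by $\sum_i a_{ij} y_i^{\cO} + (1-\lambda) \sum_{i < t^\star} a_{ij} y'_i$. The first sum is at most $\beta b_j$ by $\cO$'s $\beta$-feasibility, and the second is at most $(1-\lambda)\beta b_j$ because the partial advice check was still satisfied after round $t^\star - 1$ (and after the final round, if $t^\star = m+1$). Summing gives $A^T \bar y \leq (2-\lambda)\beta b$.

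For robustness, I would observe that $\bar y \geq \lambda y^{\cO}$ componentwise in both phases (trivially in the $\cO$-only phase, and because $(1-\lambda) y'_i \geq 0$ in the other). Combining monotonicity of $g$ with concavity and $g(\bfzero) = 0$ yields $g(\bar y) \geq g(\lambda y^{\cO}) \geq \lambda g(y^{\cO}) \geq \frac{\lambda}{\alpha}\opt$, where the middle step is Jensen's inequality on the endpoints $\bfzero$ and $y^{\cO}$, and the last is the competitive guarantee of $\cO$. For consistency, if $y'$ is globally $\beta$-feasible then $t^\star = m+1$ and $\bar y = \lambda y^{\cO} + (1-\lambda) y'$ on every coordinate; concavity directly yields $g(\bar y) \geq \lambda g(y^{\cO}) + (1-\lambda) g(y') \geq (1-\lambda) g(y')$, using $g(y^{\cO}) \geq g(\bfzero) = 0$ by monotonicity.

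The argument is routine and I do not foresee a major obstacle. The only mildly subtle point is the feasibility bound, which controls the advice contribution through the \emph{latest still-valid partial check} rather than any full-matrix feasibility assumption; this is precisely what lets the feasibility loss degrade gracefully from $\beta$ to $(2-\lambda)\beta$ when the advice eventually triggers the switch to $\cO$-only mode, and why the consistency conclusion must be stated conditionally on $y'$ being $\beta$-feasible.
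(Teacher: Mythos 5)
Your proof is correct and tracks the paper's argument closely: the paper likewise derives robustness via $g(\bar y) \geq g(\lambda y^{\cO}) \geq \lambda g(y^{\cO})$ using monotonicity, concavity, and $g(\bfzero)=0$; feasibility via the componentwise bound $\bar y \leq y^{\cO} + (1-\lambda)\cdot(\text{used advice})$; and consistency by noting that a $\beta$-feasible advice is never trimmed. The only stylistic difference is that the paper phrases the ``used advice'' vector as $y'_{\mathrm{trim}}$ (zeroing out every discarded coordinate) rather than committing to a suffix $i \geq t^\star$; your additional claim that ``once the check fails it keeps failing'' is valid here because $\beta$ is a fixed parameter, but the remark following \Cref{thm:packingmain} points out that for some applications $\beta$ itself grows over time, in which case the discard set need not be a suffix and the $y'_{\mathrm{trim}}$ phrasing is the more robust one.
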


\begin{proof}
Denote by $y'_{trim}$ as the advice $y'$ with all entries discarded by \Cref{alg:simple} set to 0 instead. The final packing solution $\bar y$ is at least $\lambda y^\cO + (1 - \lambda) y'_{trim}$. By the monotonicity of $g$, we have also $g(\bar y) \geq g(\lambda y^\cO + (1 - \lambda) y'_{trim})$. It immediately holds that $g(\bar y) \geq g(\lambda y^\cO) \geq \lambda \cdot g(y^\cO) \geq \frac{\lambda}{\alpha} \opt$, since $g$ is concave and $g(\bfzero) = 0$.

If $y'$ is $\beta$-feasible, the algorithm will not trim any entry of $y'$, and it similarly holds that $g(\bar y) \geq (1-\lambda) g(y'_{trim}) = (1-\lambda) g(y')$.

For feasibility, observe that $A^T y'_{trim} \leq \beta b$ and $A^T y^\cO \leq \beta b$. Thus, $A^T \bar y \leq A^T (1 - \lambda) y'_{trim} + A^T y^\cO \leq (2 - \lambda) \beta b$, as desired.
\end{proof}

\begin{remark}
We note that depending on the problem of interest, $\beta$ might not be a fixed parameter. For online packing linear programs, $\beta$ is a conditional number non-decreasing over time, which depends on $A$, so it is not necessarily the case that the advice is only used in earlier rounds. See \Cref{sec:ApplPacking} for a more detailed discussion. 
\end{remark}

\section{Primal-Dual Framework for Online Convex Covering}\label{sec:Convex}

In this section we present a consistent and robust primal-dual learning-augmented (PDLA) algorithm for online covering problems with convex objectives, based on the primal-dual framework of~\cite{buchbinder2009online} and its extension to the learning-augmented setting in~\cite{grigorescu2022learning}. In the classical online setting, online convex covering problems are studied in~\cite{buchbinder2014online} and subsequently~\cite{azar2016online}, in which the authors presented an extension of~\cite{buchbinder2009online}'s primal-dual framework to convex non-linear objective functions; our PDLA framework take inspiration from their work as well.

Recall that our goal is to solve the following problem in an online fashion:
\begin{align*}
  \begin{aligned}
    \text{min } & f(x) 
    \text{ over } x \in \nnreals^n 
    \text{ subject to } A x \geq \bfone.
  \end{aligned} \label{eq:convexcovering}
\end{align*}
where $f : \nnreals^n \to \nnreals$ is a convex, monotone, differentiable function, and $A \in \nnreals^{m \times n}$ is the constraint matrix where each of the $m$ rows correspond to a covering constraint. We make additional assumptions about $f$, that $f(\bfzero) = 0$, $\nabla f$ is monotone, and there exists some $p := \sup_{x \geq \bfzero} \frac{\langle x, \nabla f(x) \rangle}{f(x)}$.

The primal-dual method is a class of algorithms that simultaneously consider the dual packing program. Specifically, we choose the Fenchel dual program formulated as:
\begin{equation}\label{eq:convexpacking}
    \text{max } \sum_{i=1}^m y_i - f^*(\mu) \text{ over } y \in \nnreals^m \text{ subject to } y^T A \leq \mu^T
\end{equation}
where $f^* : \nnreals^n \to \nnreals$ is the Fenchel dual of $f$, defined as $f^*(\mu) = \sup_z \{ \mu^T z - f(z) \}$. $f^*$ is always convex; for more properties of Fenchel dual functions, we refer the readers to~\cite{borwein2006convex}.

As with covering and packing linear programs, the convex covering program and its Fenchel dual also exhibits a form of duality:
\begin{restatable}[\cite{buchbinder2014online}]{fact}{Duality}\label{fact:duality}
Let $x$ and $(y, \mu)$ be any feasible solution to the primal covering program \eqref{eq:convexcovering} and the dual packing program \eqref{eq:convexpacking}, respectively. Then
\[f(x) \geq \sum_{i=1}^m y_i - f^*(\mu)\]
\end{restatable}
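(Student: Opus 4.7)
The plan is to establish this weak-duality statement as a composition of two standard ingredients: the Fenchel--Young inequality (which comes for free from the definition of $f^*$) and the usual LP weak-duality sandwich using primal and dual feasibility. There is nothing delicate here; the argument is essentially three lines once the right inequalities are lined up.

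First I would invoke the definition of the Fenchel conjugate: since $f^*(\mu) = \sup_z \{\mu^T z - f(z)\}$, for every $z \in \nnreals^n$ we have $f(z) + f^*(\mu) \geq \mu^T z$. Specializing to $z = x$ gives the key inequality
\[ f(x) + f^*(\mu) \;\geq\; \mu^T x. \]
This is the only step where convexity of $f$ (via Fenchel duality) actually enters.

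Next I would lower-bound $\mu^T x$ using the feasibility conditions. Dual feasibility says $A^T y \leq \mu$ coordinate-wise, and since $x \geq \bfzero$, this gives $\mu^T x \geq (A^T y)^T x = y^T (A x)$. Primal feasibility says $A x \geq \bfone$, and since $y \geq \bfzero$, this gives $y^T (A x) \geq y^T \bfone = \sum_{i=1}^m y_i$. Chaining these,
\[ \mu^T x \;\geq\; y^T A x \;\geq\; \sum_{i=1}^m y_i. \]
Finally, combining this chain with the Fenchel--Young bound above and rearranging yields the claimed inequality $f(x) \geq \sum_{i=1}^m y_i - f^*(\mu)$.

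There is no genuine obstacle to address: the only subtleties are bookkeeping, namely checking that the sign conventions of the dual constraint $y^T A \leq \mu^T$ and the nonnegativity of $x$ and $y$ align so that both inequalities in the LP sandwich point in the correct direction, and that the Fenchel--Young inequality is applied at the primal iterate $x$ (not at some auxiliary point). Since $f$ is defined on $\nnreals^n$, one might worry briefly about whether the $\sup$ in the definition of $f^*$ is taken over $\reals^n$ or $\nnreals^n$; either convention works here because we only need the one-sided inequality $f(x) + f^*(\mu) \geq \mu^T x$ at the particular feasible point $x \in \nnreals^n$, which follows regardless of the domain over which the supremum is taken.
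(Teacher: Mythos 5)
Your proof is correct and is essentially identical to the paper's: both chain the Fenchel--Young inequality $f(x) + f^*(\mu) \geq \mu^T x$ with the feasibility-based sandwich $\mu^T x \geq y^T A x \geq y^T \bfone = \sum_i y_i$, just written in opposite directions. Nothing to change.
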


\begin{proof}
Observe that by definition,
\[\sum_{i=1}^m y_i = y^T \bfone \leq y^T A x \leq \mu^T x = (\mu^T x - f(x)) + f(x) \leq f^*(\mu) + f(x).\]
Rearranging, we have $\sum_{i=1}^m y_i - f^*(\mu) \leq f(x)$, as desired.
\end{proof}

Our algorithm maintains a variable $\tau$ denoting the continuous flow of time, and upon the arrival of each constraint $t$, increment each covering variable $x_j$ for $j \in [n]$ with differing rates, dependent on the advice $x'$, until the constraint is satisfied. We increment the dual packing variable $y_t$ as well, but also potentially decrement some other packing variable to ensure that the packing solution is feasible.

To paint a picture of our strategy in more detail, let $A_i$ denote the $i$-th row of $A$. In round $t$, upon the arrival of constraint $t$ and row $A_t$, we check if the advice is feasible for this constraint. If the advice is feasible, i.e., $A_t x' \geq 1$, we increment each variable $x_j$ with rate
\[\frac{a_{tj}}{\nabla_j f(x^{(\tau)})} \left(x_j^{(\tau)} + \frac{\lambda}{a_{tj} d} + \frac{(1-\lambda) x'_j \bfone_{x_j^{(\tau)} < x'_j}}{A_t x'_c}  \right)\]
where $x_j^{(\tau)}$ and $x^{(\tau)}$ denote the value of $x_j$ and $x$ at time $\tau$, respectively, and $d$ is the row sparsity of $A$.
$x'_c$ is the advice restricted to entries in which the corresponding variables has not reached the advice yet. Equivalently, the $j$-th entry of $x'_c$ is equal to the $j$-th entry of $x'$ if $x_j^{(\tau)} < x'_j$, and $0$ otherwise. $\bfone_{x_j^{(\tau)} < x'_j}$ is the indicator variable with value $1$ if $x_j^{(\tau)} < x'_j$ and $0$ otherwise.

Intuitively, the two additive terms, $\frac{\lambda}{a_{tj} d}$ and $\frac{(1-\lambda) x'_j \bfone_{x_j^{(\tau)} < x'_j}}{A_t x'_c}$, corresponds to the contribution of the classical online algorithm and the advice to the growth rate, respectively. 
Our algorithm applies the additional growth rate attributed to the advice if the corresponding variable $x_j$ has not reached the value suggested by the advice $x'_j$ yet. If $x_j$ has reached the suggested value, our algorithm instead zeroes the additional growth rate, and relies only on the component attributed to the online primal-dual method.
We remark that in~\cite{grigorescu2022learning}, the contribution of the classical online primal-dual method corresponds to the term $\frac{\lambda}{A_i \bfone}$ instead. While this term is normalized and invariant across all $j \in [n]$, it implicitly introduced a dependence on the values of all primal variables to the growth rate of $x_j$, even for variables $x_{j'}$ with $j' \neq j$. Our modification removes this dependence, allowing us to tighten the analysis of our algorithm and obtain better robustness bounds over~\cite{grigorescu2022learning}.

If the advice is not feasible, i.e., $A_t x' < 1$, we disregard the advice temporarily, and increment $x_j$ with rate
\[\frac{a_{tj}}{\nabla_j f(x^{(\tau)})} \left(x_j^{(\tau)} + \frac{1}{a_{tj} d} \right)\]
In this case, our rate of growth coincides with that of both the classical, non-learning-augmented online primal-dual method, and the case when the advice is feasible above with $\lambda$ set to 1. Since the advice is infeasible and cannot provide valuable insight, we consider only the contribution from the online algorithm for constraint $t$.

The increment scheme for the dual variables is as follows. In round $t$, we initialize the packing variable $y_t$ corresponding to the arriving constraint $t$ to $0$ initially, and increment it with rate
\[r := \frac{\partial y_t^{(\tau)}}{\partial \tau} = \frac{\delta}{\log (1 + \frac{2}{\lambda} d^2)}\]
for some value of $\delta$ we choose later that depends only on $p, d$, and $\lambda$.

Combining the ideas described above suffices to yield a $O(\log \frac{\kappa d}{\lambda})$-robust learning-augmented algorithm, where $\kappa$ is the condition number of the constraint matrix $A$, defined as the ratio between the largest and the smallest non-zero entry in $A$. However, since the entries of $A$ arrives online, $\kappa$ is unknown to the algorithm up front and can be arbitrarily large. We follow the idea of \cite{buchbinder2014online,azar2016online} to additionally decrease some of the packing variables, to ensure that each dual packing constraint in \eqref{eq:convexpacking} is exactly satisfied. Specifically, for each dual packing constraint $j$ (corresponding to the covering variable $x_j$) that is tight, i.e., $\sum_{i=1}^t a_{ij} y_i = \mu_j$, we identify the variable $y_{m^*_j}$ with the largest coefficient $a_{(m^*_j)j}$, and decrement it with rate
\[ \frac{\partial y_{m^*_j}}{\partial \tau} = - \frac{a_{tj}}{a_{(m^*_j)j}} \cdot r \]
This ensures that the net growth rate of $\sum_{i=1}^t a_{ij} y_i$ is $0$, so that the constraint is not violated.

Note that the technique of decreasing the dual variables $y$ is unusual, in that it violates the monotonicity of the dual variables, which is a common feature of primal-dual methods, such as in \cite{buchbinder2007online, bamas2020primal, grigorescu2022learning}. The purpose of this property is for the primal-dual method to be a general-purpose method that solves both covering programs and packing programs simultaneously. However, in our context, we are exclusively concerned with the online convex covering problem, and not its dual packing problem. In fact, learning-augmented algorithms for covering and packing problems that are usually dual to each other often have differing specification and advice forms, and do not share simple and elegant algorithms that serves both problems.

For the dual variables $\mu$, we set $\mu$ to $\delta \nabla f(x)$, which is monotone due to the monotonicity of $x$ and the assumption that $\nabla f(x)$ is monotone as well.

We present our learning-augmented primal-dual algorithm for online convex covering problems during round $t$ in \Cref{alg:ConvexCover}. While the algorithm description increments $x$ and $y$, and sets $\mu$ simultaneously during execution, the update rule for the primal covering variables, $x$, in fact does not depend on $y$ or $\mu$. Since we are not concerned about maintaining the dual solution, $(y, \mu)$, and uses it for purely analytical purposes, we can instead increment $y$ and set $\mu = \delta \nabla f(\bar x)$ in hindsight after computing the final solution $\bar x$ to the primal covering problem~\eqref{eq:convexcovering}, simplifying the algorithm and its analysis.

\begin{algorithm}[!htb]
\caption{PDLA algorithm for online convex covering problems in round $t$} \label{alg:ConvexCover}

\renewcommand{\algorithmicrequire}{\textbf{Input:}}
\renewcommand{\algorithmicensure}{\textbf{Output:}}

\algorithmicrequire{ $x_1, \cdots, x_n$, $y_1, \cdots, y_{t-1}$, $\mu$: current solution, $A_{tj}$ for $1 \leq j \leq n$: current coefficients, $(x', \lambda)$: advice and confidence parameter.}

\algorithmicensure{ Updated $x$ and $(y, \mu)$.}

\begin{algorithmic}[1]

    \State $\mu \gets \delta \nabla f(\bar x)$. \Comment{$\delta$ is a parameter we determine and define later}
    \State $y_t \gets 0$.
    \State $\tau \gets \text{ current time}$.
    \If {$A_t x' \geq 1$} \Comment{if $x'$ is feasible}
        \State $D_j^{(t)} \gets \frac{\lambda}{a_{tj} d} + \frac{(1-\lambda) x'_j \bfone_{x_j < x'_j}}{A_i x'_c}$.
    \Else
        \State $D_j^{(t)} \gets \frac{1}{a_{tj} d}$.
    \EndIf
    \While {$A_t x < 1$} \Comment{while unsatisfied}
        \State Increase $\tau$ at rate $1$. 
        \For {each $j \in [n]$ s.t. $a_{tj} > 0$}
            \State Increase $x_j$ at rate
            \[\frac{\partial x_j}{\partial \tau} = \frac{a_{tj}}{\nabla_j f(x)} (x_j + D_j^{(t)})\]
        \EndFor
        \State Increase $y_t$ at rate
        \[r := \frac{\partial y_t}{\partial \tau} = \frac{\delta}{\log (1 + \frac{2}{\lambda} d^2)}\]
        \For {each $j \in [n]$ s.t. $\sum_{i=1}^t a_{ij} y_i = \mu_j$} \Comment{decrement some variables to maintain dual feasibility}
            \State $m^*_j = \arg \max_{i=1}^t \{a_{ij} | y_i > 0\}$.
            \State Decrease $y_{m^*_j}$ at rate $\frac{a_{tj}}{a_{(m^*_j)j}} \cdot r$.
        \EndFor
    \EndWhile
\end{algorithmic}
\end{algorithm}
While the algorithm increments the variables in a continuous fashion, it is possible to be implemented discretely, up to any desired precision, via binary searches.

We state the performance of \Cref{alg:ConvexCover} in \Cref{thm:convexmain}, presented below:

\begin{restatable}{theorem}{ConvexMain}\label{thm:convexmain}
For the learning-augmented online convex covering problem with the monotone gradient assumption, there exists an online algorithm that takes a problem instance and an advice $x'$, and generates a solution $\bar x$ such that 
\[f(\bar x) \leq \min \left\{ O \left( \frac{1}{1-\lambda} \right) f(x') + O((p \log d)^p) \opt, O\left( \left( p \log \frac{d}{\lambda} \right)^p \right) \opt \right\} \]
Additionally, if $x'$ is feasible, i.e., $A x' \geq \bfone$, then $f(\bar x) \leq O\left(\frac{1}{1-\lambda}\right) f(x')$. Here, $d = \max_{i \in [m]} |\{a_{ij} | a_{ij} > 0\}|$ is the row sparsity of the constraint matrix $A$, and $p := \sup_{x \geq \bfzero} \frac{\langle x, \nabla f(x) \rangle}{f(x)}$.
\end{restatable}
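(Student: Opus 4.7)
The plan is to follow the standard primal-dual analysis for online convex covering in the spirit of~\cite{azar2016online}, augmented with modifications to handle the learning-augmented growth term $D_j^{(t)}$. I will track the rate of change of both the primal objective $f(x)$ and the dual objective $\sum_i y_i - f^*(\mu)$ during each round $t$ as a function of continuous time $\tau$, and then apply weak duality (\Cref{fact:duality}) to bound $f(\bar x)$ in terms of $\opt$ and $f(x')$.

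First, by the chain rule and \Cref{alg:ConvexCover}'s update rule, the rate of change of the primal cost is
\[
\frac{df(x)}{d\tau} \;=\; \sum_j \nabla_j f(x) \cdot \frac{\partial x_j}{\partial \tau} \;=\; \sum_j a_{tj}\bigl(x_j + D_j^{(t)}\bigr) \;=\; A_t x + A_t D^{(t)}.
\]
By construction $A_t D^{(t)} = O(1)$: in the feasible-advice branch the sparsity term contributes $\sum_j a_{tj} \cdot \frac{\lambda}{a_{tj} d} \le \lambda$ and the advice term $\sum_j \frac{(1-\lambda) a_{tj} x_j' \bfone_{x_j < x_j'}}{A_t x_c'} \le 1-\lambda$, and in the infeasible branch the term is at most $1$. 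Hence the primal cost grows at bounded speed per round. I will then compute $\frac{d}{d\tau}\bigl(\sum_i y_i - f^*(\mu)\bigr)$. The rising $y_t$ contributes $r$ while the decrements on tight constraints subtract $\sum_{j:\text{tight}} (a_{tj}/a_{(m^*_j)j}) r$; the Fenchel-conjugate identity $\nabla f^*(\mu) = \nabla f^{-1}(\mu/\delta) \cdot (1/\delta)$ lets me write $\frac{d f^*(\mu)}{d\tau} = \langle x, \partial \mu/\partial \tau \rangle$, which by the chain rule and the definition of $p$ is at most $p \cdot df(x)/d\tau$ up to lower-order terms.

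For the robustness bound, the key invariant is dual feasibility $\sum_i a_{ij} y_i \le \mu_j$. The decrements ensure feasibility by construction; the nontrivial step is to show that when a constraint becomes tight at value $\mu_j = \delta \nabla_j f(\bar x)$, the corresponding $x_j$ has already grown large enough to contribute proportionally to the dual. Integrating the differential inequality $\partial x_j/\partial \tau \ge (a_{tj}/\nabla_j f(x))\cdot (x_j + 1/(a_{tj}d))$ along time shows that $x_j + \frac{1}{a_{tj} d}$ grows exponentially in $a_{tj} y_t / \nabla_j f(x)$, so that after $a_{tj} y_t = \mu_j$ worth of packing load accumulates, the value of $x_j$ is at least within an $O((p \log(d/\lambda))^p)$ factor of its offline counterpart. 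Choosing $\delta$ of order $1/(p\log(d/\lambda))^{p-1}$ balances the primal-to-dual ratio, and combining with \Cref{fact:duality} yields the $O((p\log(d/\lambda))^p)\,\opt$ robustness term.

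For the consistency bound, the advice-dependent term $\frac{(1-\lambda) x_j' \bfone_{x_j < x_j'}}{A_t x_c'}$ drives $x_j$ toward $x_j'$ whenever the advice is feasible for the current constraint: summing the rate expression across the coordinates active in $A_t$ shows that within finite time the iterate reaches a point dominating a $(1-\lambda)$-fraction of $x'$, after which the constraint $A_t x \ge 1$ is automatically satisfied because $A_t x' \ge 1$. Hence across all feasible-advice rounds the total growth in $f$ is at most $O(1/(1-\lambda))\,f(x')$. For infeasible-advice rounds the branch of the algorithm coincides with the classical online method, so their cost is absorbed in the robustness term. Combining the two regimes gives the mixed bound $f(\bar x) \le O(1/(1-\lambda))\,f(x') + O((p\log d)^p)\,\opt$; specializing to a feasible $x'$ collapses the second term and yields pure consistency. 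I expect the main obstacle to be the bookkeeping around the decrement steps on tight dual constraints, since these break monotonicity of $y$ and require a careful telescoping argument to show that $\sum_i y_i$ still grows in proportion to $r$ per round without letting the $D_j^{(t)}$ terms disrupt the exponential-growth invariant that underlies the robustness bound.
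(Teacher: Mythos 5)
Your high-level plan — track rates of change of the primal and dual, choose $\delta$ to balance, and invoke weak duality — is the right skeleton, and your computation that $\partial f(x)/\partial\tau = A_t x + A_t D^{(t)} \leq 2$ matches the paper's. But two of your specific arguments have genuine gaps.

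First, the treatment of $f^*(\mu)$. In \Cref{alg:ConvexCover}, $\mu$ is set exactly once, in hindsight, to $\delta \nabla f(\bar x)$; it has no time evolution, so there is no $\partial\mu/\partial\tau$ and no reason to compute $\frac{d}{d\tau} f^*(\mu)$. The identity you write, $\nabla f^*(\mu) = \nabla f^{-1}(\mu/\delta)\cdot(1/\delta)$, is not the Fenchel-conjugate gradient identity (which is $\nabla f^* = (\nabla f)^{-1}$ with no $\delta$), and the claim that this is ``at most $p\cdot df(x)/d\tau$ up to lower-order terms'' is not substantiated. What the paper actually does is bound $\sum_i \bar y_i \geq \frac{\delta}{4\log(1+\frac{2}{\lambda}d^2)} f(\bar x)$ by integrating the rate comparison, then separately invoke the bounded-growth facts (\Cref{fact:boundedgrowth}, item 3) to get $f^*(\delta\nabla f(\bar x)) \leq \delta^{p/(p-1)}(p-1) f(\bar x)$, and finally choose $\delta = 1/(4p\log(1+\frac{2}{\lambda}d^2))^{p-1}$ to make the coefficient of $f(\bar x)$ in the dual lower bound positive. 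Your sketch omits the essential use of \Cref{fact:boundedgrowth}.

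Second, and more seriously, your consistency argument is incorrect. You claim that ``within finite time the iterate reaches a point dominating a $(1-\lambda)$-fraction of $x'$, after which the constraint $A_t x \geq 1$ is automatically satisfied because $A_t x' \geq 1$.'' This does not follow: if $x \geq (1-\lambda)x'$ on the support of $A_t$ and $A_t x' \geq 1$, then $A_t x \geq (1-\lambda) A_t x' \geq 1-\lambda$, which is strictly less than $1$ for $\lambda > 0$, so the constraint is not satisfied. Moreover the algorithm can and does overshoot $x'$ in some coordinates (the set $S_u$ in the paper's notation), so it is not the case that $\bar x$ is dominated by any constant multiple of $x'$. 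The paper's actual argument (\Cref{lem:convexconsistency}) is a charging scheme: it partitions $\partial f / \partial\tau = r_c + r_u$ according to whether $x_j < x'_j$ or $x_j \geq x'_j$, shows $r_c \geq 1-\lambda$ (from the advice term) and $r_u \leq 1+\lambda$ (since the constraint is unsatisfied), hence $r_u \leq \frac{1+\lambda}{1-\lambda} r_c$, and then observes that integrating $r_c$ yields at most $f(x')$ by monotonicity of $f$ and the fact that those coordinates never exceed $x'$. Your sketch replaces this charging argument with a claim that is simply false, and the bound $O(1/(1-\lambda)) f(x')$ does not emerge from it.
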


We prove \Cref{thm:convexmain} by proving the two components, $f(\bar x) \leq O((p \log d/\lambda)^p) \opt$ (robustness) and $f(\bar x) \leq O(1/(1-\lambda)) f(x') + O((p \log d)^p)$ (consistency), separately.

\paragraph{Robustness.} First, we prove the robustness of \Cref{alg:ConvexCover}, that $f(\bar x) \leq O((p \log d/\lambda)^p) \opt$. Specifically, we show the following:
\begin{enumerate}
    \item $\bar x$ is feasible and monotone;
    \item $(\bar y, \mu)$ is feasible, and $\mu$ is monotone;
    \item The primal objective $P$ is at most $O((p \log d/\lambda)^p)$ times the dual objective $D$.
\end{enumerate}
Equipped with these subclaims and weak duality, we have
\[P = f(\bar x) \leq O((p \log d/\lambda)^p) D \leq O((p \log d/\lambda)^p) \opt \]
as desired.

We begin by showing the first two feasibility claims together.
\begin{lemma}\label{lem:convexfeasibility}
    For any $\delta > 0$, the following are maintained.
    \begin{itemize}
        \item The algorithm maintains a feasible monotone primal solution.
        \item The algorithm maintains a feasible dual solution with monotone $\mu_j$.
    \end{itemize}
\end{lemma}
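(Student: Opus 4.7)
My plan is to verify each of the four claims---primal feasibility, primal monotonicity, monotonicity of $\mu$, and dual feasibility---directly from the update rules in \Cref{alg:ConvexCover}. The first three will follow almost immediately from the construction and the assumptions on $f$, whereas the dual packing constraint $\sum_i a_{ij} y_i \leq \mu_j$ is where the explicit decrement of $y_{m^*_j}$ has to do real work; I expect this to be the main obstacle.

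First I would handle primal monotonicity: inside the inner \textbf{while} loop, the growth rate $\partial x_j / \partial \tau = (a_{tj}/\nabla_j f(x))(x_j + D_j^{(t)})$ is non-negative, since $a_{tj} \geq 0$, $\nabla_j f \geq 0$ (because $f$ is monotone), $x_j \geq 0$ as an invariant, and $D_j^{(t)} \geq 0$ by construction regardless of whether $A_t x' \geq 1$. Primal feasibility then follows from the termination condition: the while loop for round $t$ halts only when $A_t x \geq 1$, and each previously-arrived constraint $A_s x \geq 1$ for $s < t$ remains satisfied thereafter because $x$ is monotone in $\tau$.

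Monotonicity of $\mu$ is next: since $\mu = \delta \nabla f(\bar x)$ with $\delta > 0$, primal monotonicity together with the assumed monotonicity of $\nabla f$ immediately yields that each $\mu_j$ is non-decreasing in $\tau$. For the dual feasibility invariant $\sum_i a_{ij} y_i \leq \mu_j$, I would argue it is maintained by a case analysis on each coordinate $j$ during round $t$: (i) if $a_{tj} = 0$, then $y_t$'s growth does not affect the LHS while the RHS does not decrease; (ii) if $a_{tj} > 0$ and constraint $j$ is currently slack, then $y_t$ grows freely, strictly increasing the LHS until tightness is reached; (iii) once the constraint is tight, the decrement rule forces the LHS to have zero time-derivative, since the increment $a_{tj} \cdot r$ from $y_t$ is exactly cancelled by $a_{(m^*_j)j} \cdot (a_{tj}/a_{(m^*_j)j}) \cdot r$ from $y_{m^*_j}$, while $\mu_j$ is non-decreasing. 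Initially $y = \bfzero$ and $\mu = \bfzero$, so the invariant holds, and these three sub-cases cover every possibility.

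The main subtlety I foresee is ensuring that $m^*_j = \arg\max\{a_{ij} : y_i > 0\}$ is well-defined in case (iii). Whenever tightness $\sum_i a_{ij} y_i = \mu_j$ occurs with $\mu_j > 0$, the LHS is strictly positive, so at least one index $i$ with $y_i > 0$ and $a_{ij} > 0$ must exist and $m^*_j$ is defined. The degenerate subcase $\mu_j = 0$ (equivalently $\nabla_j f(\bar x) = 0$) needs extra care: if it arises with $a_{tj} > 0$, I would argue that as soon as $y_t$ turns positive it itself becomes the candidate for $m^*_j$, so its own growth is immediately cancelled, keeping the LHS at $0 = \mu_j$; feasibility is thus preserved even in this edge case. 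Once this is pinned down, the four properties combine to give the lemma for any $\delta > 0$, setting up the primal-versus-dual objective comparison in the subsequent robustness argument.
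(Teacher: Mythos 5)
Your proposal is essentially the same as the paper's proof, with a finer case analysis and an attempt to handle the well-definedness of $m^*_j$, which the paper glosses over for this lemma. Two small corrections are worth making. First, the initialization claim ``initially $y = \bfzero$ and $\mu = \bfzero$'' is not quite right: $\mu$ is set once to $\delta \nabla f(\bar x)$, which in general does not vanish (e.g. $f$ linear gives $\nabla f(\bfzero) \ne \bfzero$). The invariant still holds at time zero simply because $y = \bfzero$ makes the left-hand side zero. Second, your handling of the degenerate case $\mu_j = 0$ with $a_{tj} > 0$ is circular: you need $y_t > 0$ for $y_t$ to be a candidate for $m^*_j$, but you are trying to conclude that $y_t$ never becomes positive, so the self-cancellation mechanism cannot be what keeps $y_t$ at zero. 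The cleaner resolution---which the paper invokes for the analogous point in its $\ell_q$-norm analysis---is that $\mu_j = 0$ means $\nabla_j f(\bar x) = 0$, and since $\nabla f$ is monotone and $x \leq \bar x$ with $\nabla f \geq 0$, one gets $\nabla_j f(x) = 0$ for the entire execution; then $\partial x_j / \partial \tau = \infty$, so the constraint is satisfied by an instantaneous jump in $x_j$ with no advance of $\tau$, hence no growth of $y_t$, and $m^*_j$ is never needed for that $j$.
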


\begin{proof}
The feasibility of the primal solution follows by construction, since we only increment each coordinate of $x$ continuously, at least one of which contributing with non-zero rate, and only stop when the arriving constraint is satisfied.

By design we only set $\mu$ once throughout the entire execution. To show that the dual is feasible, observe that at round $t$ and time $\tau$, for any dual constraint $\sum_{i=1}^t a_{ij} y_i^{(\tau)} \leq \mu_j$, if the inequality is strict, the constraint is satisfied and we are done. If there is equality, line 11 through 13 of \Cref{alg:ConvexCover} will counteract by decrementing $y_{m^*_j}$, resulting in a net growth rate of
\[\frac{\partial}{\partial \tau} \sum_{i=1}^t a_{ij} y_i^{(\tau)} = a_{ij} \cdot r - a_{(m_j^*) j} \cdot  \frac{a_{tj}}{a_{(m^*_j)j}} \cdot r = 0\]
so that the constraint remains satisfied.
\end{proof}

Towards showing that the ratio between the primal and the dual objectives are bounded, our strategy on a high level is classical to proving the competitiveness of primal-dual algorithms, bounding the ratio between the rates of change of both the primal and the dual objectives during the algorithm's execution. However, the rate of change of the dual objective, specifically $\sum_i y_i$, is very nuanced to analyze, since we both increment $y_t$ on line 11 of \Cref{alg:ConvexCover} and decrement some other dual variable on line 14. Towards this, we use a sandwich strategy with the primal variables $x_j$ as a proxy, to bound the rate of decrease of $\sum_i y_i$ on line 14.

We present the following lemma as the first part of our sandwich strategy, a lower bound on the value of $x_j$ at time $\tau$:
\begin{lemma}\label{lem:convexxjbound}
    For a variable $x_j$, let $T_j = \{i | a_{ij} > 0\}$ and let $S_j$ be any subset of $T_j$. Then for any $t \in T_j$ and $\tau_t \leq \tau \leq \tau_{t+1}$,
    \begin{align*}
        x_j^{(\tau)} \geq\; & \frac{\lambda}{\max_{i \in S_j} \{a_{ij}\} \cdot d}  \cdot \left( \exp \left( \frac{\log (1 + \frac{2}{\lambda}d^2)}{\mu_j} \sum_{i \in S_j} a_{ij} y_i^{(\tau)} \right) - 1 \right)
    \end{align*}
    where $\tau_t$ denotes the value of $\tau$ at the arrival of the $t$-th primal constraint.
\end{lemma}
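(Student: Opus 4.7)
The plan is to reformulate the inequality as a one-sided invariant that is easy to maintain continuously, then verify it via a differential comparison argument. Let $M = \max_{i \in S_j} a_{ij}$ and $L = \log(1 + \frac{2}{\lambda} d^2)$, so that the rate $r$ from \Cref{alg:ConvexCover} equals $\delta/L$. Define
\[
u_j(\tau) \,=\, x_j^{(\tau)} + \frac{\lambda}{Md},
\qquad
v_j(\tau) \,=\, \frac{\lambda}{Md}\exp\!\left(\frac{L}{\mu_j}\sum_{i \in S_j} a_{ij} y_i^{(\tau)}\right).
\]
The statement of the lemma is equivalent to $u_j(\tau) \ge v_j(\tau)$. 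At $\tau = 0$ all primal variables and dual variables are zero, so $u_j(0) = v_j(0) = \lambda/(Md)$. It therefore suffices to show that whenever $u_j(\tau) = v_j(\tau)$, we have $\partial u_j / \partial \tau \ge \partial v_j / \partial \tau$; the claim then follows by the standard continuous comparison principle.

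The next step is to bound $\partial v_j / \partial \tau$ in terms of the dual update rules of \Cref{alg:ConvexCover}. In round $t$, only $y_t$ is incremented (at rate $r$), while some $y_{m^*_{j'}}$ may be decremented whenever a dual packing constraint $j'$ is tight. The key observation is a case analysis on whether $t \in S_j$: if $t \notin S_j$, then the positive contribution to $\sum_{i \in S_j} a_{ij} y_i$ from $y_t$ is excluded and only potential decreases remain, so $\partial (\sum_{i \in S_j} a_{ij} y_i) / \partial \tau \le 0$; if $t \in S_j$, the growth from $y_t$ contributes at most $a_{tj} r$, and all decrement terms are nonpositive. Consequently, using $r = \delta/L$, one gets $\partial v_j / \partial \tau \le v_j \cdot \frac{a_{tj}\delta}{\mu_j}$ in round $t$ (and $\le 0$ if $t \notin S_j$).

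The other half is to lower-bound $\partial u_j / \partial \tau$ using the primal update rule. The two main ingredients are: (i) regardless of whether the advice is feasible in round $t$, one has $D_j^{(t)} \ge \lambda/(a_{tj} d)$, and whenever $t \in S_j$ this is at least $\lambda/(Md)$, so $x_j + D_j^{(t)} \ge u_j$; and (ii) since $\mu = \delta \nabla f(\bar x)$, the monotonicity of $\nabla f$ together with $x \le \bar x$ yields $\nabla_j f(x) \le \mu_j/\delta$. Combining these with the definition of $\partial x_j/\partial \tau$ in \Cref{alg:ConvexCover} gives $\partial u_j / \partial \tau \ge (a_{tj}\delta/\mu_j)\, u_j$ whenever $t \in S_j \cap T_j$, and $\partial u_j / \partial \tau \ge 0$ otherwise. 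Matching this against the upper bound on $\partial v_j / \partial \tau$ and invoking $u_j = v_j$ at the comparison point establishes $\partial u_j/\partial \tau \ge \partial v_j/\partial \tau$, completing the invariant.

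The main obstacle I anticipate is the case analysis of the dual dynamics, since in round $t$ the sum $\sum_{i \in S_j} a_{ij} y_i$ is affected not only by the growth of $y_t$ but also by possibly many simultaneous decrements of variables $y_{m^*_{j'}}$ triggered by other tight dual constraints $j' \neq j$. The argument must carefully observe that all such decrements can only reduce the sum (never increase it), so they only help the inequality $\partial v_j/\partial \tau \le v_j \cdot a_{tj}\delta/\mu_j$. The remaining technical point is ensuring that when $t \in S_j$ we exploit the bound $a_{tj} \le M$ to pass from $D_j^{(t)} \ge \lambda/(a_{tj} d)$ to the uniform lower bound $\lambda/(Md)$, which is precisely what makes $M = \max_{i \in S_j} a_{ij}$ appear in the conclusion.
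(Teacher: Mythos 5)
Your argument is correct, and it takes a genuinely different route from the paper. The paper proves \Cref{lem:convexxjbound} by a discrete, round-by-round argument: it solves a differential inequality $\frac{\partial x_j}{\partial y_t} \ge \log(1 + \tfrac{2}{\lambda}d^2)\frac{a_{tj}(x_j + D_j^{(t)})}{\mu_j}$ separately within each round $i \in S_j$, obtains a multiplicative bound $\frac{x_j^{(\tau_{i+1})}+D_j^{(i)}}{x_j^{(\tau_i)}+D_j^{(i)}} \ge \exp(\cdots)$, then replaces each $D_j^{(i)}$ by the uniform lower bound $\lambda/(Md)$ via two auxiliary inequalities (\Cref{claim:telescope} and \Cref{claim:Dj}), and finally telescopes the product over all of $T_j$ using monotonicity of $x_j$. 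You instead encode the desired inequality as a single invariant $u_j(\tau) \ge v_j(\tau)$ that holds at $\tau=0$ and is preserved globally. The decomposition into a primal side $u_j$ and a dual-exponential side $v_j$ absorbs the uniform $\lambda/(Md)$ shift from the start, so you need only the counterpart of \Cref{claim:Dj} (via $a_{tj} \le M$ for $t \in S_j$) and avoid \Cref{claim:telescope} and the telescoping step entirely. Your case analysis on the dual dynamics — decrements of any $y_{m^*_{j'}}$ can only push $\sum_{i\in S_j} a_{ij} y_i$ down, so only $y_t$'s increment at rate $r$ matters — is the continuous analogue of the paper's implicit use of the fact that $y_i^{(\tau)} \le y_i^{(\tau_{i+1})}$ for $\tau > \tau_{i+1}$. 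Your route is arguably cleaner since it unifies the per-round and cross-round bookkeeping. One small caveat on the justification: the phrasing ``whenever $u_j=v_j$ we have $u_j'\ge v_j'$'' is not by itself a valid comparison principle with non-strict inequalities (e.g., $u(\tau)=1-\tau^2$, $v(\tau)=1$ has $u(0)=v(0)$ and $u'(0)=v'(0)$, yet $u<v$ immediately afterward). What rescues your argument is that your derived bounds are of the stronger Gronwall form $u_j' \ge c(\tau)\,u_j$ and $v_j' \le c(\tau)\,v_j$ with the same coefficient $c(\tau) = a_{tj}\delta/\mu_j$ (or $0$ when $t\notin S_j$); this yields $(u_j/v_j)' \ge 0$ directly, which together with $u_j(0)/v_j(0)=1$ gives the invariant. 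State it that way rather than invoking a ``standard comparison principle'' at the equality point.
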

\begin{proof}
Note that 
\begin{align}
\frac{\partial x_j}{\partial y_t} &= \frac{\partial x_j}{\partial \tau} \cdot \frac{\partial \tau}{\partial y_t} = \frac{a_{tj} (x_j + D_j^{(t)})}{\nabla_j f(x)} \cdot \frac{\log (1 + \frac{2}{\lambda}d^2)}{\delta} \nonumber\\
&\geq \log (1 + \frac{2}{\lambda}d^2) \frac{a_{tj} (x_j + D_j^{(t)})}{\delta \nabla_j f(\bar x)} \label{eq:dxdy-mono-grad}
\end{align}
where the last inequality is due to our assumption that the gradient is monotone, and that $x \leq \bar x$.
Solving this differential equation, we have
\[\frac{x_j^{(\tau)} + D_j^{(t)}}{x_j^{(\tau_t)} + D_j^{(t)}} \geq \exp \left( \frac{\log (1 + \frac{2}{\lambda}d^2)}{\delta \nabla_j f(\bar x)} \cdot a_{tj} y_t^{(\tau)} \right)\]

From here, we use the following two claims:
\begin{claim}\label{claim:telescope}
For all scalars $a \geq b > 0$ and $c_1 \geq c_2 \geq 0$,
\[\frac{a+c_1}{b+c_1} \leq \frac{a+c_2}{b+c_2}\]
\end{claim}
\begin{claim}\label{claim:Dj}
For any $j \in [n]$ and $i \in S_j$ the following holds:
\[D_j^{(i)} = \frac{\lambda}{a_{ij} d} + \frac{(1-\lambda)x'_j \bfone_{x_j < x_j'}}{A_i x'_c} \geq \frac{\lambda}{\max_{i \in S_j} a_{ij} d}\]
\end{claim}

Multiplying over all indices, where for notational convenience we set $\tau_{t+1} = \tau$,
\begin{align*}
\exp \left(\frac{\log (1 + \frac{2}{\lambda}d^2)}{\mu_j} \sum_{i\in S_j} a_{ij} y_i^{(\tau)} \right)
&= \exp \left( \sum_{i \in S_j} \frac{\log (1 + \frac{2}{\lambda}d^2)}{\delta \nabla_j f(\bar x)} \cdot a_{ij} y_i^{(\tau)} \right)\\
&\leq \prod_{i \in S_j} \frac{x_j^{(\tau_{i+1})} + D_j^{(i)}}{x_j^{(\tau_i)} + D_j^{(i)}} \\
&\leq \prod_{i \in S_j} \frac{x_j^{(\tau_{i+1})} + \frac{\lambda}{\max_{i \in S_j} a_{ij} d}}{x_j^{(\tau_i)} + \frac{\lambda}{\max_{i \in S_j} a_{ij} d}} \qquad \text{ by \Cref{claim:telescope} and \Cref{claim:Dj}}\\
&\leq \prod_{i \in T_j} \frac{x_j^{(\tau_{i+1})} + \frac{\lambda}{\max_{i \in S_j} a_{ij} d}}{x_j^{(\tau_i)} + \frac{\lambda}{\max_{i \in S_j} a_{ij} d}} \qquad \text{ since $x_j$ is monotone}\\
&= \frac{x_j^{(\tau)} + \frac{\lambda}{\max_{i \in S_j} a_{ij} d}}{\frac{\lambda}{\max_{i \in S_j} a_{ij} d}} \qquad \text{ by a telescoping argument over all indices}
\end{align*}
Reorganizing the terms yields the desired bound on $x_j^{(\tau)}$.
\end{proof}

Since the arriving constraint is not satisfied yet during the algorithm, trivially $1/a_{tj}$ is a upper bound on $x_j$. Equipped with this and the lower bound in \Cref{lem:convexxjbound}, we move on to bound the ratio between the primal objective $P$ and the dual objective $D$:
\begin{lemma}\label{lem:convexrobust}
Let $p := \sup_{x \geq \mathbf{0}} \frac{\langle x, \nabla f(x) \rangle}{f(x)}$. Then
\begin{align*}
    P &= f(\bar x) \leq (4 p \log (1 + \frac{2}{\lambda}d^2))^p \left(\sum_{i=1}^m y_i - f^*(\mu) \right) = O((p \log \frac{d}{\lambda})^p) D
\end{align*}
\end{lemma}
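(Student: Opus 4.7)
The plan is to execute the standard primal--dual competitive analysis: bound the instantaneous rates $\frac{dP}{d\tau}$ and $\frac{d}{d\tau}\sum_i y_i$ during the continuous execution of \Cref{alg:ConvexCover}, integrate to relate $P$ to $\sum_i y_i$, then upper bound the Fenchel term $f^*(\mu)$ in terms of $P$ and optimize the free parameter $\delta$ to obtain the stated ratio.

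\textbf{Bounding $dP/d\tau$.} Differentiating $P = f(x)$ and applying the update rule on line 11 of \Cref{alg:ConvexCover} gives $\frac{dP}{d\tau} = \sum_j \nabla_j f(x) \cdot \frac{a_{tj}(x_j + D_j^{(t)})}{\nabla_j f(x)} = A_t x + \sum_j a_{tj} D_j^{(t)}$. The algorithm only runs while $A_t x < 1$, so the first summand is $\le 1$. For the second summand, a direct case analysis over both branches defining $D_j^{(t)}$ (advice feasible versus infeasible), using the fact that row $t$ has at most $d$ non-zero entries and in the feasible case $\sum_j a_{tj} x'_j \bfone_{x_j < x'_j} = A_t x'_c$, yields $\sum_j a_{tj} D_j^{(t)} \le 1$. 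Hence $\frac{dP}{d\tau} \le 2$ throughout.

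\textbf{Bounding $d(\sum_i y_i)/d\tau$.} By construction the only positive rate is $dy_t/d\tau = r := \delta/L$ with $L := \log(1 + \tfrac{2}{\lambda} d^2)$, while each tight dual constraint $j$ induces a decrement of rate $\frac{a_{tj}}{a_{(m^*_j)j}} r$. Applying \Cref{lem:convexxjbound} with $S_j = \{i \le t : y_i > 0,\ a_{ij} > 0\}$ (so $\sum_{i \in S_j} a_{ij} y_i^{(\tau)} = \mu_j$ by tightness and $\max_{i \in S_j} a_{ij} = a_{(m^*_j)j}$) gives $x_j^{(\tau)} \ge \frac{\lambda}{a_{(m^*_j)j} d}(e^L - 1) = \frac{2d}{a_{(m^*_j)j}}$; combined with the trivial upper bound $x_j^{(\tau)} \le 1/a_{tj}$ from $A_t x < 1$, this forces $\frac{a_{tj}}{a_{(m^*_j)j}} \le \frac{1}{2d}$. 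Summing over the at most $d$ non-zero columns of row $t$ bounds the total decrement by $r/2$, so $\frac{d}{d\tau}\sum_i y_i \ge r/2$. Dividing the two rate bounds and integrating from $\tau = 0$ to termination yields $\sum_i y_i \ge \frac{r}{4} P = \frac{\delta}{4L} P$.

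\textbf{Closing the loop.} The final ingredient, which I expect to be the most delicate technical step, is a Fenchel-type upper bound of the form $f^*(\delta \nabla f(\bar x)) \le (p-1)\,\delta^{p/(p-1)} f(\bar x)$. This identity is exact when $f$ is $p$-homogeneous: the supremum defining $f^*$ is attained at $z^* = \delta^{1/(p-1)} \bar x$ (since $\nabla f$ is then $(p{-}1)$-homogeneous), and Euler's identity $\langle \bar x, \nabla f(\bar x)\rangle = p f(\bar x)$ closes the computation. For general convex $f$ satisfying only $\langle x, \nabla f(x)\rangle \le p f(x)$, the same bound follows from a careful convex-analytic argument integrating this pointwise inequality. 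Combining with the integrated rate estimate, $D = \sum_i y_i - f^*(\mu) \ge \bigl(\frac{\delta}{4L} - (p-1)\delta^{p/(p-1)}\bigr) P$. Optimizing the right-hand side over $\delta > 0$---the first-order condition $\delta^{1/(p-1)} = 1/(4pL)$ gives $\delta = (4pL)^{-(p-1)}$---makes the coefficient exactly $(4pL)^{-p}$, and rearranging yields $P \le (4pL)^p D = O((p \log(d/\lambda))^p) D$, as required.
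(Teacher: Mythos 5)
Your proof follows the paper's argument essentially step-for-step: the same bound $\frac{dP}{d\tau} \le 2$ on the primal rate, the same use of \Cref{lem:convexxjbound} with $S_j$ chosen from the tight dual constraints to force $a_{tj}/a_{(m^*_j)j} \le 1/(2d)$ and hence $\frac{d}{d\tau}\sum_i y_i \ge r/2$, the same integration to get $\sum_i y_i \ge \frac{\delta}{4L} P$, the same Fenchel bound $f^*(\delta \nabla f(\bar x)) \le (p-1)\delta^{p/(p-1)} f(\bar x)$ (which the paper imports as item (3) of \Cref{fact:boundedgrowth} from~\cite{azar2016online} rather than re-deriving via a homogeneity argument as you sketch), and the same optimization $\delta = (4pL)^{-(p-1)}$ yielding $(4pL)^{-p}$. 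The only cosmetic difference is that you fold the feasible/infeasible advice cases into a single $\sum_j a_{tj} D_j^{(t)} \le 1$ bound, whereas the paper treats the infeasible case by remarking it is the $\lambda = 1$ instance of the same calculation.
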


\begin{proof}
We assume that $x'$ is feasible, i.e., $A_t x' \geq 1$. A similar analysis can be obtained for the case of $A_t x' < 1$ by setting $\lambda = 1$.

Consider the update when primal constraint $t$ arrives and $\tau$ is the current time. Let $U(\tau)$ denote the set of tight dual constraints at time $\tau$, i.e., for every $j \in U(\tau)$ we have $a_{tj} > 0$ and $\sum_{i=1}^t a_{ij} y_i^{(\tau)} = \delta \nabla_j f(\bar x)$. $|U(\tau)| \leq d$, and define for every $j$ the set $S_j := \{i | a_{ij} > 0, y_i^{(\tau)} > 0\}$. 
Clearly $\sum_{i \in S_j} a_{ij} y_i^{(\tau)} = \sum_{i=1}^t a_{ij} y_i^{(\tau)} = \delta \nabla_j f(\bar x)$, and $\sum_j a_{tj} x_j^{(\tau)} < 1$, thus by \Cref{lem:convexxjbound}, we have
\begin{align*}
    \frac{1}{a_{tj}} > x_j^{(\tau)} \geq\;& \frac{\lambda}{\max_{i \in S_j} \{a_{ij}\} \cdot d} \cdot \left( \exp \left( \log (1 + \frac{2}{\lambda}d^2) \right) - 1 \right)
\end{align*}
Rearranging, we have
\[\frac{a_{tj}}{a_{m^*_j j}} = \frac{a_{tj}}{\max_{i \in S_j} a_{ij}} \leq \frac{1}{2d}\]
Thus, we can bound the rate of change of $\sum_{i=1}^t y_i$ at time $\tau$:
\begin{align*}
    \frac{\partial (\sum_{i=1}^t y_i)}{\partial \tau} &\geq r - \sum_{j \in U(\tau)} \frac{a_{tj}}{a_{m^*_j j}} \cdot r \geq r \left( 1 - \sum_{j \in U(\tau)} \frac{1}{2d} \right) \geq \frac{1}{2} r
\end{align*}
where the last inequality follows from $|U(\tau)| \leq d$.

On the other hand, when processing constraint $t$, the rate of increase of the primal is
\begin{align*}
\frac{\partial f(x^{(\tau)})}{\partial \tau} &= \sum_j \nabla_j f(x^{(\tau)}) \frac{\partial x_j^{(\tau)}}{\partial \tau} \qquad \text{ by the chain rule}\\
&= \sum_{j | a_{tj} > 0} \nabla_j f(x^{(\tau)}) \left( \frac{a_{tj} (x_j^{(\tau)} + D_j^{(t)})}{\nabla_j f(x^{(\tau)})} \right)\\
&= \sum_{j | a_{tj} > 0} \left(a_{tj} x_j^{(\tau)} + \frac{\lambda}{d} + \frac{(1-\lambda) a_{tj} x'_j \bfone_{x_j < x'_j}}{A_t x'_c}\right)\\ 
&\leq 1 + \lambda + (1 - \lambda) = 2 
\end{align*}
Here, the last inequality is due to the constraint being unsatisfied, and the definition of $x'_c$ as the constraint restricted to entries where $x_j < x'_j$ holds.

Thus, bounding the rate of change between the primal and the dual:
\[\frac{\partial (\sum_{i=1}^t y_i^{(\tau)})}{\partial f(x^{(\tau)})} \geq \frac{1}{4} r = \frac{\delta}{4 \log (1 + \frac{2}{\lambda} d^2)} \]

Let $\bar{x}$ and $\bar{y}$ be the final primal and dual solutions. We have
\begin{equation*} 
    \sum_{i=1}^m \bar y_i \geq \frac{\delta}{4 \log (1 + \frac{2}{\lambda}d^2)} \cdot f(\bar x).
\end{equation*}

We have thus obtained our bound on the part of the dual objective corresponding to the variables $y$. To obtain a simple form of the dual objective corresponding to $\mu$, we use the following facts on $f$, which bounds how quickly $f$ and $f^*$ can grow:
\begin{fact}[Bounded Growth, \cite{azar2016online}]\label{fact:boundedgrowth}
    Let $f$ be a convex, monotone, differentiable function with non-decreasing gradient and $f(0) = 0$. Assume additionally that there is some $p \geq 1$ s.t. $x \nabla f(x) \leq p f(x)$ for all $x \in \nnreals^n$. Then,
    \begin{enumerate}
        \item For all $\delta \geq 1$, for all $x \in \nnreals^n$, $f(\delta x) \leq \delta^p f(x)$;
        \item For all $x \in \nnreals^n$, $f^*(\nabla f(x)) = x^T \nabla f(x) - f(x) \leq (p-1) f(x)$;
        \item If $p > 1$ then for any $0 < \gamma \leq 1$, $x \in \nnreals^n$, $f^*(\gamma x) \leq \gamma^{\frac{p}{p-1}} f^*(x)$ and $f^*(\gamma \nabla f(x)) \leq \gamma^{\frac{p}{p-1}} (p-1) f(x)$;
        \item If $p = 1$ then for any $0 < \gamma \leq 1$, $x \in \nnreals^n$, $f^*(\gamma \nabla f(x)) \leq 0$.
    \end{enumerate}
\end{fact}
See \cite{azar2016online} for a proof of \Cref{fact:boundedgrowth}.

By definition of the dual objective $D$, we have
\begin{align*}
    D &= \sum_{i=1}^m \bar y_i - f^*(\mu)\\
    &\geq \frac{\delta}{4 \log (1 + \frac{2}{\lambda}d^2)} \cdot f(\bar x) - f^*(\delta \nabla f(\bar x))\\
    &\geq \left( \frac{\delta}{4\log (1 + \frac{2}{\lambda}d^2)} - \delta^{\frac{p}{p-1}} (p-1)\right) \cdot f(\bar x)
\end{align*}
where the last inequality is from (3) of \Cref{fact:boundedgrowth}. 
With the choice of $\delta = \frac{1}{(4 p \log (1 + \frac{2}{\lambda}d^2))^{p-1}}$, we have
\begin{align*}
\frac{\delta}{4\log (1 + \frac{2}{\lambda}d^2)} - \delta^{\frac{p}{p-1}} (p-1) &= \frac{1}{(4 \log (1 + \frac{2}{\lambda} d^2))^p \cdot p^{p-1}} - \frac{p-1}{(4 \log (1 + \frac{2}{\lambda} d^2))^p \cdot p^p} \\
&= \frac{1}{(4p \log (1 + \frac{2}{\lambda} d^2))^p}
\end{align*}

Thus,
\[P = f(\bar x) \leq (4 p \log (1 + \frac{2}{\lambda}d^2))^p D = O((p \log \frac{d}{\lambda})^p) D\]
as desired.
\end{proof}

\paragraph{Consistency.} Towards showing the consistency of \Cref{alg:ConvexCover}, we employ a similar proof strategy as in \cite{bamas2020primal} and~\cite{grigorescu2022learning}. We partition the growth rate of the primal variables $x_j$ based on whether the variable has exceeded the corresponding entry in the advice $x'_j$. With this partition, we argue that the part of the growth rate credited to the classical primal-dual increment procedure is at most a certain factor of the part of the growth rate credited to the advice. Specifically, we present and prove the following lemma:

\begin{lemma}\label{lem:convexconsistency}
\Cref{alg:ConvexCover} is $O(\frac{1}{1-\lambda})$-consistent, i.e., if $\bar{x}$ is the final solution of the algorithm, then $f(\bar{x}) \leq O(\frac{1}{1-\lambda}) f(x') + O((p \log d)^p) OPT$.
\end{lemma}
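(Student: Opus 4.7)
Plan: The proof adapts the decomposition strategy used in prior learning-augmented primal-dual analyses~\cite{bamas2020primal, grigorescu2022learning} to the convex covering setting. The approach is to split the algorithm's solution $\bar x$ into an advice-driven component and a classical-driven component based on the two summands of $D_j^{(t)}$, bound each separately, and combine via convexity.

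First, I would decompose $\bar x = \bar{x}^A + \bar{x}^C$, where $\bar{x}^A_j$ is the cumulative contribution to $x_j$ from integrating the advice term $\frac{a_{tj}}{\nabla_j f(x)} \cdot \frac{(1-\lambda) x'_j \bfone_{x_j < x'_j}}{A_t x'_c}$ (nonzero only during advice-feasible constraints) and $\bar{x}^C_j$ is the remainder. The advice rate vanishes whenever $x_j \geq x'_j$, and since $x^C_j \geq 0$ throughout execution implies $x^A_j \leq x_j$, we see $x^A_j$ cannot exceed $x'_j$ at any time. Hence $\bar{x}^A \leq x'$ coordinate-wise, and $f(\bar{x}^A) \leq f(x')$ by monotonicity of $f$.

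Next, the main technical step is to bound $f(\bar{x}^C) \leq O((p \log d)^p) \opt$ by a careful reanalysis of the primal-dual argument from \Cref{lem:convexrobust}, restricted to the classical contribution alone. Since the classical summand of $D_j^{(t)}$ mirrors the structure of the non-learning-augmented algorithm of \cite{azar2016online}, a corresponding analysis---using an appropriately chosen dual witness that tracks only the classical contribution and an adjusted dual increment rate---should yield an $O((p \log d)^p)$ factor, avoiding the $\log(1/\lambda)$ dependency that appears in the full robustness bound.

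Finally, I would combine the bounds using convexity of $f$ and the bounded-growth property (\Cref{fact:boundedgrowth}). Writing $\bar x = (1-\lambda) \cdot \frac{\bar{x}^A}{1-\lambda} + \lambda \cdot \frac{\bar{x}^C}{\lambda}$, Jensen's inequality yields $f(\bar x) \leq (1-\lambda) f\!\left( \bar{x}^A/(1-\lambda) \right) + \lambda f\!\left( \bar{x}^C/\lambda \right)$; applying $f(\gamma z) \leq \gamma^p f(z)$ for $\gamma \geq 1$ together with the bounds on $f(\bar{x}^A)$ and $f(\bar{x}^C)$ gives the target bound $f(\bar x) \leq O\!\left(\frac{1}{1-\lambda}\right) f(x') + O((p \log d)^p) \opt$, up to constants depending on $p$. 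The main obstacle is the bound on $f(\bar{x}^C)$: although the classical-driven component of the growth rate has a clean form, its evolution is coupled to the advice contribution through the shared $x_j$ appearing in the self-growth term $\frac{a_{tj} x_j}{\nabla_j f(x)}$ of the classical rate, and decoupling these contributions in the primal-dual framework without re-introducing a $\log(1/\lambda)$ factor is the heart of the argument.
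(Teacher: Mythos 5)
Your approach is genuinely different from the paper's, and unfortunately it has two gaps that prevent it from establishing the stated bound.

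\textbf{What the paper does.} The paper does not decompose the final vector $\bar x$; it decomposes the instantaneous \emph{rate of change} $\partial f(x^{(\tau)})/\partial\tau = r_c + r_u$, where $r_c$ collects the terms $\nabla_j f(x^{(\tau)})\,\partial x_j/\partial\tau$ over coordinates with $x_j < x'_j$, and $r_u$ the rest. It shows the pointwise bounds $r_c \geq 1-\lambda$ and $r_u \leq 1+\lambda$ (when $A_t x'\geq 1$), hence $r_u \leq \frac{1+\lambda}{1-\lambda}\,r_c$ at every moment, so the total rate is $O\bigl(\frac{1}{1-\lambda}\bigr) r_c$ pointwise, and this multiplicative bound integrates without compounding. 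The integral of $r_c$ is then attributed to $f(x')$ because $r_c$ only ever charges coordinates below the advice. The infeasible-advice rounds are handled as in the robustness proof with $\lambda=1$. No Jensen step and no $p$-th power scaling is needed.

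\textbf{Gap 1: the bound $f(\bar x^C) = O((p\log d)^p)\opt$ is the entire difficulty and is not provided.} You correctly observe that $\bar x^A \leq x'$ coordinate-wise, so $f(\bar x^A) \leq f(x')$, but that is the easy half. The robustness argument in \Cref{lem:convexrobust} bounds $f(\bar x)$ against the dual, using that $\bar x$ is \emph{feasible}; $\bar x^C$ need not satisfy $A\bar x^C \geq \bfone$, so weak duality cannot be invoked for it directly. Moreover, the rate feeding $x^C_j$ contains the self-growth term $\frac{a_{tj} x_j}{\nabla_j f(x)}$, which depends on the full $x_j$, not on $x^C_j$, so $\bar x^C$ is not the output of any run of a classical primal-dual process that a clean $\opt$-bound could be attached to. You flag this coupling as ``the heart of the argument'' but do not resolve it, and it is not a routine adaptation of \Cref{lem:convexrobust}.

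\textbf{Gap 2: the Jensen + bounded-growth combination gives the wrong dependence on $\lambda$.} Writing $\bar x = (1-\lambda)\frac{\bar x^A}{1-\lambda} + \lambda\frac{\bar x^C}{\lambda}$ and applying convexity then \Cref{fact:boundedgrowth}(1) yields
\[
f(\bar x) \;\leq\; (1-\lambda)\Bigl(\tfrac{1}{1-\lambda}\Bigr)^p f(\bar x^A) + \lambda\Bigl(\tfrac{1}{\lambda}\Bigr)^p f(\bar x^C) \;=\; (1-\lambda)^{1-p} f(\bar x^A) + \lambda^{1-p} f(\bar x^C).
\]
For $p > 2$ the coefficient $(1-\lambda)^{1-p}$ grows like $(1-\lambda)^{-(p-1)}$, which is strictly worse than the claimed $O\bigl(\frac{1}{1-\lambda}\bigr)$. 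Worse, the coefficient $\lambda^{1-p}$ multiplying the (hypothetical) $O((p\log d)^p)\opt$ bound diverges as $\lambda \to 0$, whereas the target bound has a $\lambda$-independent $\opt$ term. So even if Gap 1 were filled, this combination step does not recover the lemma. This is exactly the kind of compounding the paper's pointwise-rate argument is designed to avoid: bounding ratios of instantaneous rates sidesteps any use of $f(\gamma z)\leq\gamma^p f(z)$, so no extra powers of $\frac{1}{1-\lambda}$ or $\frac{1}{\lambda}$ appear.
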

\begin{proof}
Let $\tau$ denote the current time and $t$ denote the current round.
For any arriving constraint $A_t$, in the case of $A_t x' < 1$, we can bound the growth rate similarly to in \Cref{lem:convexrobust} with $\lambda$ set to $1$. Thus we can obtain $f(\bar x) \leq O((p \log d)^p) OPT$.

Assuming that $x'$ is feasible, i.e., $A_t x' \geq 1$, notice that
\[\frac{\partial f(x^{(\tau)})}{\partial \tau} = \sum_j \nabla_j f(x^{(\tau)}) \frac{\partial x_j^{(\tau)}}{\partial \tau}\]
We split the rate of growth of the primal objective as follows: let $S_c := \{j | x_j < x'_j\}$ denote the set of indices for which the primal variable has not reached the corresponding advice variable yet, and let $S_u := \{j | x_j \geq x'_j\}$ denote the set of indices for which the primal variable has surpassed the corresponding advice variable. Further define
\[r_c = \sum_{j \in S_c} \nabla_j f(x^{(\tau)}) \frac{\partial x_j^{(\tau)}}{\partial \tau} \]
and
\[r_u = \sum_{j \in S_u} \nabla_j f(x^{(\tau)}) \frac{\partial x_j^{(\tau)}}{\partial \tau} \]
with 
\[\frac{\partial f(x^{(\tau)})}{\partial \tau} = r_c + r_u\]
Note that the value of the advice, $f(x')$, can be lower bounded by the value of the solution of the primal-dual algorithm \emph{if it does not increase any variable past the advice}. This is due to the monotonicity of $f$. As a result, we can exclusively attribute the rate of increase in $r_c$ to the corresponding part in $f(x')$.

Notice that
\begin{align*}
    r_c &= \sum_{j \in S_c} \nabla_j f(x^{(\tau)}) \frac{\partial x_j^{(\tau)}}{\partial \tau}\\
    &= \sum_{j | x_j < x'_j} \left( a_{tj} x_j^{(\tau)} + \frac{\lambda}{d} + \frac{(1 - \lambda) a_{tj} x'_j \bfone_{x_j < x'_j}}{A_t x'_c} \right)\\
    &\geq 0 + 0 + (1 - \lambda)
\end{align*}
since $x'_c$ is defined as the advice restricted to indices in $S_c$, and
\begin{align*}
    r_u &= \sum_{j \in S_u} \nabla_j f(x^{(\tau)}) \frac{\partial x_j^{(\tau)}}{\partial \tau}\\
    &= \sum_{j | x_j \geq x'_j} \left( a_{tj} x_j^{(\tau)} + \frac{\lambda}{d} + \frac{(1 - \lambda) a_{tj} x'_j \bfone_{x_j < x'_j}}{A_t x'_c} \right)\\
    &\leq 1 + \lambda + 0
\end{align*}
since $d$ is the row sparsity, and that the arriving constraint $t$ has not been satisfied yet.
Thus we have 
\[r_u \leq \frac{1+ \lambda}{1-\lambda} r_c\]
and that
\[\frac{\partial f(x^{(\tau)})}{\partial \tau} \leq \left( 1 + \frac{1+\lambda}{1-\lambda} \right) r_c = O\left(\frac{1}{1-\lambda}\right) r_c\]
which, taking integrals over $\tau$, implies that $f(\bar x) \leq O\left(\frac{1}{1-\lambda}\right) f(x')$, as desired.
\end{proof}

Combining the robustness and consistency of \Cref{alg:ConvexCover}, we obtain a proof of \Cref{thm:convexmain}.

\subsection{A variant of \Cref{alg:ConvexCover} in~\cite{buchbinder2014online}}

We remark that our \Cref{alg:ConvexCover} follows the algorithmic structure and analysis of~\cite{azar2016online}, which is a merge of~\cite{buchbinder2014online} with other works. \cite{azar2016online} is a significant simplification of~\cite{buchbinder2014online}, suffering a potential penalty in the competitive ratio in exchange for much cleaner expressions and analyses. For completeness, we state our variant of \Cref{alg:ConvexCover} and \Cref{thm:convexmain} following~\cite{buchbinder2014online}, and briefly sketch the proof, specifically necessary modifications from the proof of \Cref{thm:convexmain}. In \Cref{sec:Appl}, we use this variant to obtain tighter bounds on various applications of our framework.

\begin{algorithm}[!htb]
\caption{The~\cite{buchbinder2014online} variant of \Cref{alg:ConvexCover} in round $t$} \label{alg:ConvexCoverComplex}

\renewcommand{\algorithmicrequire}{\textbf{Input:}}
\renewcommand{\algorithmicensure}{\textbf{Output:}}

\algorithmicrequire{ $x_1, \cdots, x_n$, $y_1, \cdots, y_{t-1}$, $\mu$: current solution, $A_{tj}$ for $1 \leq j \leq n$: current coefficients, $(x', \lambda)$: advice and confidence parameter.}

\algorithmicensure{ Updated $x$ and $(y, \mu)$.}

\begin{algorithmic}[1]

    \State $\mu \gets \nabla f(\delta \bar x)$.
    \State $y_t \gets 0$.
    \State $\tau \gets \text{ current time}$.
    \If {$A_t x' \geq 1$} \Comment{if $x'$ is feasible}
        \State $D_j^{(t)} \gets \frac{\lambda}{a_{tj} d} + \frac{(1-\lambda) x'_j \bfone_{x_j < x'_j}}{A_i x'_c}$.
    \Else
        \State $D_j^{(t)} \gets \frac{1}{a_{tj} d}$.
    \EndIf
    \While {$A_t x < 1$} \Comment{while unsatisfied}
        \State Increase $\tau$ at rate $1$. 
        \For {each $j \in [n]$ s.t. $a_{tj} > 0$}
            \State Increase $x_j$ at rate
            \[\frac{\partial x_j}{\partial \tau} = \frac{a_{tj}}{\nabla_j f(x)} (x_j + D_j^{(t)})\]
        \EndFor
        \State Increase $y_t$ at rate
        \[r := \frac{\partial y_t}{\partial \tau} = \frac{\min_{\ell=1}^n \left\{ \frac{\nabla_\ell f(\delta \bar x)}{\nabla_\ell f(\bar x)} \right\}}{\log (1 + \frac{2}{\lambda} d^2)}\]
        \For {each $j \in [n]$ s.t. $\sum_{i=1}^t a_{ij} y_i = \mu_j$} \Comment{decrement some variables to maintain dual feasibility}
            \State $m^*_j = \arg \max_{i=1}^t \{a_{ij} | y_i > 0\}$.
            \State Decrease $y_{m^*_j}$ at rate $\frac{a_{tj}}{a_{(m^*_j)j}} \cdot r$.
        \EndFor
    \EndWhile
\end{algorithmic}
\end{algorithm}

On a high level, \Cref{alg:ConvexCoverComplex} follows exactly the same idea as \Cref{alg:ConvexCover}, with only two minor differences: On line 1, $\mu$ is set to $\nabla f(\delta \bar x)$ instead of $\delta \nabla f(\bar x)$\footnote{The choice of $\delta$ here is different between the two algorithms: they are the solutions to two distinct minimization problems.}; On line 12, the numerator of $r$ is $\min_{\ell=1}^n \left\{ \frac{\nabla_\ell f(\delta \bar x)}{\nabla_\ell f(\bar x)} \right\}$ instead of $\delta$. These two changes do not modify the primal update at all, and preserve a vital component in the proof of \Cref{thm:convexmain} that bounds the relative rate of change between $x_j$ and $y_t$ with $\mu_j$, more specifically:
\begin{align*}
\frac{\partial x_j}{\partial y_i} &= \frac{\partial x_j}{\partial \tau} \cdot \frac{\partial \tau}{\partial y_i} = \frac{a_{ij} (x_j + D_j)}{\nabla_j f(x)} \cdot \frac{\log (1 + \frac{2}{\lambda}d^2)}{\min_{\ell=1}^n \left\{ \frac{\nabla_\ell f(\delta \bar x)}{\nabla_\ell f(\bar x)} \right\}}\\
&\geq \log (1 + \frac{2}{\lambda}d^2) \frac{a_{ij} (x_j + D_j)}{\nabla_j f(\delta \bar x)}\\
&= \log (1 + \frac{2}{\lambda}d^2) \frac{a_{ij} (x_j + D_j)}{\mu_j}
\end{align*}
where the inequality follows from the monotonicity of $\nabla f(x)$ and
\begin{align*}
    \min_{\ell=1}^n \left\{ \frac{\nabla_\ell f(\delta \bar x)}{\nabla_\ell f(\bar x)} \right\} \cdot \nabla_j f(x) &\leq \frac{\nabla_j f(\delta \bar x)}{\nabla_j f(\bar x)} \cdot \nabla_j f(x)\\
    &\leq \frac{\nabla_j f(\delta \bar x)}{\nabla_j f(x)} \cdot \nabla_j f(x)\\
    &= \nabla_j f(\delta \bar x)
\end{align*}

The choice of $\mu$ in \Cref{alg:ConvexCoverComplex} no longer allows the bounded growth assumptions of \Cref{fact:boundedgrowth} to simplify the $f^*(\mu)$ term in the dual objective. Thus, the performance of \Cref{alg:ConvexCoverComplex} is stated in the corollary below:

\begin{restatable}{corollary}{ConvexComplex}\label{cor:convexcomplex}
For the learning-augmented online convex covering problem with the monotone gradient assumption, there exists an online algorithm that takes a problem instance and an advice $x'$, and generates a solution $\bar x$ such that 
\[f(\bar x) \leq \min \left\{ O \left( \frac{1}{1-\lambda} \right) f(x') + R(1) \opt, R(\lambda) \opt \right\} \]
where
\begin{align} 
    \frac{1}{R(\lambda)} = \max_{\delta > 0} \min_z &\left( \frac{\min_\ell \left\{ \frac{\nabla_\ell f(\delta z)}{\nabla_\ell f(z)} \right\}}{4 \log (1 + \frac{2}{\lambda} d^2)} - \frac{ \delta z^T \nabla f(\delta z) - f(\delta z)}{f(z)} \right).\label{eq:r-ratio}
\end{align}

Additionally, if $x'$ is feasible, i.e., $A x' \geq \bfone$, then $f(\bar x) \leq O\left(\frac{1}{1-\lambda}\right) f(x')$. Here, $d = \max_{i \in [m]} |\{a_{ij} | a_{ij} > 0\}|$ is the row sparsity of the constraint matrix $A$.
\end{restatable}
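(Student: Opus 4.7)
The plan is to mirror the proof of \Cref{thm:convexmain} as closely as possible, exploiting the fact that the only differences between \Cref{alg:ConvexCover} and \Cref{alg:ConvexCoverComplex} are the choice of $\mu$ and the numerator of the dual growth rate $r$. Crucially, the primal updates are identical in the two algorithms, so \Cref{lem:convexconsistency} applies verbatim to \Cref{alg:ConvexCoverComplex}, immediately yielding the consistency bound $f(\bar x) \le O(1/(1-\lambda)) f(x')$ when $x'$ is feasible and $f(\bar x) \le O((p\log d)^p)\,\opt$ when some constraint $A_t x' < 1$ forces the algorithm to behave as if $\lambda=1$ (for which a bound of the form $R(1)\,\opt$ from the robustness half below subsumes the previously stated form). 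The remaining work is therefore to derive the robustness estimate $f(\bar x) \le R(\lambda)\,\opt$.

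For robustness, I would first observe that feasibility of the primal and of the dual (the analogue of \Cref{lem:convexfeasibility}) goes through unchanged: the primal updates are identical, and the decrement rule on line~17 still cancels exactly the rate at which $\sum_{i=1}^{t} a_{ij} y_i$ would otherwise grow at tight dual constraints. Next, the lower bound on $x_j^{(\tau)}$ in \Cref{lem:convexxjbound} is preserved, because the derivation only relies on the inequality $\partial x_j / \partial y_t \ge \log(1+\tfrac{2}{\lambda}d^2)\cdot a_{tj}(x_j+D_j^{(t)})/\mu_j$; the excerpt already verifies this holds for the new $r$ and $\mu = \nabla f(\delta \bar x)$ via monotonicity of $\nabla f$ together with the definition of the minimum in the numerator of $r$. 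Combining this with the same telescoping and \Cref{claim:telescope,claim:Dj} then gives $a_{tj}/a_{m^*_j j} \le 1/(2d)$ on any active constraint, so the aggregate dual growth satisfies $\partial(\sum_{i=1}^t y_i)/\partial\tau \ge r/2$, while the primal growth still obeys $\partial f(x^{(\tau)})/\partial\tau \le 2$ by the same term-by-term accounting as in \Cref{lem:convexrobust}. Integrating yields
\begin{equation*}
\sum_{i=1}^m \bar y_i \;\ge\; \frac{\min_{\ell}\{\nabla_\ell f(\delta \bar x)/\nabla_\ell f(\bar x)\}}{4\log(1+\tfrac{2}{\lambda}d^2)}\cdot f(\bar x).
\end{equation*}

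The last step is where the analysis diverges from that of \Cref{thm:convexmain}: because $\mu = \nabla f(\delta \bar x)$ rather than $\delta \nabla f(\bar x)$, the bounded-growth simplifications of \Cref{fact:boundedgrowth} no longer apply, and instead I would use the Fenchel identity $f^*(\nabla f(z)) = z^\top \nabla f(z) - f(z)$, valid for convex differentiable $f$, to write
\begin{equation*}
D \;=\; \sum_{i=1}^m \bar y_i - f^*(\nabla f(\delta \bar x)) \;\ge\; \left(\frac{\min_{\ell}\{\nabla_\ell f(\delta \bar x)/\nabla_\ell f(\bar x)\}}{4\log(1+\tfrac{2}{\lambda}d^2)} - \frac{\delta \bar x^\top \nabla f(\delta \bar x)-f(\delta \bar x)}{f(\bar x)}\right) f(\bar x).
\end{equation*}
Since the algorithm is free to choose $\delta$ while $\bar x$ is determined adversarially by the input, taking the worst case over $\bar x$ and the best choice of $\delta$ gives exactly the quantity $1/R(\lambda)$ in \eqref{eq:r-ratio}. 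Weak duality (\Cref{fact:duality}) then delivers $f(\bar x) \le R(\lambda)\,\opt$, completing the robustness half and hence the corollary. The main subtlety I expect is verifying that the inner $\min_z$ is attained on the range of possible $\bar x$ values so that the bound is meaningful; since the expression inside is scale-sensitive in $z$ only through ratios, this should amount to checking that the stated quantity is well-defined, which is what the corollary implicitly asserts by writing $\max_\delta\min_z$.
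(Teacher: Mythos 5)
Your proof mirrors the paper's argument in Appendix~\ref{app:ConvexComplex} essentially step for step: you correctly observe that the primal updates and hence \Cref{lem:convexconsistency} carry over unchanged, that the analogue of \Cref{lem:convexxjbound} is preserved via the monotone-gradient bound on $\partial x_j/\partial y_t$, that the dual growth rate argument yields $\sum_i \bar y_i \ge \frac{\min_\ell\{\nabla_\ell f(\delta\bar x)/\nabla_\ell f(\bar x)\}}{4\log(1+\frac{2}{\lambda}d^2)} f(\bar x)$, and that one then uses the Fenchel identity $f^*(\nabla f(z)) = z^\top\nabla f(z) - f(z)$ in place of the bounded-growth simplification before taking $\min_z$ and $\max_\delta$ to arrive at $1/R(\lambda)$. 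This is the same route the paper takes, so the proposal is correct.
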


While the statement of \Cref{cor:convexcomplex} may be intimidating at first glance, its proof is almost identical to that of \Cref{thm:convexmain}. For completeness' sake, we include a proof in \Cref{app:ConvexComplex}.

\section{Primal-Dual Framework for Online Covering with $\ell_q$-norm Objectives}\label{sec:Lq}

The online covering problem with convex objectives is a very general and expressive framework that encompasses a variety of problems. However, while our algorithmic framework in \Cref{sec:Convex} is general-purpose and can be applied to many such problems, it does require the technical assumption that the gradient of the objective function is monotone, and cannot utilize any inherent structural properties of the problem itself, possibly suffering a loss in generality. 
Is it possible to specialize our algorithms and analyses for online covering with specific convex objective functions? Can we use properties of these objective functions to obtain improved bounds, or remove the monotone gradient assumption?

In this section, we study a more structured variant of the general online convex covering problem: Online covering problems with $\ell_q$-norm objectives. 
We wish to solve the following problem online:
\setcounter{equation}{2}
\begin{align}
  \begin{aligned}
    \text{min } & \sum_{e=1}^r c_e || x (S_e) ||_{q_e}
    \text{ over } x \geq \bfzero 
    \text{ subject to } A x \geq 1.
  \end{aligned}
\end{align}
\setcounter{equation}{5}
where each $S_e \subseteq [n]$ is a subset of indices, $q_e \geq 1$, $c_e \geq 0$, $x(S)$ denote the vector $x$ restricted to indices in $S$, and $||x(S)||_q$ is the $\ell_q$ norm, i.e., $\left( \sum_{i \in S} x_i^q \right)^{1/q}$.

\cite{shen2020online} investigated online covering problems with $\ell_q$-norm objectives~(\ref{eq:lqcovering}), and presented an $O(\log \kappa d)$-competitive algorithm, where $\kappa$ is the condition number, defined as the ratio between the largest and the smallest non-zero entry in $A$, and $d$ is \emph{the maximum between the row sparsity of $A$ and the cardinality of the largest $S_e$}. Their algorithm is in fact identical to a variant of the algorithm of~\cite{buchbinder2014online}, which we base our \Cref{alg:ConvexCover} on, but they pair the algorithm with a new analysis utilizing the structural properties of~(\ref{eq:lqcovering}), removing the assumption in~\cite{buchbinder2014online} that the gradient $\nabla f(x)$ is monotone, which does not hold for $\ell_q$-norms in general.

The Fenchel-based dual that~\cite{shen2020online} considers is the following:
\begin{align}
    \begin{aligned}
        \text{maximize } & \sum_{i=1}^m y_i\\
        \text{over } & y \in \nnreals^m\\
        \text{subject to } & A^T y = \mu\\
        & \sum_{e=1}^r \mu_e = \mu\\
        & ||\mu_e (S_e)||_{p_e} \leq c_e \qquad \forall e \in [r]\\
        & \mu_e (\bar S_e) = \mathbf{0} \qquad \forall e \in [r]
    \end{aligned} \label{eq:lqpacking}
\end{align}
where $p_e$ satisfies that $\frac{1}{p_e} + \frac{1}{q_e} = 1$ for all $e \in [r]$, i.e., $||\cdot||_{p_e}$ is the dual norm of $||\cdot||_{q_e}$. We remark that \eqref{eq:lqpacking} is in fact equivalent to \eqref{eq:convexpacking}, with $f(x) = \sum_{e=1}^r c_e ||x(S_e)||_{q_e}$. As a result, the weak duality of \Cref{fact:duality} still holds for \eqref{eq:lqcovering} and \eqref{eq:lqpacking}. We refer the readers to~\cite{shen2020online} (Section 2) for a detailed derivation of this Fenchel dual.

We present a primal-dual learning-augmented algorithm for online covering problems with $\ell_q$-norm objectives~(\ref{eq:lqcovering}), stated as \Cref{alg:lqCover}, and analyze its performance in \Cref{thm:lqmain}. On a high level, \Cref{alg:lqCover} is identical to \Cref{alg:ConvexCover}, only without the step of decrementing the dual variables for tight packing constraints (Line 15 of \Cref{alg:ConvexCover}). As discussed in \Cref{sec:Convex}, this modification will violate the exact feasibility of the dual solution, incurring an additive $\log \kappa$ factor loss in the competitiveness of the algorithm. However, it preserves the monotonicity of the dual solution, and is vital for our analysis, which removes the reliance on the assumption on the monotonicity of the gradient $\nabla f(x)$.

\begin{algorithm}[!htb]
\caption{Primal-dual learning-augmented algorithm for online covering with $\ell_q$-norm objectives in round $t$} \label{alg:lqCover}

\renewcommand{\algorithmicrequire}{\textbf{Input:}}
\renewcommand{\algorithmicensure}{\textbf{Output:}}

\algorithmicrequire{ $x_1, \cdots, x_n$, $y_1, \cdots, y_{t-1}$, $\mu$: current solution, $A_{tj}$ for $1 \leq j \leq n$: current coefficients, $(x', \lambda)$: advice and confidence parameter.}

\algorithmicensure{ Updated $x$ and $y_t$.}

\begin{algorithmic}[1]
    \State $\tau \gets \text{ current time}$.
    \If {$A_t x' \geq 1$} \Comment{$x'$ is feasible}
        \State $D_j^{(t)} \gets \frac{\lambda}{a_{tj} d} + \frac{(1-\lambda) x'_j \bfone_{x_j < x'_j}}{A_t x'_c}$
    \Else
        \State $D_j^{(t)} \gets \frac{1}{a_{tj} d}$
    \EndIf
    \While {$A_t x < 1$} \label{line:cover}
        \For {each $j \in [n]$ s.t. $a_{tj} > 0$}
            \State Increase $\tau$ at rate $1$. 
            \State Increase $x_j$ at rate
            \[\frac{\partial x_j}{\partial \tau} = \frac{a_{tj}}{\nabla_j f(x)} (x_j + D_j^{(t)})\]
        \EndFor
        \State Increase $y_t$ at rate $1$.
        \State $\mu \gets A^T y$.
    \EndWhile
\end{algorithmic}
\end{algorithm}

\begin{theorem}\label{thm:lqmain}
For the learning-augmented online covering problem with $\ell_q$-norm objectives, there exists an online algorithm that takes a problem instance and an advice $x'$, and generates a solution $\bar x$ such that 
\begin{align*}
    f(\bar x) &\leq \min \left\{ O \left( \frac{1}{1-\lambda} \right) f(x') + O(\log \kappa d) \opt, O\left( \log \frac{\kappa d}{\lambda} \right) \opt \right\}
\end{align*}
Additionally, if $x'$ is feasible, i.e., $A x' \geq \bfone$, then $f(\bar x) \leq O\left(\frac{1}{1-\lambda}\right) f(x')$. Here, $\kappa := \amax / \amin$ is the condition number, $\amax = \max_{i \in [m], j \in [n]} \{a_{ij} | a_{ij} > 0 \}$, $\amin = \min_{i \in [m], j \in [n]} \{a_{ij} | a_{ij} > 0 \}$, and \newline$d = \max \{\max_{i \in [m]} |\{a_{ij} | a_{ij} > 0\}|, \max_{e \in [r]} |S_e| \}$.
\end{theorem}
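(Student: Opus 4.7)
The plan is to adapt the primal-dual analysis of Theorem 3.1 to the $\ell_q$-norm setting, leveraging the structural properties of $\ell_q$-norms in place of the monotone gradient assumption, in the spirit of the analysis in \cite{shen2020online}. I would prove robustness and consistency separately, then combine them to obtain the $\min\{\cdot,\cdot\}$ bound in the statement.

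For robustness, I would first establish that the primal $\bar x$ is feasible and monotone by construction of the while loop on line 7, and that $\bar y$ is monotone since Algorithm \ref{alg:lqCover} only increments $y_t$. Unlike Algorithm \ref{alg:ConvexCover}, the dual constraints $\sum_i a_{ij} y_i \le \mu_j$ of the packing program \eqref{eq:lqpacking} are not maintained exactly, but I would show (following the approach of \cite{shen2020online}) that they are violated by at most a factor of $O(\log(\kappa d))$; scaling $\bar y$ down by this factor therefore yields a feasible packing solution. Next, following the blueprint of Lemma \ref{lem:convexxjbound} and Lemma \ref{lem:convexrobust}, I would solve the differential equation governing the growth of $x_j$ to obtain an exponential lower bound on $x_j^{(\tau)}$ in terms of $\sum_i a_{ij} y_i^{(\tau)}$. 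Here, the step that used $\nabla_j f(x) \le \nabla_j f(\bar x)$ must be replaced by an argument exploiting the $1$-homogeneity of each $c_e\|x(S_e)\|_{q_e}$. This structure is particularly convenient because it corresponds to $p=1$ in Fact \ref{fact:boundedgrowth}, so the dual objective simplifies via $f^*(\nabla f(x)) \le 0$. Combined with the bound of at most $2$ on $\partial f(x^{(\tau)})/\partial \tau$ (computed exactly as in Lemma \ref{lem:convexrobust}), weak duality then yields $f(\bar x) \le O(\log(\kappa d/\lambda))\,\opt$.

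For consistency, I would adapt the argument of Lemma \ref{lem:convexconsistency} essentially verbatim: split $\partial f(x^{(\tau)})/\partial \tau = \sum_j \nabla_j f(x^{(\tau)}) \cdot \partial x_j^{(\tau)}/\partial \tau$ into $r_c$ (over indices with $x_j < x'_j$) and $r_u$ (over indices with $x_j \ge x'_j$). The crucial cancellation between $\nabla_j f(x)$ in the chain rule and $1/\nabla_j f(x)$ in the update rule does not depend on any property of the gradient, so the same bookkeeping gives $r_u \le \tfrac{1+\lambda}{1-\lambda} r_c$. Integrating in $\tau$ and handling rounds where $A_t x' < 1$ via the infeasible-branch update (which coincides with the $\lambda = 1$ robustness analysis) gives $f(\bar x) \le O\!\left(\tfrac{1}{1-\lambda}\right) f(x') + O(\log(\kappa d))\,\opt$, and the pure consistency bound $f(\bar x) \le O\!\left(\tfrac{1}{1-\lambda}\right) f(x')$ when $x'$ is fully feasible. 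Since both bounds hold for the same solution $\bar x$, the final $\min$-form bound follows.

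The main obstacle is deriving the exponential lower bound on $x_j^{(\tau)}$ in the robustness step without the monotone gradient assumption, since the gradient of an $\ell_q$-norm is not monotone for $q>1$ and the $\nabla_j f(x) \le \nabla_j f(\bar x)$ step that enabled \eqref{eq:dxdy-mono-grad} breaks. The workaround is to use $1$-homogeneity to argue directly about ratios of the form $\nabla_j f(x)/\nabla_j f(\bar x)$ group-by-group, losing exactly an $O(\log\kappa)$ factor relative to Theorem \ref{thm:convexmain} in the $p=1$ regime, which is the source of the $\log\kappa$ term in the final competitive ratio. The technical condition number bound and the $\max_e |S_e|$ contribution to $d$ enter precisely at this step, mirroring the classical analysis of \cite{shen2020online}.
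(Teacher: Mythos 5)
Your high-level decomposition is on the right track: you correctly identify that (i) $\bar x$ is feasible by construction, (ii) the dual is no longer exactly feasible (since \Cref{alg:lqCover} drops the decrementing step), so one must bound the violation and scale $\bar y$ down, (iii) the primal-to-dual rate-of-change bound ($P \leq 2D$) still goes through, and (iv) the consistency argument of \Cref{lem:convexconsistency} is untouched. That matches the paper. You also correctly note the $p=1$ / $1$-homogeneity observation. However, your treatment of the central technical step --- proving that $\|\mu(S_e)\|_{p_e} \leq c_e \cdot O(\log(\kappa d/\lambda))$ --- has a genuine gap, and it is precisely where the paper's argument departs most sharply from the convex case.

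You propose to ``solve the differential equation governing the growth of $x_j$ to obtain an exponential lower bound on $x_j^{(\tau)}$, \ldots replacing the $\nabla_j f(x) \leq \nabla_j f(\bar x)$ step by an argument exploiting $1$-homogeneity, \ldots arguing directly about ratios $\nabla_j f(x)/\nabla_j f(\bar x)$ group-by-group, losing an $O(\log\kappa)$ factor.'' This is not a complete argument and, as stated, does not go through. For $q_e>1$ one has $\nabla_j f(x) = c_e\, x_j^{q_e-1}\big(\sum_{k\in S_e} x_k^{q_e}\big)^{1/q_e - 1}$, and this quantity can \emph{decrease} as other coordinates $x_{j'}$ ($j'\neq j$, $j'\in S_e$) increase; $1$-homogeneity of $f$ tells you $\nabla f(\delta x)=\nabla f(x)$ but gives no control over $\nabla_j f(x)/\nabla_j f(\bar x)$ when $x\leq\bar x$ coordinate-wise. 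There is no a priori $O(\log\kappa)$ bound on this ratio, so the analogue of \eqref{eq:dxdy-mono-grad} cannot simply be patched. What the paper (following \cite{shen2020online}) actually does for $q_e>1$ is entirely different: it introduces the potential $\Phi = \sum_{j\in S_e} x_j^{q_e}$, partitions the execution into phases by geometric thresholds on $\Phi$ (with phase-$0$ threshold $\zeta=(\lambda/(\amax d^2))^{q_e}$, where the $\lambda$-dependence is essential to get an additive $O(\log(1/\lambda))$ rather than $O(1/\lambda)$), and integrates $\partial\mu_j/\partial x_j$ within each phase using $\Phi \geq \Phi_0$ so that the denominator is controlled \emph{relative to the phase start} rather than relative to $\bar x$. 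The $\log\kappa$ term arises as the \emph{number of phases}, not as a gradient-ratio loss. Without the potential and phase decomposition, your plan has no way to control the denominator, and the claim that the conversion to your favor costs ``exactly $O(\log\kappa)$'' is unsupported.

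Two smaller omissions: you do not address the disjointness reduction (the paper's \Cref{fact:disjoint}), which is needed because the per-phase potential argument crucially treats each $S_e$ separately, and you would need to note that the $q_e=1$ case is handled by a direct (non-phase) integral bound $\Delta\mu_j \leq c_e\log\big(1 + \kappa/(a_{tj}D_j^{(t)})\big) = c_e\cdot O(\log(\kappa d/\lambda))$, distinct from the $q_e>1$ case.
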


In our analysis, we assume that all of the index sets $S_e$ are disjoint. This allows us to separate the gradient $\nabla f(x)$ into independent terms for each $e \in [r]$. Recall that $f(x) = \sum_{e=1}^r c_e ||x(S_e)||_{q_e}$. We argue that this assumption is without loss of generality, and that proving \Cref{thm:lqmain} with the disjointness assumption suffices to show that \Cref{alg:lqCover} is consistent and robust even without the disjointness assumption, with the following fact from~\cite{shen2020online}:

\begin{fact}[\cite{shen2020online}]\label{fact:disjoint}
Suppose there is a polynomial-time $O(C)$-consistent $O(R)$-robust algorithm $\mathcal{A}$ for \Cref{eq:lqcovering} for disjoint sets $S_e$, then there is a polynomial-time $O(C)$-consistent $O(R)$-robust algorithm for \Cref{eq:lqcovering} for general instances without the disjointness assumption.
\end{fact}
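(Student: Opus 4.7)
The plan is to prove this fact via a polynomial-time reduction from general instances (with possibly overlapping $S_e$'s) to instances with disjoint sets, so that $\mathcal{A}$ can be invoked as a black box while preserving the $O(C)$-consistency and $O(R)$-robustness guarantees up to absorbed constant factors. The core idea is to duplicate each variable $x_j$ into copies, one per set $S_e$ containing it. Concretely, given a general instance $I$ with objective $\sum_e c_e \|x(S_e)\|_{q_e}$, constraint matrix $A$, and advice $x'$, I would introduce, for each pair $(j,e)$ with $j \in S_e$, a new variable $z_{j,e}$. Letting $T_j := \{e : j \in S_e\}$, the new sets $S_e' := \{(j,e) : j \in S_e\}$ are disjoint by construction. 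The new constraint matrix $A'$ is obtained by substituting each $x_j$ in the original constraint $\sum_j a_{ij} x_j \geq 1$ with an appropriate combination of its copies, and the new advice is defined naturally as $z'_{j,e} := x'_j$.

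The analysis would then proceed in two symmetric directions. One direction, $\opt(I') \leq \opt(I)$, would follow by constructing a feasible $z$ for $I'$ from an optimal $x^*$ of $I$ by concentrating all the mass of $x_j^*$ into a single designated copy $z_{j,e_0(j)} = x_j^*$ (with $e_0(j) \in T_j$ an arbitrary fixed choice) and zeroing out the others; this both preserves feasibility of the covering constraints and shrinks the objective, since each $x_j^*$ now contributes to exactly one norm term. The converse direction, $\opt(I) \leq O(1) \cdot \opt(I')$, requires aggregating copies back into the original variables (e.g., $x_j = \sum_e z_{j,e}$ or $x_j = \max_e z_{j,e}$) and applying standard $\ell_{q_e}$ norm inequalities such as the triangle inequality and the power-mean inequality to bound $\|x(S_e)\|_{q_e}$ in terms of the norms $\|z(S_{e'}')\|_{q_{e'}}$ of the disjoint instance.

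Given these value comparisons, running $\mathcal{A}$ on $I'$ with advice $z'$ produces a solution $\bar z$ with the desired consistency and robustness guarantees for $I'$, which can then be translated back to $\bar x$ for $I$, inheriting these guarantees up to absorbed constants. The main obstacle lies in carefully structuring the reduction so that neither the value comparison nor the constraint translation incurs losses depending on the structural parameters $d$ or $\kappa$ (which would otherwise spoil the asymptotic guarantees), even when the overlaps are large and the $q_e$'s differ across $e$. The detailed construction, which addresses these issues nontrivially, is given in~\cite{shen2020online}.
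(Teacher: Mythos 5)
The high-level step of duplicating each variable into one copy per norm term is the right starting point, and your advice translation $z'_{j,e}:=x'_j$ matches the paper's. But the heart of the reduction --- how the constraints are carried over and how a solution of the disjoint instance is pulled back --- is not what the paper does, and the variants you propose would not yield a loss bounded by an absolute constant.

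The paper's construction creates variables $x_j^{(1)},\dots,x_j^{(r)}$ and then, for each original constraint $\sum_j a_{ij}x_j\ge 1$, introduces \emph{one constraint per choice of which copy to use for each coordinate}: $\sum_{j=1}^n a_{ij}x_j^{(e_j)}\ge 1$ for every tuple $(e_1,\dots,e_n)\in[r]^n$. This exponential family of constraints is the key. It forces a feasible $z$ of $\cP'$ to satisfy the original constraint for the tuple $e_j^*:=\arg\min_e z_{j,e}$, so the pull-back $x_j:=\min_e z_{j,e}$ is feasible for $\cP$; and because $\min$ can only decrease each coordinate, the objective does not increase either. In the forward direction one sets \emph{all} copies equal to $x_j^*$ (not a single designated copy), which makes every one of the $r^n$ constraints hold and exactly preserves the objective. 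Together this gives $\opt(\cP')=\opt(\cP)$ with no constant-factor slack at all, and polynomiality is recovered via a separation oracle over the exponentially many constraints.

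Your sketch instead concentrates the mass of $x_j^*$ into one designated copy (which is infeasible under the paper's constraint set, and only makes sense if each original constraint is replaced by a \emph{single} aggregated constraint $\sum_j a_{ij}\sum_e z_{j,e}\ge 1$), and proposes to pull back via $x_j=\sum_e z_{j,e}$ or $x_j=\max_e z_{j,e}$. With the sum aggregation, bounding $\sum_e c_e\|x(S_e)\|_{q_e}$ by $\sum_{e'}c_{e'}\|z(S'_{e'})\|_{q_{e'}}$ via Minkowski crosses norms with different exponents $q_e\ne q_{e'}$ and different weights $c_e\ne c_{e'}$, and incurs a blow-up that scales with the number of sets overlapping a coordinate rather than being an absolute constant; with max aggregation, feasibility of $\sum_j a_{ij}\max_e z_{j,e}\ge 1$ cannot be deduced from a single aggregated constraint. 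You flag this difficulty yourself and defer it to~\cite{shen2020online}, but it is precisely the place where the argument changes shape: the exponential constraint set and the $\min$-pull-back are the mechanism that removes the dependence on overlap, and the separation oracle is what keeps the reduction polynomial. As written, your proposal has a genuine gap at the center of the reduction.
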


We outline a proof sketch for \Cref{fact:disjoint} here, and refer the readers to~\cite{shen2020online} for a more detailed proof.

\cite{shen2020online} proves \Cref{fact:disjoint} via a reduction between a general instance of \Cref{eq:lqcovering} and one with equivalent optimal value with disjoint sets. For any instance $\cP$ with general index sets $\{S_e\}_{e \in [r]}$, we can construct an instance $\cP'$ with disjoint sets as follows: for each variable $x_j$ in $\cP$, we create variables $x_j^{(1)}, \cdots, x_j^{(r)}$ in $\cP'$, where $x_j^{(e)}$ corresponds to the possible occurence of $x_j$ in $S_e$. Let $S'_e$ consists of the variables $\{x_j^{(e)}\}_{j\in[n]}$, so that $\{S'_e\}_{e\in[r]}$ is a collection of disjoint sets. For any constraint $\sum_j a_{ij} x_j$ in $\cP$, we create $r^n$ constraints in $\cP'$, each corresponding to a possible combination of the $x_j^{(e)}$ variables, i.e.,
\[\sum_{j=1}^n a_{ij} x_j^{(e_j)} \geq 1 \qquad \forall e_1, \cdots, e_n \in [r]\]
It is simple to show that $\cP$ and $\cP'$ have the same optimal value, and any solution to $\cP'$ can be converted to a solution to $\cP$ by setting $x_j$ to the minimum value over all copies $\min_e \{x_j^{(e)}\}$, maintaining the same objective value. While the reduction isn't a polynomial one due to the exponential amount of constraints in $\cP'$, one can use a separation oracle-based approach as in~\cite{alon2006general}.

While~\cite{shen2020online} only proved the robustness component of \Cref{fact:disjoint}, corresponding to the competitiveness of classical online algorithms without learning-augmentation, we can easily extend their analysis to the consistency as well. Any advice for an instance $\cP$ can be mapped to an advice with equivalent value for $\cP'$ by duplicating each entry $r$ times.

With this disjointness assumption, made without loss of generality, we proceed to prove the robustness and consistency component of \Cref{thm:lqmain} separately.

\paragraph{Robustness.} We first prove the robustness of \Cref{alg:lqCover}, that $f(\bar x) \leq O(\log \frac{\kappa d}{\lambda}) \opt$. While \Cref{alg:lqCover} is similar to \Cref{alg:ConvexCover} on a high level, there are some subtle differences in the detailed specifications. We show the following:
\begin{itemize}
    \item $\bar x$ is feasible;
    \item $(\bar y, \mu)$ is $O(\log \frac{\kappa d}{\lambda})$-approximately feasible;
    \item The primal objective $P$ is at most twice the dual objective $D$.
\end{itemize}
Equipped with these subclaims and weak duality, we have
\[P = f(\bar x) \leq O(1) D \leq O(\log \frac{\kappa d}{\lambda}) \opt \]
as desired.

The feasibility of the primal variables is straightforward, by definition of the algorithm, as with that of \Cref{alg:ConvexCover}. We use the following component lemma to bound the ratio between the primal and the dual objective:
\begin{lemma}
Let $\bar x$ and $\bar y$ be the primal and dual variables at the end of the execution, respectively. The primal objective $f(\bar x)$ is at most twice the dual objective $\sum_{i=1}^m \bar y_i$.
\end{lemma}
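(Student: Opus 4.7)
The plan is to establish the inequality $f(\bar x) \leq 2 \sum_{i=1}^m \bar y_i$ by a rate-of-change comparison during the execution of \Cref{alg:lqCover}. Both sides start at $0$ (since $f(\bfzero) = 0$ and the dual variables are initialized to $0$), so it suffices to show that at every time $\tau$ during the algorithm's execution, $\frac{\partial f(x^{(\tau)})}{\partial \tau} \leq 2 \cdot \frac{\partial \sum_i y_i^{(\tau)}}{\partial \tau}$, and then integrate.

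First, I would compute $\frac{\partial f(x^{(\tau)})}{\partial \tau}$ using the chain rule and the update rule $\frac{\partial x_j}{\partial \tau} = \frac{a_{tj}}{\nabla_j f(x)}(x_j + D_j^{(t)})$ from \Cref{alg:lqCover}. The $\nabla_j f(x)$ factor cancels with the denominator in the update, yielding
\[\frac{\partial f(x^{(\tau)})}{\partial \tau} = \sum_{j : a_{tj} > 0} a_{tj} \bigl( x_j^{(\tau)} + D_j^{(t)} \bigr) = A_t x^{(\tau)} + \sum_{j : a_{tj} > 0} a_{tj} D_j^{(t)}.\]
Since the loop on Line 7 is still iterating, the current constraint $A_t x^{(\tau)} < 1$ holds, so the first term is at most $1$.

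Next, I would case on the feasibility of the advice for this constraint. If $A_t x' \geq 1$, then $D_j^{(t)} = \frac{\lambda}{a_{tj} d} + \frac{(1-\lambda) x'_j \bfone_{x_j < x'_j}}{A_t x'_c}$; summing $a_{tj} D_j^{(t)}$ over the (at most $d$) nonzero indices gives $\lambda + (1-\lambda) \cdot \frac{A_t x'_c}{A_t x'_c} = 1$, using the fact that $x'_c$ is $x'$ restricted precisely to coordinates with $x_j < x'_j$. If $A_t x' < 1$, then $D_j^{(t)} = \frac{1}{a_{tj} d}$, and summing over the at most $d$ nonzero indices yields $1$ as well. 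In both cases, $\frac{\partial f(x^{(\tau)})}{\partial \tau} \leq 2$.

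On the dual side, only $y_t$ is changing during this round, and it grows at rate $1$, so $\frac{\partial \sum_i y_i^{(\tau)}}{\partial \tau} = 1$. Combining these two bounds, $\frac{\partial f(x^{(\tau)})}{\partial \tau} \leq 2 \cdot \frac{\partial \sum_i y_i^{(\tau)}}{\partial \tau}$, and integrating over the entire execution yields $f(\bar x) \leq 2 \sum_{i=1}^m \bar y_i$. I expect no real obstacle here: the main subtlety is verifying that the normalization $A_t x'_c$ in the advice term correctly telescopes to cancel in the feasible case, and that the row sparsity $d$ correctly upper bounds the number of terms contributing to $\sum_j a_{tj}/(a_{tj} d)$ in both cases. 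Unlike \Cref{alg:ConvexCover}, here the dual variables are never decremented, so no sandwich argument via \Cref{lem:convexxjbound} is needed for this particular lemma --- that complication will instead arise when bounding the approximate feasibility of $(\bar y, \mu)$ in the subsequent step.
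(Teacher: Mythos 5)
Your proposal is correct and takes essentially the same approach as the paper: cancel the $\nabla_j f$ factor via the chain rule, bound $A_t x^{(\tau)} \leq 1$ from the while-loop guard, bound $\sum_j a_{tj} D_j^{(t)} \leq 1$ using row sparsity and the normalization by $A_t x'_c$, and compare against the dual rate of $1$. The only minor nits are that the sum $\sum_{j:a_{tj}>0} \lambda/d$ is \emph{at most} $\lambda$ rather than exactly $\lambda$ (you have $\leq d$ nonzero indices), and the paper writes out only the feasible-advice case explicitly while you cover both; neither affects the argument.
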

\begin{proof}
Consider during round $t$ when the primal constraint $A_t x \geq 1$ arrives. Then,
\begin{align*}
    \frac{\partial f(x^{(\tau)})}{\partial \tau} &= \sum_{j: a_{tj} > 0} \nabla_j f(x^{(\tau)}) \cdot \frac{\partial x_j}{\partial \tau}\\
    &= \sum_{j : a_{tj} > 0} \left( a_{tj} x_j + \frac{\lambda}{d} + \frac{(1-\lambda) a_{tj} x'_j \bfone_{x_j < x'_j}}{A_t x'_c} \right) \\&\leq 1 + \lambda + (1 - \lambda) = 2
\end{align*}
Since the dual objective increases at rate $1$, we conclude as desired.
\end{proof}

The approximate feasibility of the dual variables is the most sophisticated component of the analysis of \Cref{alg:lqCover}. We follow the proof strategy of \cite{shen2020online}. On a high level, in lieu of the assumption that the gradient is monotone, our analysis uses the structural properties of the problem, which the general online convex covering problem lacks, to define a potential function, which we then use to partition the execution of the algorithm into phases, and bound the growth of the dual objective in each phase separately.

\begin{lemma}
The dual solution $\bar y, \bar \mu$ is $O(\log \frac{\kappa d}{\lambda})$ approximately-feasible, i.e.,
\[\mu (\cap_e \bar S_e) = \mathbf{0}\]
\[||\mu (S_e)||_{p_e} \leq c_e \cdot O(\log \frac{\kappa d}{\lambda}) \qquad \forall e \in [r]\]
\end{lemma}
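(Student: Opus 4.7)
The plan is to establish the two conditions separately. The first, $\mu(\cap_e \bar S_e) = \mathbf{0}$, follows immediately from the structural observation that any coordinate $j$ not appearing in any $S_e$ also does not appear in the objective $f$; without loss of generality we may assume such coordinates do not appear in any covering constraint either (i.e., $a_{ij} = 0$ for all $i$), since otherwise one could send $x_j \to \infty$ for free and trivially satisfy every covering constraint. Hence $\mu_j = \sum_i a_{ij} y_i = 0$ for all $j \notin \cup_e S_e$, giving the first condition.

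The bulk of the work is the second condition. The plan is to port the analysis of \Cref{lem:convexxjbound} to the $\ell_q$-norm setting, exploiting the disjointness assumption afforded by \Cref{fact:disjoint} to write $\nabla_j f(x) = c_e x_j^{q_e - 1} / \|x(S_e)\|_{q_e}^{q_e - 1}$ for each $j \in S_e$. Substituting this into the primal update rule of \Cref{alg:lqCover} and summing $x_j^{q_e - 1} \partial x_j / \partial \tau$ over $j \in S_e$ gives the clean aggregate identity
\[ \frac{\partial \|x(S_e)\|_{q_e}}{\partial \tau} \;=\; \frac{1}{c_e} \sum_{j \in S_e} a_{tj}\bigl(x_j + D_j^{(t)}\bigr), \]
which reduces a per-coordinate ODE to a per-set ODE. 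Combined with the fact that $y_t$ increases at unit rate, this yields a lower bound on $\|x(S_e)\|_{q_e}$ that is exponential in the accumulated $y$-mass charged to $S_e$, in analogy with \Cref{lem:convexxjbound} but at the block level rather than at individual coordinates.

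The main obstacle, and where the structural properties of $\ell_q$-norms become essential in place of the monotone-gradient assumption, is converting this block-level lower bound on $\|x(S_e)\|_{q_e}$ into an upper bound on the dual-norm quantity $\|\mu(S_e)\|_{p_e}$. To handle this, the plan is to follow the phase-based strategy of \cite{shen2020online}: partition the execution into $O(\log \kappa)$ phases according to dyadic ranges of the incoming coefficients $a_{tj}$, and argue that within each phase the contribution to $\|\mu(S_e)\|_{p_e}$ is $O(c_e)$, invoking $\ell_p/\ell_q$ duality to convert the feasibility-loop invariant $\sum_j a_{tj} x_j < 1$ into a bound on the dual norm restricted to $S_e$. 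Summing over the $O(\log \kappa)$ phases, together with an additional $O(\log(d/\lambda))$ factor arising from the $D_j^{(t)}$ additive offsets (mirroring the role played by $\log(1 + \tfrac{2}{\lambda}d^2)$ in \Cref{lem:convexrobust}), then yields the claimed $O(\log \tfrac{\kappa d}{\lambda})$ slack. The subtlest step is ensuring that the cross-coordinate gradient coupling within $S_e$ does not compound across phases; the homogeneity of the $\ell_q$-norm is precisely what allows this coupling to cancel when the bookkeeping is carried out at the block level rather than coordinate by coordinate.
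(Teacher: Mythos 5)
Your first-condition argument (the WLOG reduction making $a_{ij}=0$ whenever $j\notin\cup_e S_e$) is a defensible alternative to the paper's: the paper instead observes that $\nabla_j f(x)=0$ forces an infinite primal growth rate in \Cref{alg:lqCover}, so the arriving constraint is satisfied instantaneously without any increment to $y_t$, hence $\mu_j$ never grows. Both routes are fine.

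The second part, however, has a genuine gap, and your sketch deviates from the paper in ways that do not obviously converge to a valid proof. The block-level identity
$\partial \|x(S_e)\|_{q_e}/\partial\tau = \tfrac1{c_e}\sum_{j\in S_e} a_{tj}(x_j+D_j^{(t)})$
is algebraically correct, but it is not the engine the paper uses and it does not directly control the quantity you need. The constraint you must verify is $\|\mu(S_e)\|_{p_e}\le c_e\cdot O(\log\tfrac{\kappa d}{\lambda})$, and $\mu_j=\sum_i a_{ij}y_i$ is a per-coordinate dual accumulation; bounding a block-level scalar $\|x(S_e)\|_{q_e}$ does not on its own yield a bound on the $p_e$-norm of the vector $(\Delta\mu_j)_{j\in S_e}$. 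The paper keeps the ODE at the level of individual coordinates, writing
$\partial\mu_j/\partial x_j = c_e a_{tj}x_j^{q_e-1}\big/\big(\Phi^{1-1/q_e}\,a_{tj}(x_j+D_j^{(t)})\big)$
where $\Phi=\sum_{k\in S_e}x_k^{q_e}$, and only afterward aggregates via a $p_e$-th-power telescoping identity.

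Your phase structure is also mischaracterized. The phases in \cite{shen2020online} (and in this paper) are defined by dyadic ranges of the \emph{potential} $\Phi=\sum_{j\in S_e}x_j^{q_e}$, starting from a threshold $\zeta=(\lambda/(\amax d^2))^{q_e}$ and ending at $d(1/\amin)^{q_e}$; this is what simultaneously packs the $\log\kappa$, the $\log d$, and the $\log(1/\lambda)$ factors into a single $O(q_e\log\tfrac{\kappa d}{\lambda})$ phase count. Your proposal of phases by dyadic ranges of the incoming coefficients $a_{tj}$, with the $\log(d/\lambda)$ entering as a separate ``additional factor,'' does not reproduce this: it is unclear in your decomposition why the result would be $O(\log\tfrac{\kappa d}{\lambda})$ rather than $O(\log\kappa\cdot\log\tfrac d\lambda)$. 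More importantly, the per-phase bound in the paper is not obtained by invoking $\ell_p/\ell_q$ duality against the loop invariant $\sum_j a_{tj}x_j<1$; it comes from integrating $\partial\mu_j/\partial x_j$ with the lower bound $a_{tj}(x_j+D_j^{(t)})\ge a_{tj}x_j$ (so the $D_j$ offset is dropped inside a phase, not used there) and then exploiting the conjugacy identities $p_e(1-1/q_e)=1$ and $(q_e-1)p_e=q_e$ together with the superadditivity $(a+b)^{p_e}\ge a^{p_e}+b^{p_e}$ to make $\sum_{j\in S_e}(\Delta\mu_j)^{p_e}$ telescope against $\Phi_1-\Phi_0$. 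You would also need the separate phase-0 argument (using $a_{tj}(x_j+D_j^{(t)})\ge \lambda/d$) and the separate treatment of $q_e=1$, neither of which appears in your outline.

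In short: the aggregate ODE is a valid observation but a red herring, the phases should track the potential $\Phi$ rather than the coefficients $a_{tj}$, and the core step (per-coordinate $\Delta\mu_j$ bounds that telescope in $p_e$-th power against $\Phi$) is missing from your plan.
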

\begin{proof}
The first feasibility condition $\mu (\cap_e \bar S_e) = \mathbf{0}$ follows straightforwardly from the algorithm. Fix any $j \in \cap_e \bar S_e$, and consider during round $t$ when the primal constraint $A_t x \geq 1$ arrives. If $a_{tj} = 0$, then incrementing $y_t$ will not cause $\mu_j$ to increase; If $a_{tj} > 0$, then $\nabla_j f(x) = 0$ will imply that $\frac{\partial x_j}{\partial \tau} = \infty$, so the algorithm will set $x_j$ to the desired value satisfying the constraint without incrementing $y_t$. In either case, $\mu_j$ is not incremented at all, so that at the end of the algorithm, $\mu (\cap_e \bar S_e) = \mathbf{0}$ as desired.

For the second feasibility condition, fix any $e \in [r]$, so that we consider only $f(x) = c_e ||x(S_e)||_{q_e}$. We want to show that $||\mu (S_e)||_{p_e} \leq c_e \cdot O(\log \frac{\kappa d}{\lambda})$. We consider two cases, $q_e = 1$, and the more complicated $q_e > 1$ case.

\paragraph{The case of $q_e = 1$.} The part of the objective function corresponding to our chosen value of $e$ reduces to the linear case, in the form of $f(x) = c_e \sum_{j \in S_e} x_j$, and that $\nabla_j f(x) = c_e$. The corresponding $p_e = \infty$, thus it suffices to show that $||\mu(S_e)||_\infty \leq c_e \cdot O(\log \frac{\kappa d}{\lambda})$, or equivalently, $\mu_j \leq c_e \cdot O(\log \frac{\kappa d}{\lambda})$ for all $j \in S_e$.

First, by definition,
\[\frac{\partial x_j}{\partial \mu_j} = \frac{\partial x_j}{\partial \tau} \frac{\partial \tau}{\partial \mu_j} = \frac{a_{tj} (x_j + D_j^{(t)})}{c_e} \cdot \frac{1}{a_{tj}}\]

Notice that the algorithm never increments any $x_j$ past $\frac{1}{\amin}$, since $a_{tj} x_j \geq \amin x_j$ is at most $1$ during the algorithm. Thus, we can upper bound the total growth of $\mu_j$ over the entire algorithm by
\begin{align*}
\Delta \mu_j &\leq \int_0^{\frac{1}{\amin}} \frac{c_e a_{tj}}{a_{tj} (x_j + D_j^{(t)})} dx_j = c_e \log \left( \frac{\frac{a_{tj}}{\amin}}{a_{tj} D_j^{(t)}} + 1\right) \leq c_e \cdot O(\log \frac{\kappa d}{\lambda})
\end{align*}
as we desire, where the last inequality is due to $\frac{a_{tj}}{\amin} \leq \frac{\amax}{\amin} = \kappa$, and $a_{tj} D_j^{(t)} = \frac{\lambda}{d} + \frac{(1-\lambda) a_{tj} x'_j \bfone_{x_j < x'_j}}{A_t x'_c} \geq \frac{\lambda}{d}$.

\paragraph{The case of $q_e > 1$.} In this case, the objective function corresponding to our chosen value of $e$ is $f(x) = c_e ||x(S_e)||_{q_e} = c_e \left( \sum_{k \in S_e} x_k^{q_e} \right) ^{1/q_e}$. The corresponding partial gradient is 
\begin{align*}
    \nabla_j f(x) &= c_e \cdot \frac{1}{q_e} \left( \sum_{k \in S_e} x_k^{q_e} \right) ^{\frac{1}{q_e} - 1} \cdot q_e x_j^{q_e - 1} = c_e x_j^{q_e - 1} \left( \sum_{k \in S_e} x_k^{q_e} \right) ^{\frac{1}{q_e} - 1}
\end{align*}

We use a potential function to bound the total change of $\mu_j$ throughout the algorithm. With the potential function, we partition the algorithm's execution into phases, and bound the change in each phase separately, before summing the change up across all phases.

We define the potential function as follows:
\[\Phi = \sum_{j \in S_e} x_j^{q_e}\]
Trivially, $\Phi$ is non-decreasing in the primal covering variables $x$. At the beginning of the algorithm, $\Phi = 0$.

Let phase 0 denote the period of the algorithm when $\Phi \leq \zeta := \left(\frac{\lambda}{\amax d^2} \right)^{q_e}$, and for each $\ell \geq 1$, let phase $\ell$ denote the period when $\theta^{\ell-1} \zeta \leq \Phi \leq \theta^\ell \zeta$, for some selection of $\theta$ that will be determined later. Since $x_j$ will never be incremented past $\frac{1}{\amin}$, we know that $\Phi \leq d \left(\frac{1}{\amin}\right)^{q_e}$ at all times, thus there can be at most $3 q_e \log \left( \frac{\kappa d}{\lambda} \right) / \log \theta$ phases.

Note that in \cite{shen2020online}, a similar potential-based strategy is used, but the phases of the algorithm is defined with respect to $\zeta = \left(\frac{1}{\amax d^2} \right)^{q_e}$. The change we make by adding the $\lambda$ factor to the numerator is necessary to suffer an additive error of only $O(\log \frac{1}{\lambda})$, as opposed to $O(\frac{1}{\lambda})$.

\paragraph{Phase 0.} We first bound $\Delta \mu_j$, the change in $\mu_j$, during phase 0. Let $\alpha_j$ denote the value of $x_j$ at the end of phase 0. Then, using a similar strategy to the case of $q_e = 1$, we have
\[\frac{\partial x_j}{\partial \mu_j} = \frac{\partial x_j}{\partial \tau} \frac{\partial \tau}{\partial \mu_j} = \frac{a_{tj} \left(x_j + D_j^{(t)}\right)}{c_e x_j^{q_e-1}\left(\sum_{k \in S_e} x_k^{q_e}\right)^{\frac{1}{q_e}-1}} \cdot \frac{1}{a_{tj}} \]
Rearranging, we obtain
\begin{align*}
    \frac{\partial \mu_j}{\partial x_j} &= \frac{c_e a_{tj} x_j^{q_e-1}}{\left( \sum_{k \in S_e} x_k^{q_e} \right)^{1-\frac{1}{q_e}} a_{tj} \left(x_j + D_j^{(t)}\right)} \leq \frac{d c_e a_{tj} x_j^{q_e-1}}{\lambda \left( \sum_{k \in S_e} x_k^{q_e} \right)^{1-\frac{1}{q_e}} }
\end{align*}
where the last inequality holds since $a_{tj}(x_j + D_j^{(t)}) = a_{tj} x_j + \frac{\lambda}{d} +\frac{(1-\lambda) a_{tj} x'_j \bfone_{x_j < x'_j}}{A_t x'_c} \geq \frac{\lambda}{d}$.

Thus, integrating over the entirety of phase 0, we obtain
\begin{align*}
    \Delta \mu_j &\leq \int_0^{\alpha_j} \frac{d c_e a_{tj} x_j^{q_e-1}}{\lambda \left( \sum_{k \in S_e} x_k^{q_e} \right)^{1-\frac{1}{q_e}} } d x_j\\
     &= \frac{d c_e a_{tj}}{\lambda} \int_0^{\alpha_j} \frac{\left(x_j^{q_e}\right)^{1-\frac{1}{q_e}}}{\left( \sum_{k \in S_e} x_k^{q_e} \right)^{1-\frac{1}{q_e}} } d x_j\\
     &\leq \frac{d c_e a_{tj}}{\lambda} \int_0^{\alpha_j} 1 d x_j = \frac{d c_e a_{tj}}{\lambda} \alpha_j
\end{align*}

Since in phase 0, we have $\Phi = \sum_{k \in S_e} x_k^{q_e} \leq \left( \frac{\lambda}{\amax d^2} \right)^{q_e}$, it must hold that $a_j \leq \frac{\lambda}{\amax d^2}$. Thus
\[\Delta \mu_j \leq \frac{d c_e a_{tj}}{\lambda} \alpha_j \leq \frac{c_e}{d} \]

Taking the $p_e$ norm over all $j \in S_e$, we have that at the end of phase 0, $||\mu(S_e)||_{p_e} \leq ||\mu(S_e)||_1 \leq c_e$, since $d \geq \max_e |S_e|$ by definition.

\paragraph{Phase $\ell \geq 1$.} We then bound $\Delta \mu_j$ during some phase $\ell \geq 1$. Denote by $s_j$ and $t_j$ the value of $x_j$ at the beginning and the end of the phase, and by $\Phi_0$ and $\Phi_1$ the value of $\Phi$ at the beginning and the end, respectively.

Similarly, bounding the rate of change of $\mu_j$ with that of $x_j$ yields
\begin{align}
\frac{\partial \mu_j}{\partial x_j} &= \frac{c_e a_{tj} x_j^{q_e-1}}{\left( \sum_{k \in S_e} x_k^{q_e} \right)^{1-\frac{1}{q_e}} a_{tj} \left(x_j + D_j^{(t)}\right)} \leq \frac{c_e x_j^{q_e-2}}{\left( \sum_{k \in S_e} x_k^{q_e} \right)^{1-\frac{1}{q_e}} }\label{eq:phase1partial}
\end{align}
where the last inequality holds since $a_{tj}(x_j + D_j^{(t)}) = a_{tj} x_j + \frac{\lambda}{d} +\frac{(1-\lambda) a_{tj} x'_j \bfone_{x_j < x'_j}}{A_t x'_c} \geq a_{tj} x_j$.

Bounding the right-hand side of \Cref{eq:phase1partial} with $\frac{c_e}{x_j}$ as per our strategy in phase 0, and integrating over the range of $x_j$ yields $\Delta \mu_j \leq c_e \log (t_j/s_j)$, which depends on the value of $s_j$ and $t_j$ and is not tight enough. Instead, we use our defined potential $\Phi$, and the fact that $\Phi = \sum_{k \in S_e} x_k^{q_e} \geq \Phi_0$ throughout this phase, to obtain the following bound:
\begin{equation}\label{eq:phase1delta}
\Delta \mu_j \leq \int_{s_j}^{t_j} \frac{c_e x_j^{q_e-2}}{\Phi_0^{1-\frac{1}{q_e}}} = \frac{c_e (t_j^{q_e-1} - s_j^{q_e-1})}{(q_e-1)\Phi_0^{1-\frac{1}{q_e}}}
\end{equation}
Here, our assumption that $q_e > 1$ is required to evaluate the integral.

Towards bounding the $p_e$ norm, we have
\[(\Delta \mu_j)^{p_e} \leq \frac{c_e^{p_e}(t_j^{q_e-1} - s_j^{q_e-1})^{p_e}}{(q_e-1)^{p_e}\Phi_0^{p_e(1-\frac{1}{q_e})}} \]
We use two mathematical facts to simplify \Cref{eq:phase1delta}. The first is the fact that $a^{p_e} + b^{p_e} \leq (a+b)^{p_e}$ for any $a, b \geq 0$ and $p_e \geq 1$. Setting $a = (t_j^{q_e-1} - s_j^{q_e-1})$ and $b = s_j^{q_e-1}$ yields $(t_j^{q_e-1} - s_j^{q_e-1})^{p_e} \leq t_j^{(q_e-1)p_e} - s_j^{(q_e-1)p_e}$.

The second fact we use is that by definition, $\frac{1}{p_e} + \frac{1}{q_e} = 1$. We have both $p_e (1 - \frac{1}{q_e}) = 1$, and $(q_e - 1) p_e = q_e$.

Thus, we can simplify \Cref{eq:phase1delta} to obtain
\[(\Delta \mu_j)^{p_e} \leq \frac{c_e^{p_e}(t_j^{q_e-1} - s_j^{q_e-1})^{p_e}}{(q_e-1)^{p_e}\Phi_0^{p_e(1-\frac{1}{q_e})}} \leq \frac{c_e^{p_e}(t_j^{q_e} - s_j^{q_e})}{(q_e-1)^{p_e}\Phi_0}\]

Finally, summing over all indices $j \in S_e$, we have
\begin{align*}
\sum_{j \in S_e} (\Delta \mu_j)^{p_e} &\leq \sum_{j \in S_e} \frac{c_e^{p_e}(t_j^{q_e} - s_j^{q_e})}{(q_e-1)^{p_e}\Phi_0}\\
&= \frac{c_e^{p_e}}{(q_e-1)^{p_e} \Phi_0} \left( \sum_{j \in S_e} t_j^{q_e} - \sum_{j \in S_e} s_j^{q_e} \right)\\
&= \frac{c_e^{p_e}}{(q_e-1)^{p_e} \Phi_0} (\Phi_1 - \Phi_0)\\ 
&\leq \frac{c_e^{p_e}(\theta - 1)}{(q_e-1)^{p_e}}
\end{align*}
where the equality on the third line is by definition of $\Phi$ and selection of $s_j, t_j, \Phi_0, \Phi_1$; and the inequality on the fourth line is by our definition of phases, that $\Phi_0 = \theta^{\ell - 1} \zeta$ and $\Phi_1 = \theta^{\ell} \zeta$.

\paragraph{Combining over all phases.} Let $\mu^{(\ell)}$ denote the increment of the vector $\mu$ during phase $\ell$, and let $L$ denote the total amount of phases. Then $\mu = \sum_{\ell = 0}^L \mu^{(\ell)}$. By the triangle inequality for $\ell_q$\new{-}norms, we have
\begin{align*}
    ||\mu||_{p_e} &\leq \sum_{\ell=0}^L ||\mu^{(\ell)}||_{p_e}\\
    &\leq c_e + \left(\frac{c_e^{p_e} (\theta - 1)}{(q_e-1)^{p_e}}\right)^{\frac{1}{p_e}} \cdot 3 q_e \frac{\log \frac{\kappa d}{\lambda} }{\log \theta}\\
    &= c_e \cdot \left(1 + \frac{3 q_e (\theta - 1)^{1/p_e}}{(q_e - 1) \log \theta} \cdot \log \frac{\kappa d}{\lambda} \right)
\end{align*}
by our individual bounds for each phase, and the fact that $L \leq 3 q_e \log(\frac{\kappa d}{\lambda}) / \log \theta$.

\paragraph{Choosing the value of $\theta$.} It remains to choose an appropriate value of $\theta$ for any $q_e > 1$, such that $\frac{3 q_e (\theta - 1)^{1/p_e}}{(q_e - 1) \log \theta} \leq O(1)$, or equivalently, $||\mu||_{p_e} \leq c_e \cdot O(\log \frac{\kappa d}{\lambda})$, as we desire.

For $q_e \geq 2$, we can simply choose $\theta = 2$, which gives us 
\[\frac{3 q_e (\theta - 1)^{1/p_e}}{(q_e - 1) \log \theta} \leq 6\]
as desired.

For $1 < q_e < 2$, we choose $\theta = 1 + (q_e - 1)^{- \eps p_e}$, where $\eps = \frac{1}{- \log (q_e - 1)} > 1$. Then
\begin{align*}
\frac{3 q_e (\theta - 1)^{1/p_e}}{(q_e - 1) \log \theta} &\leq \frac{3 q_e (\theta - 1)^{1/p_e}}{(q_e - 1) \log (q_e - 1)^{- \eps p_e}} \qquad \text{since $\theta \geq (q_e - 1)^{- \eps p_e}$}\\
&= \frac{3 q_e}{q_e - 1} \cdot \frac{(q_e-1)^{-\eps}}{- \eps p_e \log (q_e - 1)} \\
&= \frac{3 q_e}{q_e - 1} \cdot \frac{e}{p_e}\\
&\leq 9 \qquad \text{since $p_e (q_e-1) = q_e$}
\end{align*}
as desired.

Thus, for all values of $q_e > 1$, we have that
\[||\mu||_{p_e} \leq c_e \cdot \left( 1 + 9 \log \frac{\kappa d}{\lambda} \right) = c_e \cdot O(\log \frac{\kappa d}{\lambda}) \]
which concludes the proof.
\end{proof}

We remark that the component of our above proof for phase $\ell \geq 1$ is in fact identical to that of \cite{shen2020online}, and that introducing learning-augmentation with the confidence parameter $\lambda$ does not modify the proof explicitly. However, since we defined $\zeta$ as $\left(\frac{\lambda}{\amax d^2}\right)^{q_e}$, our analysis sums over a different amount of total phases, dependent on $\lambda$.

\paragraph{Consistency.} We remark that \Cref{lem:convexconsistency} holds independent of the dual update theme and the monotone gradient assumption. Thus, \Cref{lem:convexconsistency} still holds for \Cref{alg:lqCover}.

\section{Applications} \label{sec:Appl}

\subsection{Applications of Concave Packing}\label{sec:ApplPacking}

In this section, we present several direct applications of general interest of our switching-based algorithmic framework where \Cref{thm:packingmain} immediately gives non-trivial results, in some cases even optimal, up to constant factors.

\subsubsection{Online Packing Linear Programming} \label{sec:packing-lp}

The online packing linear programming (LP) problem is a special case of \eqref{eq:concavepacking} where the objective $g(y)$ is the linear function $\sum_{i=1}^m y_i = \mathbf{1}^T y$.
\begin{align}
  \begin{aligned}
    \text{max } &  \mathbf{1}^T y
    \text{ over } y \in \nnreals^m 
    \text{ subject to } A^T y \leq b.
  \end{aligned} \label{eq:packing-lp}
\end{align}

The seminal work of~\cite{buchbinder2009online} presents the following theorem.

\begin{theorem}[\cite{buchbinder2009online}] \label{thm:BN09}
For any $B > 0$, there exists a $B$-competitive algorithm for online packing LP \eqref{eq:packing-lp} such that for each constraint $j \in [n]$ it holds:
\[ \sum_{i=1}^m a_{ij} y_i = b_j \cdot O \left( \frac{\log n \kappa_j}{B} \right), \]
where $\kappa_j=a_j(\max)/a_j(\min)$, $a_j(\max) = \max_{i=1}^m \{a_{ij}\}$, and $a_j(\min) = \min_{i=1}^m \{a_{ij} | a_{ij} > 0\}$.
\end{theorem}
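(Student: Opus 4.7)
}
The plan is to instantiate the online primal-dual framework of~\cite{buchbinder2009online}. First I would write the Lagrangian dual of \eqref{eq:packing-lp}, namely the covering LP $\min\; b^\top x$ subject to $A x \geq \mathbf{1}$, $x \geq \mathbf{0}$. The algorithm maintains both $y$ and $x$, initialized to $\mathbf{0}$. When a new packing variable $y_i$ with coefficients $(a_{i1},\ldots,a_{in})$ arrives (equivalently, when the dual covering constraint $\sum_j a_{ij} x_j \geq 1$ arrives), if the dual constraint is already satisfied set $y_i = 0$; otherwise continuously increment $y_i$ at rate $1$ and each $x_j$ at rate $\frac{B \cdot a_{ij}}{b_j}\bigl(x_j + \frac{1}{n \cdot a_j(\max)}\bigr)$, stopping once $\sum_j a_{ij} x_j = 1$.

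For the constraint-violation bound, solving this differential equation on a single step yields, for any $j$ with $a_{ij}>0$, $x_j^{\mathrm{after}} + \frac{1}{n \cdot a_j(\max)} = \bigl(x_j^{\mathrm{before}} + \frac{1}{n \cdot a_j(\max)}\bigr)\exp\bigl(\frac{B a_{ij}\Delta_i}{b_j}\bigr)$, where $\Delta_i$ is the total growth of $y_i$ during the step. Since $x_j$ only changes in steps where $a_{ij}>0$ and the initial value is $x_j=0$, telescoping the multiplicative updates across all such steps gives $\exp\bigl(\frac{B}{b_j}\sum_i a_{ij} y_i\bigr) = n \cdot a_j(\max) \cdot x_j^{\mathrm{final}} + 1$. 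I would then argue the global bound $x_j^{\mathrm{final}} \leq 1/a_j(\min)$: the algorithm only grows $x_j$ while the current step's constraint $\sum_{j'} a_{ij'} x_{j'} < 1$ remains unsatisfied, which in particular requires $a_{ij} x_j < 1$, so $x_j < 1/a_{ij} \leq 1/a_j(\min)$ throughout any continuous update. Substituting yields $\sum_i a_{ij} y_i = \frac{b_j}{B}\log\bigl(n \cdot a_j(\max) \cdot x_j^{\mathrm{final}} + 1\bigr) \leq \frac{b_j}{B}\log(n\kappa_j + 1) = b_j \cdot O(\log(n\kappa_j)/B)$, as claimed.

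For the competitive-ratio bound, I would compare the primal and dual growth rates during each continuous update. The primal grows at $dP/d\tau = 1$, while the dual grows at $dD/d\tau = \sum_j b_j \cdot \frac{B a_{ij}}{b_j}\bigl(x_j + \frac{1}{n \cdot a_j(\max)}\bigr) = B\bigl(\sum_j a_{ij} x_j + \sum_j \frac{a_{ij}}{n \cdot a_j(\max)}\bigr)$. The first sum is $<1$ throughout processing (since we only process while unsatisfied), and the second is at most $\sum_j 1/n = 1$ since $a_{ij}/a_j(\max)\le 1$; hence $dD/d\tau \leq 2B$. Since both rates vanish when not actively processing, integrating over the whole execution gives $P \geq D/(2B)$. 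At termination the dual $x$ satisfies every arrived covering constraint, so weak LP duality gives $D \geq \opt$, whence $P \geq \opt/(2B)$; rescaling $B$ by the constant absorbs the factor of $2$ and yields the advertised $B$-competitive guarantee.

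The main technical point will be verifying the global bound $x_j^{\mathrm{final}} \leq 1/a_j(\min)$ even though $x_j$ is monotone across many steps: the argument hinges on observing that whenever $x_j$ is growing, the step currently being processed enforces $a_{ij} x_j < 1$, which globally limits $x_j$ despite $x_j$ being an aggregate across steps. A secondary care is the telescoping, which works cleanly only because the additive term $\frac{1}{n \cdot a_j(\max)}$ both initializes the multiplicative growth from $x_j=0$ and is exactly the right magnitude to keep $dD/d\tau$ bounded while producing the tight factor of $n\kappa_j$ inside the logarithm of the violation bound.
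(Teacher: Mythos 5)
The paper cites this theorem from~\cite{buchbinder2009online} without proof, so there is no in-paper argument to compare against; the question is whether your reconstruction of the primal-dual derivation is sound. It is, and it is essentially the Buchbinder--Naor argument: Lagrangian covering dual, continuous exponential growth of the dual variables with an additive initialization term tuned so that the violation telescopes to a $\log$ while the dual growth rate stays $O(B)$, and weak LP duality to convert the dual bound into the competitive guarantee. Your bookkeeping (the exact-ODE solution, the telescoping over rounds with $a_{ij}>0$, the global bound $x_j^{\mathrm{final}} \le 1/a_j(\min)$ obtained from the ``not-yet-satisfied'' invariant, the $dD/d\tau \le 2B$ calculation, and the rescale of $B$ to absorb the factor $2$) is correct.

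One technical caveat you should not gloss over: your additive term $\tfrac{1}{n\, a_j(\max)}$ uses the \emph{global} $a_j(\max) = \max_{i=1}^m a_{ij}$, which an online algorithm does not know in advance, since later rows may carry larger entries in column $j$. The actual online algorithm must use the running maximum $a_j^{(\le i)}(\max)$ seen so far; this breaks the clean per-round telescoping because the additive offset shrinks over time. The repair is exactly the monotonicity claim the paper uses in its own convex-covering analysis (cf.\ \Cref{claim:telescope}): since $\frac{a+c_1}{b+c_1} \le \frac{a+c_2}{b+c_2}$ for $a \ge b \ge 0$ and $c_1 \ge c_2 \ge 0$, you can replace every per-round offset by the smallest (final) one before multiplying, and the chain still telescopes to $n\kappa_j + 1$ inside the exponential. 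Your consistency bound $\sum_j a_{ij}/(n\,a_j(\max)) \le 1$ also survives this change since the running maximum only makes each term smaller. With that adjustment the proof is complete and matches the cited source.
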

For impossibility results, it is further shown in~\cite{buchbinder2009online} that for any $B > 0$, there exists a single-constraint instance for the online packing LP problem, such that any $B$-competitive algorithm must violate the constraint by a factor of $\omega(\log n \kappa / B)$.

The following corollary holds by Theorems \ref{thm:packingmain} and \ref{thm:BN09}. Here, $\kappa:=\max_{j=1}^n\{\kappa_j\}$.

\begin{corollary} \label{cor:packing-lp}
    There exists a $\frac{1}{1-\lambda}$-consistent, $\frac{B}{\lambda}$-robust, and $O(\log n \kappa/B)$-feasible learning-augmented online algorithm for online packing LP \eqref{eq:packing-lp} that takes parameter $B>0$, $\lambda \in [0, 1]$, and an advice which is $O(\log n \kappa/B)$-feasible.
\end{corollary}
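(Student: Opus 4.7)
The plan is to directly instantiate \Cref{thm:packingmain} (the switching framework for online concave packing) using the \cite{buchbinder2009online} algorithm from \Cref{thm:BN09} as the black-box subroutine $\cO$. Since the objective of online packing LP \eqref{eq:packing-lp} is the linear function $g(y) = \mathbf{1}^T y$, it is concave, monotone, and satisfies $g(\mathbf{0}) = 0$, hence \eqref{eq:packing-lp} is a valid instance of the concave packing problem \eqref{eq:concavepacking} and \Cref{thm:packingmain} applies.

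I would first set $\alpha = B$ and $\beta = c \log(n\kappa) / B$ for an appropriate absolute constant $c$, so that \Cref{thm:BN09} guarantees $\cO$ is $\alpha$-competitive and $\beta$-feasible. Note that although \Cref{thm:BN09} states the violation guarantee per-constraint with $\kappa_j$, taking $\kappa = \max_j \kappa_j$ gives a uniform bound $\beta = O(\log n \kappa / B)$ that holds for every column, so it is a valid feasibility parameter in the sense required by \Cref{thm:packingmain}. The advice $y'$ is assumed to be $O(\log n \kappa / B)$-feasible, matching this $\beta$, which is exactly the hypothesis needed to invoke the consistency portion of \Cref{thm:packingmain}.

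Then I would read off the three guarantees directly from \Cref{thm:packingmain} applied to this instantiation:
\begin{itemize}
    \item \textbf{Robustness:} $g(\bar y) \geq \frac{\lambda}{\alpha} \opt = \frac{\lambda}{B} \opt$, i.e., $\bar y$ is $\frac{B}{\lambda}$-robust.
    \item \textbf{Consistency:} Since $y'$ is $\beta$-feasible, $g(\bar y) \geq (1-\lambda) g(y')$, i.e., $\bar y$ is $\frac{1}{1-\lambda}$-consistent.
    \item \textbf{Feasibility:} $A^T \bar y \leq (2-\lambda) \beta \, b = O(\log n \kappa / B) \cdot b$, absorbing the constant factor $(2-\lambda) \leq 2$ into the big-$O$.
\end{itemize}

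There is no real obstacle here: the corollary is a direct specialization of our general switching framework to the linear objective, and the only mild bookkeeping issue is converting the per-constraint feasibility bound $O(\log n \kappa_j / B)$ from \Cref{thm:BN09} into a uniform $\beta$ suitable for \Cref{thm:packingmain}, which is handled by taking $\kappa = \max_j \kappa_j$ as defined above the statement of the corollary. The proof is thus essentially a one-line appeal to \Cref{thm:packingmain} with parameters $(\alpha, \beta) = (B, O(\log n \kappa / B))$.
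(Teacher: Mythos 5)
Your proposal is correct and follows exactly the same approach as the paper: the paper's own proof is a one-line remark that the corollary follows from combining \Cref{thm:packingmain} and \Cref{thm:BN09} with $\kappa = \max_j \kappa_j$. You simply spell out the instantiation $(\alpha, \beta) = (B, O(\log n\kappa / B))$ and read off the three guarantees, which is precisely what the paper intends.
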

\Cref{cor:packing-lp} is optimal in consistency, robustness, and feasibility, up to constant factors. We note that the parameter $\kappa$ is non-decreasing over time. It is possible that Algorithm \ref{alg:simple} enters line \ref{line:alg-simple-case-2} in earlier rounds and then enters line \ref{line:alg-simple-case-1} in later rounds when $\beta=\Theta(\log n \kappa/B)$ increases.

In the following subsections, we present other specific applications for learning-augmented online packing LP.

\subsubsection{Online Knapsack} \label{sec:knapsack}

In this section, we apply Algorithm~\ref{alg:simple} to the classical online \emph{knapsack} problem. In the most basic formulation of this problem, a sequence of $m$ items arrive online, and each item $i$ has value $v_i$ and weight $w_i$. Our goal is to pack a selected set of items with as much value as possible into a knapsack of capacity $C$, deciding irrevocably whether to include each item as soon as it arrives online, without the total weight exceeding the capacity. Formulated as an online packing linear program, we obtain
\begin{equation*}
    \text{max } \mathbf{1}^T y 
    \text{ over } y \in \nnreals^m 
    \text{ subject to } \sum_{i=1}^m \frac{w_i y_i}{v_i} \leq C. \label{equation:knapsack}
\end{equation*}
in which the elements $y_i$ together with it's value $v_i$ and weight $w_i$ arrive online.

As a corollary to \Cref{thm:packingmain}, it follows that our simple algorithm can be applied to the knapsack problem to obtain asymptotically optimal performance in both consistency and robustness.

\begin{corollary}
Given any $\alpha$-competitive algorithm $\mathcal{O}$ for the online knapsack problem, \Cref{alg:simple} using $\mathcal{O}$ as a subroutine implies a $\frac{1}{1-\lambda}$-consistent with $1$-feasible advice, $\frac{\alpha}{\lambda}$-robust, and $(2-\lambda)$-feasible learning-augmented online algorithm.
\end{corollary}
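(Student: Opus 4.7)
The plan is to derive this corollary as a direct instantiation of \Cref{thm:packingmain} with $\beta = 1$, leveraging the fact that the knapsack problem fits the online concave packing template with a linear (hence concave) objective and that any reasonable $\alpha$-competitive knapsack algorithm produces solutions respecting the capacity exactly.

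\textbf{Step 1: Cast online knapsack as an instance of \eqref{eq:concavepacking}.} I would first verify that the knapsack LP
$$\max\ \mathbf{1}^T y\ \text{ over } y \in \nnreals^m\ \text{ s.t. } \sum_{i=1}^m \tfrac{w_i}{v_i} y_i \le C$$
matches the form of \eqref{eq:concavepacking} by taking $g(y) = \mathbf{1}^T y$, $A^T$ to be the single-row matrix with entries $w_i/v_i$, and $b = C$. The function $g$ is linear, so it is concave and monotone with $g(\mathbf{0}) = 0$; the arrival of item $i$ reveals column $i$ of $A^T$, exactly matching the online model of \Cref{sec:prelim}.

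\textbf{Step 2: Identify the feasibility parameter of the black-box $\mathcal{O}$.} Any $\alpha$-competitive online knapsack algorithm $\mathcal{O}$ must produce packings that fit into the knapsack, i.e., $\sum_i (w_i/v_i) y_i^{\mathcal{O}} \le C$; hence $\mathcal{O}$ is $1$-feasible in the sense of \Cref{thm:packingmain}, so we may take $\beta = 1$. Observe also that, unlike the general packing LP case of \Cref{cor:packing-lp}, $\beta$ here is a fixed constant independent of the arrival sequence, so no conditional-number issue arises.

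\textbf{Step 3: Invoke \Cref{thm:packingmain}.} Plugging $\beta = 1$ into the theorem yields a solution $\bar y$ with $\sum_i (w_i/v_i)\bar y_i \le (2-\lambda) C$ (the claimed $(2-\lambda)$-feasibility) and $g(\bar y) \ge \frac{\lambda}{\alpha} \opt$ (the claimed $\frac{\alpha}{\lambda}$-robustness). When the advice $y'$ is $1$-feasible, the hypothesis for the consistency clause of \Cref{thm:packingmain} is met with $\beta = 1$, giving $g(\bar y) \ge (1-\lambda) g(y')$, which is exactly $\frac{1}{1-\lambda}$-consistency under our packing-style definition.

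Since each of these three conclusions is just the corresponding clause of \Cref{thm:packingmain} specialized to $\beta = 1$, there is no substantive obstacle; the only thing worth checking carefully is that the natural assumption ``$\mathcal{O}$ never overflows the knapsack'' suffices to conclude $\beta = 1$, and that the advice is assumed $1$-feasible (i.e., the predicted schedule itself fits in the knapsack) precisely so that the consistency branch of \Cref{thm:packingmain} applies.
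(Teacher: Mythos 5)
Your proposal is correct and takes essentially the same approach as the paper, which leaves this corollary as an immediate instantiation of \Cref{thm:packingmain} with $\beta = 1$ after casting knapsack as a single-constraint packing LP. The only point worth flagging is that ``any $\alpha$-competitive online knapsack algorithm is $1$-feasible'' is a modeling convention (feasibility is built into the knapsack problem definition) rather than something that follows from competitiveness alone, but this matches the paper's implicit assumption.
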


The knapsack problem is regarded as one of the most fundamental packing problems in computer science and operations research along with its many variants. It has been extensively studied in the offline setting~\cite{martello1987algorithms} (see~\cite{salkin1975knapsack} for a survey)  
and the online setting~\cite{marchetti1995stochastic,ma2019competitive}. Recent work in the learning-augmented setting includes~\cite{daneshvaramoli2023online,IKMP2021,boyar2022online}, which study the online knapsack problems with learning augmentation from many angles, such as frequency predictions, threshold predictions, and unit profit settings. 

We point out that our advice model on the values of the arriving dual variables generalizes most, if not all,  forms of advice studied in prior literature. For any advice suggesting a course of action, such as taking an arriving item into the knapsack (or doing so with some probability, in the case of randomized algorithms), we can process such advice by setting the corresponding entry in $y'$ to its value (resp. some fraction of its value).

\subsubsection{Online Resource Management Benefit}

In this section, we apply \Cref{alg:simple} to the online resource management benefit problem~\cite{leonardi1995line}. For this problem, we have $n$ distinct resources. Each resource $j \in [n]$ can be assigned a maximum capacity $c_j$. Here, $n$ and each $c_j$ are given in advance. A sequence of $m$ jobs is presented online, one job at a time, where $m$ can be unknown. Job $i \in [m]$ has a benefit of $w_i$ and can be scheduled with $r_i$ different alternatives. We use $a^{(k)}_{ij}$ for $k \in [r_i]$ to denote the amount of resource $j$ necessary to schedule job $i$ with alternative $k$. Upon the arrival of job $i$, $w_i$, $r_i$, and $a^{(k)}_{ij}$ are revealed, and one must schedule job $i$ to different alternatives irrevocably. This problem has the following LP formulation:
\begin{equation*} \label{equation:max-benefit}
\begin{aligned}
& \max_{y} & & \sum_{i=1}^m \sum_{k=1}^{r_i} y^{(k)}_i \\
& \text{subject to}
& & \sum_{i=1}^m \sum_{k=1}^{r_i} \frac{a^{(k)}_{ij} y^{(k)}_i}{w_i} \leq c_j & \forall j \in [n],\\
& & & \sum_{k=1}^{r_i} \frac{y^{(k)}_i}{w_i} \le 1 & \forall i \in [m],\\
& & & y^{(k)}_i \ge 0 & \forall i \in [m], k \in [r_i].
\end{aligned}
\end{equation*}

Here, $y^{(k)}_i$ denotes the fraction of job $i$ assigned to alternative $k$. Upon the arrival of job $i \in [m]$, the decision $y^{(k)}_i$ is irrevocable. 

In \cite{leonardi1995line}, it is assumed that the benefit of each job $w_i \in [1,W]$ with $W \ge 1$ and for every non-zero coefficient $a^{(k)}_{ij}$, the ratio $a^{(k)}_{ij}/c_j \in [1/P,1]$ with $P \ge 1$. \cite{leonardi1995line} presents an $O(\log nWP)$-competitive algorithm and shows it is the best possible. As a corollary, it follows that with advice, we can achieve the best outcome either obtained from an online algorithm or the advice by losing a constant factor, in a black box manner.

\begin{corollary}
Given any $\alpha$-competitive algorithm $\mathcal{O}$ for the online resource management problem, \Cref{alg:simple} using $\mathcal{O}$ as a subroutine implies a $\frac{1}{1-\lambda}$-consistent with $1$-feasible advice, $\frac{\alpha}{\lambda}$-robust, and $(2-\lambda)$-feasible learning-augmented online algorithm.
\end{corollary}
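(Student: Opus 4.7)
The plan is to observe that this corollary is a direct consequence of \Cref{thm:packingmain} applied to the LP formulation of the online resource management benefit problem. The first step is to verify that this LP is an instance of the online concave packing program \eqref{eq:concavepacking}. Concatenating all pairs $(i,k)$ for $i \in [m], k \in [r_i]$ into a single index set, the variables $y_i^{(k)}$ form a nonnegative vector $y$; the objective $\sum_{i,k} y_i^{(k)} = \mathbf{1}^T y$ is linear, hence both concave and monotone, and it vanishes at the origin. The constraints fall into two groups: the resource constraints $\sum_{i,k} (a_{ij}^{(k)}/w_i) y_i^{(k)} \le c_j$ for $j \in [n]$, and the per-job assignment constraints $\sum_{k} y_i^{(k)}/w_i \le 1$ for $i \in [m]$. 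Both groups are of the form $A^T y \le b$ with nonnegative coefficients, so together they fit the template $A^T y \le b$ of \eqref{eq:concavepacking}. Moreover, in the online arrival model, when job $i$ arrives, all coefficients $a_{ij}^{(k)}$ and $w_i$ are revealed, which supplies exactly the columns of $A$ corresponding to variables $y_i^{(k)}$, matching the online model required by \Cref{thm:packingmain}.

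Next, I would invoke \Cref{thm:packingmain} with the given $\alpha$-competitive online algorithm $\cO$ as the subroutine, confidence parameter $\lambda \in [0,1]$, and $\beta = 1$ (since both the online algorithm's output and the advice are assumed to be exactly feasible for this problem). The theorem then yields an online solution $\bar y$ with $A^T \bar y \le (2-\lambda) b$, $g(\bar y) \ge (\lambda/\alpha) \opt$, and, provided the advice $y'$ is feasible (i.e., $1$-feasible), $g(\bar y) \ge (1-\lambda) g(y')$. Translating these bounds via the consistency and robustness definitions in \Cref{sec:prelim} immediately gives $\tfrac{1}{1-\lambda}$-consistency, $\tfrac{\alpha}{\lambda}$-robustness, and $(2-\lambda)$-feasibility, exactly as claimed.

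Since this is a black-box application, there is no genuinely hard step: the only minor subtlety is confirming that the two distinct families of constraints (resource capacities and per-job fractional assignment caps) can be bundled into a single constraint matrix $A$ without breaking the online arrival structure. This is fine because when job $i$ arrives, both the new row of the resource constraint coefficients for the variables $\{y_i^{(k)}\}_k$ and the new per-job constraint row are revealed simultaneously, which is consistent with the column-wise online revelation of $A$ assumed by \eqref{eq:concavepacking}. Thus the corollary follows with no additional argument beyond the reduction.
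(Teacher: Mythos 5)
Your proof is correct and takes the same approach the paper implicitly uses: recognize that the resource management benefit LP fits the template of~\eqref{eq:concavepacking} (concatenating $(i,k)$ pairs into one variable index, linear and hence concave objective vanishing at the origin, nonnegative packing constraints), and then invoke \Cref{thm:packingmain} with $\beta=1$. One cosmetic slip: you refer to the per-job constraint as a new \emph{row} of $A$, but under the paper's convention (rows of $A$ index variables $y_i$, columns index constraints), the per-job cap $\sum_k y_i^{(k)}/w_i \le 1$ arrives as a new \emph{column}; this does not affect the argument, since the switching algorithm's feasibility check on line~\ref{line:alg-simple-case-1} only inspects coordinates whose entries have been revealed, and the per-job constraint involves only the variables arriving in that round.
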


\subsubsection{Online Throughput Maximization}

In the Online Throughput Maximization problem~\cite{buchbinder2006improved,awerbuch1993throughput}, we are given a directed or undirected graph $G=(V,E)$. Each edge has a capacity $c: E \to \nnreals$. A set of requests $(s_i,t_i) \in V \times V$ arrive online, one at a time. Upon the arrival of request $i$, the algorithm irrevocably chooses (fractionally) an $s_i$-$t_i$ path and allocates a total bandwidth of one unit. The objective is to maximize the throughput. The following packing LP describes this problem.
\begin{equation*}
\begin{aligned}
& \max_{y} & & \sum_{i} \sum_{P \in \cP_i}y_{i,P} \\
& \text{subject to}
& & \sum_{P \in \cP_i} y_{i,P} \le 1 & \forall i,\\
& & & \sum_{i} \sum_{P \in \cP_i: e \in P} y_{i,P} \le c(e) & \forall e \in E,\\
& & & y_{i,P} \le 0 & \forall i \quad \forall P \in \cP_i,\\
\end{aligned}
\end{equation*}
where $\cP_i$ denotes the set of all the $s_i$-$t_i$ paths. Here, $y_{i,P}$ indicates the amount of flow on the path $P$. The first set of constraints captures that each request requires a unit of bandwidth. The second set of constraints captures that the load on each edge does not exceed the edge capacity.

An $O(1)$-competitive $O(\log |V|)$-feasible online algorithm is presented in~\cite{buchbinder2006improved}, which together with our result implies the following.

\begin{corollary}
    There exists a $\frac{1}{1-\lambda}$-consistent with $O(\log |V|)$-feasible advice, $O(\frac{1}{\lambda})$-robust, and $O(\log |V|)$-feasible learning-augmented algorithm for the online throughput maximization problem.
\end{corollary}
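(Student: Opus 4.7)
The plan is to observe that the online throughput maximization problem is a direct instantiation of the online packing linear program \eqref{eq:packing-lp}, with the variables indexed by request-path pairs $(i, P)$ where $P \in \mathcal{P}_i$, and with the linear (hence concave and monotone) objective $g(y) = \sum_i \sum_{P \in \mathcal{P}_i} y_{i,P}$. Consequently, the framework of \Cref{sec:Switch} applies, and the corollary reduces to plugging an appropriate classical online subroutine into \Cref{alg:simple} and invoking \Cref{thm:packingmain}.

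Concretely, I would first instantiate the subroutine $\mathcal{O}$ in \Cref{alg:simple} with the algorithm of~\cite{buchbinder2006improved}, which provides an $\alpha$-competitive and $\beta$-feasible online algorithm for throughput maximization with $\alpha = O(1)$ and $\beta = O(\log |V|)$. In the language of \Cref{thm:packingmain}, this means the subroutine produces a solution $y^{\mathcal{O}}$ whose value is within an $O(1)$ factor of the offline optimum while violating each edge-capacity and request constraint by at most a factor of $O(\log |V|)$.

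Next, I would directly apply \Cref{thm:packingmain} to this choice of $\mathcal{O}$, with confidence parameter $\lambda \in [0,1]$ and any advice $y'$ that is $O(\log |V|)$-feasible. The theorem immediately yields a combined solution $\bar{y}$ satisfying the three desired guarantees: the robustness bound $g(\bar{y}) \geq \frac{\lambda}{\alpha} \cdot \mathrm{OPT} = \Omega(\lambda) \cdot \mathrm{OPT}$ (equivalently, $O(1/\lambda)$-robust), the consistency bound $g(\bar{y}) \geq (1-\lambda) g(y')$ (i.e., $\frac{1}{1-\lambda}$-consistent), and the feasibility bound $A^T \bar{y} \leq (2-\lambda)\beta b = O(\log |V|) \cdot b$ for both the unit-demand constraints and the edge-capacity constraints.

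No step here poses any real obstacle: the proof is essentially a direct corollary of the switching framework, and the only thing to verify is that the LP formulation of the throughput problem fits the template of \eqref{eq:concavepacking} (which is immediate, since the objective is linear and the constraints are of the form $A^T y \leq b$ with $b$ encoding both the per-request unit bound and the edge capacities). The mild subtlety worth flagging in the writeup is the interpretation of $\beta$ as a parameter that may depend on the instance size $|V|$ rather than being a universal constant, matching the discussion in the remark following \Cref{thm:packingmain} and in \Cref{sec:packing-lp} about $\beta$ being non-uniform across rounds.
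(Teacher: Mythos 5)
Your proposal is correct and matches exactly how the paper obtains this corollary: the throughput-maximization LP is a packing LP with linear objective, and plugging the $O(1)$-competitive, $O(\log|V|)$-feasible algorithm of~\cite{buchbinder2006improved} into \Cref{alg:simple} and invoking \Cref{thm:packingmain} with $\alpha=O(1)$, $\beta=O(\log|V|)$ gives the claimed consistency, robustness, and feasibility bounds. The paper states the corollary without further argument, and your writeup supplies precisely the reasoning that is left implicit.
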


\subsubsection{Online Network Utility Maximization}

In the online network utility maximization (ONUM) problem~\cite{cao2022online}, we are given a directed or undirected graph $G=(V,E)$. Each edge has a capacity of one unit. A set of requests $(s_i,t_i) \in V \times V$ arrive online one at a time. Each request $i$ is associated with a budget $b_i \in \R_{>0}$, a monotone concave utility function $g_i: \nnreals \to \nnreals$ with $g(0) = 0$, and a specified $s_i$-$t_i$ path $P_i$. Upon the arrival of request $i$, one must decide irrevocably its allocation rate without exceeding the budget $b_i$. The goal is to approximately maximize the total utility under the online allocation requirement. This problem has the following packing formulation.
\begin{equation*} 
\begin{aligned}
& \max_{y} & & \sum_i g(y_i) \\
& \text{subject to}
& & \sum_{i: e \in P_i} y_i \leq 1 & \forall e \in E,\\
& & & y_i \in [0,b_i] & \forall i.\\
\end{aligned}
\end{equation*}
In~\cite{cao2022online}, a sufficient condition of having a roughly linear (in $|E|$) competitive algorithm for ONUM is presented. Together with our results, this implies the following corollary.
\begin{corollary}
Given any $\alpha$-competitive algorithm $\mathcal{O}$ for ONUM, \Cref{alg:simple} using $\mathcal{O}$ as a subroutine implies a $\frac{1}{1-\lambda}$-consistent with $1$-feasible advice, $\frac{\alpha}{\lambda}$-robust, and $(2-\lambda)$-feasible learning-augmented online algorithm.
\end{corollary}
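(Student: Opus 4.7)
The plan is to verify that ONUM fits the online concave packing framework~\eqref{eq:concavepacking}, and then apply Theorem~\ref{thm:packingmain} essentially as a black box with $\beta = 1$. First I would recast the ONUM program in the canonical packing form. The edge capacity constraints $\sum_{i : e \in P_i} y_i \leq 1$ are already linear with non-negative coefficients (entries of $A^T$ are $0/1$ path-edge incidences), and $b_e = 1$. The box constraints $y_i \in [0, b_i]$ can be absorbed into the constraint system $A^T y \leq b$ by appending, for each request $i$, a one-hot row with a single entry equal to $1$ in column $i$ and right-hand side $b_i$. The resulting matrix $A$ is non-negative and $b > \bfzero$, so all structural hypotheses of \eqref{eq:concavepacking} are met.

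Second, I would verify that the composite objective $g(y) = \sum_i g_i(y_i)$ satisfies the assumptions imposed on $g$ in \eqref{eq:concavepacking}. Each $g_i$ is monotone concave with $g_i(0) = 0$, so the sum is monotone concave with $g(\bfzero) = 0$; in particular $g$ is concave when viewed as a function of $y \in \nnreals^m$ because nonnegative sums of concave functions are concave. Thus ONUM is a bona fide instance of online concave packing and the subroutine $\cO$ is, by assumption, an $\alpha$-competitive $\beta$-feasible algorithm for it with $\beta = 1$.

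Third, I would invoke Theorem~\ref{thm:packingmain} directly. With $\beta = 1$, the theorem guarantees a solution $\bar y$ satisfying $A^T \bar y \leq (2 - \lambda) b$, which is exactly $(2-\lambda)$-feasibility; $g(\bar y) \geq \tfrac{\lambda}{\alpha} \opt$, which is $\tfrac{\alpha}{\lambda}$-robustness; and, provided the advice $y'$ satisfies $A^T y' \leq b$ (i.e., is $1$-feasible, including the budget rows), $g(\bar y) \geq (1 - \lambda) g(y')$, which is $\tfrac{1}{1-\lambda}$-consistency. These are precisely the three guarantees claimed in the corollary.

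Since this is a direct instantiation, the main (and really only) obstacle is the modeling step of confirming that the $[0, b_i]$ box constraints are legitimately packing constraints in our framework; once they are written as extra rows of $A^T y \leq b$ the feasibility notion in Theorem~\ref{thm:packingmain} automatically enforces them, and the $(2-\lambda)$ slack applies uniformly to both the edge-capacity and the budget rows. No further analysis of the utility functions $g_i$ beyond their stated concavity and monotonicity is needed.
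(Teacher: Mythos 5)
Your proof is correct and matches the paper's implicit approach: the corollary is a direct instantiation of Theorem~\ref{thm:packingmain} with $\beta = 1$ once ONUM is recognized as an instance of \eqref{eq:concavepacking}, and the paper states it without further argument. The one step you make explicit that the paper leaves implicit---folding the box constraints $y_i \le b_i$ into $A^T y \le b$ as one-hot rows so that $1$-feasible advice and $(2-\lambda)$-feasibility simultaneously control both the edge-capacity and budget rows---is the right modeling choice, and it is safe here because each $g_i$ is defined on all of $\nnreals$, so the $(2-\lambda)$ overshoot on the budget rows does not leave the domain of the objective.
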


\subsubsection{Online Optimization under Inventory Constraints}

The online optimization with inventory constraints (OOIC) problem is introduced in~\cite{lin2019competitive}. In this problem, a decision maker sells inventory across an interval of $T$ discrete time slots to maximize the total revenue. At time $t \in [T]$, the decision maker observes the revenue function $g_t: \nnreals \to \nnreals$ and makes an irrevocable decision on the quantity $y_t$. Upon choosing $y_t$, the decision maker receives revenue $g_t(y_t)$. The goal is to approximately maximize the total revenue, while respective the inventory constraint $\sum_{t \in T}y_t \le \Delta$ with a given parameter $\Delta > 0$. It is further assumed that for all $t \in [T]$, $g_t$ has the following nice properties: (1) $g_t$ is concave, increasing, and differentiable over $[0,\Delta]$, (2) $g_t(0) = 0$, and (3) $g_t'(0) > 0$ and $g'(0) \in [m, M]$ for some $M \ge m > 0$. OOIC has the following packing formulation:
\begin{equation*} 
\begin{aligned}
& \max_{y} & & \sum_{t \in [T]} g_t(y_t) \\
& \text{subject to}
& & \sum_{t \in [T]} y_t \leq \Delta,\\
& & & y_t \ge 0 & \forall t \in [T].\\
\end{aligned}
\end{equation*}
\cite{lin2019competitive} presented a tight $\pi$-competitive algorithm for OOIC, where $\pi \in [\log(M/m)+1, \eta(\log(M/m)+1)]$ and $\eta:=\sup_{g,x \in [0,\Delta]} \{g'(0)x/g(x)\}$. Together with our results, this implies the following corollary.
\begin{corollary}
Given any $\alpha$-competitive algorithm $\mathcal{O}$ for OOIC, Algorithm~\ref{alg:simple} using $\mathcal{O}$ as a subroutine implies a $\frac{1}{1-\lambda}$-consistent with $1$-feasible advice, $\frac{\alpha}{\lambda}$-robust, and $(2-\lambda)$-feasible learning-augmented online algorithm.
\end{corollary}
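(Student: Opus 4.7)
The plan is to verify that OOIC is a special case of online concave packing \eqref{eq:concavepacking} and then apply Theorem~\ref{thm:packingmain} in a black-box way. First, I would cast the problem in the framework: set $g(y) = \sum_{t \in [T]} g_t(y_t)$, which is monotone (each $g_t$ is increasing), concave (a sum of concave functions), and satisfies $g(\mathbf{0}) = 0$ since each $g_t(0) = 0$. The single inventory constraint $\sum_{t \in [T]} y_t \le \Delta$ is a linear packing constraint with the all-ones coefficient vector and $b = \Delta$, so it fits the form $A^T y \le b$ exactly.

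Second, I would observe that the $\pi$-competitive algorithm of~\cite{lin2019competitive} (or more generally, any $\alpha$-competitive algorithm $\cO$ for OOIC assumed in the corollary) is $1$-feasible, since by definition it outputs a true feasible solution satisfying $\sum_t y_t^\cO \le \Delta$. Thus the parameter $\beta$ of Theorem~\ref{thm:packingmain} equals $1$ for this application. The hypothesis on the advice, that it is $1$-feasible, means $\sum_t y'_t \le \Delta$, i.e., the advice also obeys the inventory constraint.

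Third, I would plug these parameters into Theorem~\ref{thm:packingmain}. With $\beta = 1$, the theorem immediately yields a solution $\bar y$ with $g(\bar y) \ge \frac{\lambda}{\alpha}\,\opt$ (robustness $\alpha/\lambda$) and, when the advice is $1$-feasible, $g(\bar y) \ge (1-\lambda)\,g(y')$ (consistency $1/(1-\lambda)$), while $A^T \bar y \le (2-\lambda)\,b$, i.e., $(2-\lambda)$-feasibility. These are exactly the three bounds claimed in the corollary.

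There is no real obstacle here since the statement is a straightforward specialization; the only thing to double-check is that the nonnegativity and monotonicity/concavity hypotheses on $g$ required by~\eqref{eq:concavepacking} are satisfied, which follows from the stated assumptions $g_t(0)=0$, $g_t$ concave and increasing on $[0,\Delta]$, and $g_t'(0) \in [m,M]$ with $m > 0$. Since the application is entirely black-box through $\cO$, no properties particular to the algorithm of~\cite{lin2019competitive} beyond its competitive ratio and feasibility need to be invoked.
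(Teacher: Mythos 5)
Your proposal is correct and follows the same route the paper (implicitly) takes: the corollary is a direct instantiation of Theorem~\ref{thm:packingmain}, obtained by observing that the OOIC objective $g(y)=\sum_t g_t(y_t)$ is concave, monotone, and vanishes at the origin, that the single inventory constraint is a linear packing constraint with $b=\Delta$, and that both the subroutine $\cO$ and the advice are $1$-feasible so $\beta=1$. The bounds then read off verbatim from the theorem.
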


\subsection{Application of Convex Covering: Online Mixed Covering and Packing}\label{sec:ApplCovering}

In this section, we present an application of our PDLA framework for online convex covering problems to online mixed covering and packing. We remark that our application can be employed to design learning-augmented online algorithms for a vast list of problems that can be formulated as mixed covering and packing problems, including online capacity-constrained facility location, online capacitated multicast, and different variants of online set cover problems. We refer the reader to \cite{buchbinder2014online} for a more detailed description of the applications to these problems.

In the mixed covering and packing problem, we have $k$ linear objectives. Let $x \in \R^n_{\ge 0}$ be the decision variables. The vector of the linear objectives is captured by $B x$ for a matrix $B \in \R^{k \times n}_{\ge 0}$. The goal is to minimize the $\ell_q$-norm of the linear objectives subject to the covering constraint $A x \ge 0$ where $q \ge 1$ and $A \in \R^{m \times n}_{\ge 0}$. This problem can be formulated as
\begin{align}
  \begin{aligned}
    \text{minimize } & || Bx ||_q
    \text{ over } x \in \nnreals^n 
    \text{ subject to } A x \geq \bfone.
  \end{aligned} \label{eq:mixed}
\end{align}
In the online problem, the matrix $B$ is given offline, and the rows of $A$ arrive one at a time where $m$ is unknown. The goal is to update $x$ in a non-decreasing manner to satisfy each arriving constraint and approximately minimize the objective. In the learning-augmented problem, we are also given the advice $x'$. Let $\opt$ denote the optimal objective value of \eqref{eq:mixed}. Corollary \ref{cor:convexcomplex} implies the following.

\begin{corollary}\label{cor:mixed}
For the learning-augmented online mixed covering and packing problem \eqref{eq:mixed}, there exists an online algorithm that takes a problem instance and an advice $x'$, and generates a solution $\bar x$ such that 
\begin{align*}
    ||B \bar x||_q \leq \min &\left\{ O \left( \frac{1}{1-\lambda} \right) ||B x'||_q + O(q \log d) \opt, O\left(q \log \frac{d}{\lambda} \right) \opt \right\}
\end{align*}
Additionally, if $x'$ is feasible, i.e., $A x' \geq \bfone$, then $||B \bar x||_q \leq O\left(\frac{1}{1-\lambda}\right) ||B x'||_q$. Here, $d = \max_{i \in [m]} |\{a_{ij} | a_{ij} > 0\}|$ is the row sparsity of the constraint matrix $A$.
\end{corollary}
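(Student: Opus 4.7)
The plan is to reduce Corollary \ref{cor:mixed} to Corollary \ref{cor:convexcomplex} via a change of objective. Running the PDLA algorithm directly on $f(x) = \|Bx\|_q$ fails because its gradient is not monotone for $q > 1$ (already $\nabla \|x\|_2$ violates componentwise monotonicity). Instead, I would apply Corollary \ref{cor:convexcomplex} with the surrogate objective $\tilde f(x) := \|Bx\|_q^q = \sum_k (B_k x)^q$, which has the same minimizer over the feasible region. The surrogate $\tilde f$ is convex (a sum of $q$-th powers of non-negative linear forms, $q \ge 1$), monotone, differentiable, satisfies $\tilde f(\bfzero) = 0$, and, crucially, has monotone gradient $\nabla_j \tilde f(x) = q \sum_k B_{kj} (B_k x)^{q-1}$, which is non-decreasing in each coordinate of $x$ since $B \ge 0$.

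I would then extract the growth parameter $p$ and evaluate the robustness ratio $R(\lambda)$ from \eqref{eq:r-ratio} applied to $\tilde f$. Since $\tilde f$ is $q$-homogeneous, Euler's identity gives $\langle x, \nabla \tilde f(x)\rangle = q \tilde f(x)$ identically, so $p = q$. Moreover, $\tilde f(\delta z) = \delta^q \tilde f(z)$ and $\nabla \tilde f(\delta z) = \delta^{q-1} \nabla \tilde f(z)$, which makes the $z$-dependence in \eqref{eq:r-ratio} cancel: $\min_\ell \nabla_\ell \tilde f(\delta z)/\nabla_\ell \tilde f(z) = \delta^{q-1}$ and $(\delta z^T \nabla \tilde f(\delta z) - \tilde f(\delta z))/\tilde f(z) = (q-1)\delta^q$. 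This reduces the formula to $1/R(\lambda) = \max_{\delta > 0}\bigl(\delta^{q-1}/(4\log(1+2d^2/\lambda)) - (q-1)\delta^q\bigr)$. A short first-order optimization at $\delta = 1/(4q\log(1+2d^2/\lambda))$ gives $R(\lambda) = O\bigl((q\log(d/\lambda))^q\bigr)$.

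Finally, I would translate the guarantees on $\tilde f$ back to $\|Bx\|_q$ by taking $q$-th roots. The robustness $\tilde f(\bar x) \le R(\lambda)\,\opt^q$ becomes $\|B\bar x\|_q \le R(\lambda)^{1/q}\,\opt = O(q\log(d/\lambda))\,\opt$, which is the claimed bound. The feasible-advice consistency $\tilde f(\bar x) \le O(1/(1-\lambda))\tilde f(x')$ becomes $\|B\bar x\|_q \le O((1/(1-\lambda))^{1/q})\|Bx'\|_q \le O(1/(1-\lambda))\|Bx'\|_q$ since $q \ge 1$. For the infeasible-advice case, the mixed bound $\tilde f(\bar x) \le O(1/(1-\lambda))\tilde f(x') + R(1)\,\opt^q$ splits via subadditivity of $t \mapsto t^{1/q}$ (that is, $(a+b)^{1/q} \le a^{1/q} + b^{1/q}$ for $q \ge 1$) into the desired additive combination $O(1/(1-\lambda))\|Bx'\|_q + O(q\log d)\,\opt$.

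The main obstacle is selecting the correct surrogate and verifying that the $q$-th-power-then-root compression yields the linear-in-$q$ dependence in the final bound. Minimizing $\|Bx\|_q^q$ rather than $\|Bx\|_q$ appears to cost a factor of $q$ in the exponent of $R(\lambda)$, but this cost is exactly cancelled by pulling the $q$-th root back through to the norm; the algebra of $\delta$ optimizing $\delta^{q-1}/C - (q-1)\delta^q$ must be executed carefully to confirm that the $q$-th root of the optimum equals $O(q\log(d/\lambda))$ rather than a larger quantity. A minor check is that $(1/(1-\lambda))^{1/q}$ is uniformly bounded by $O(1/(1-\lambda))$ for all $q \ge 1$, which holds since $1/(1-\lambda) \ge 1$.
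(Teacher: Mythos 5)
Your proposal takes essentially the same route as the paper: run \Cref{cor:convexcomplex} on the surrogate $\tilde f(x) = \|Bx\|_q^q$ (whose gradient is monotone since $B \ge 0$), exploit $q$-homogeneity so that the $z$-dependence in \eqref{eq:r-ratio} cancels and $p=q$, optimize over $\delta$ to get $R(\lambda) = (4q\log(1+\tfrac{2}{\lambda}d^2))^q$, and pull $q$-th roots back through using $(a+b)^{1/q} \le a^{1/q} + b^{1/q}$. One small note in your favor: you correctly locate the optimizer at $\delta^\ast = 1/(4q\log(1+\tfrac{2}{\lambda}d^2))$, whereas the paper states $\delta = 1/(q\log(1+\tfrac{2}{\lambda}d^2))$ (the stated value of $R(\lambda)$ in the paper is nevertheless correct, so this is merely a typo there).
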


\begin{proof}
    Let $f(x) = ||B x||^q_q$.\footnote{One might ask if we can use $|| Bx ||_q$ as the objective. The gradient of $|| Bx ||_q$ is not monotone, but we need the assumption that the gradient is monotone in \eqref{eq:dxdy-mono-grad} for the analysis to hold.} We consider the optimization problem
    \begin{align}
  \begin{aligned}
    \text{min } & f(x)
    \text{ over } x \in \nnreals^n 
    \text{ subject to } A x \geq \bfone \label{opt:mixed-q}
  \end{aligned}
\end{align}
and then take the $q$-th root of the objective.
    
    Let $b_{ij}$ denote the $i$-th row $j$-th entry of $B$ and $B_i$ denote the $i$-th row of $B$, we have that
    \[f(z) = \sum_{i=1}^k \left(B_i z\right)^q, f(\delta z) = \delta^q \sum_{i=1}^k \left(B_i z\right)^q,\]
    \[
        \nabla_j f(z) 
        = q\left(\sum_{i=1}^k b_{ij} \left(B_i z\right)^{q-1}\right),\] 
and
    \[ \nabla_j f(\delta z) = q \delta^{q-1} \left(\sum_{i=1}^k b_{ij} \left(B_i z\right)^{q-1}\right).\]
    This implies that $\nabla_j f(\delta z) / \nabla_j f(z) = \delta^{q-1}$ for all $j \in [n]$, $f(\delta z) = \delta^q f(z)$, and thus
    \begin{align*}
        z^T \nabla f(\delta z) &=  q \delta^{q-1} \sum_{i=1}^k \sum_{j=1}^n b_{ij} z_j \left(B_i  z\right)^{q-1} = q \delta^{q-1} \sum_{i=1}^k \left(B_i z\right)^q = q \delta^{q-1} f(z)
    \end{align*}
    for all $z \in \R_{\ge 0}$. We use Algorithm \ref{alg:ConvexCoverComplex} with \eqref{eq:r-ratio} and have that
\begin{align*}
    \frac{1}{R(\lambda)} &= \max_{\delta > 0} \min_z \left( \frac{\min_\ell \left\{ \frac{\nabla_\ell f(\delta z)}{\nabla_\ell f(z)} \right\}}{4 \log (1 + \frac{2}{\lambda} d^2)} - \frac{ \delta z^T \nabla f(\delta z) - f(\delta z)}{f(z)} \right) \\
    &= \max_{\delta > 0}\left( \frac{\delta^{q-1}}{4 \log (1 + \frac{2}{\lambda} d^2)} - q \delta^q + \delta^q \right)\\
    &= \frac{1}{\left(4q \log (1 + \frac{2}{\lambda} d^2)\right)^q}
\end{align*}
when $\delta = 1/(q\log (1 + 2 d^2/\lambda))$. Let $\opt^q$ be the optimal objective value for \eqref{opt:mixed-q}. We have that
\begin{align*}
    ||B \bar x||^q_q &\leq \min \left\{ O \left( \frac{1}{1-\lambda} \right) ||B x'||^q_q  + O(\left(q \log d\right)^q) \opt^q, O\left(\left(q \log \frac{d}{\lambda}\right)^q \right) \opt^q \right\}
\end{align*}

which implies
\begin{align*}
  O \left( \frac{1}{1-\lambda} \right) ||B x'||^q_q + O(q \log d) \opt^q &\le O \left( \left(\frac{1}{1-\lambda}\right)^q \right) ||B x'||^q_q + O(q \log d) \opt^q \\
  &\le \left( O \left( \frac{1}{1-\lambda} \right) ||B x'||_q + O(q \log d) \opt \right)^q,
\end{align*}
thus,
\begin{align*}
    ||B \bar x||_q \leq \min &\left\{ O \left( \frac{1}{1-\lambda} \right) ||B x'||_q + O(q \log d) \opt,  O\left(q \log \frac{d}{\lambda} \right) \opt \right\}
\end{align*}
as required.
\end{proof}

\subsection{Applications of $\ell_q$-Norm Covering: Online Buy-at-Bulk Network Design}\label{sec:ApplLq}

In this section, we present an application of our PDLA framework for online covering problems with $\ell_q$-norm objectives to the online buy-at-bulk network design problem.

In the buy-at-bulk network design problem, we are given a directed or undirected $n$-vertex graph $G=(V,E)$. Each edge $e \in E$ is associated with a monotone subadditive cost function $g_e: \R_{\ge 0} \to \R_{\ge 0}$ with $g_e(0) = 0$. We are also given a set of $k$ terminal pairs $\{(s_i, t_i) \mid i \in [k]\} \subseteq V \times V$. The goal is to find a collection of $s_i$-$t_i$ paths $P_i$ for each $i \in [k]$ such that the overall cost $\sum_{e \in E}g_e(load_e)$ is minimized, where $load_e$ is the number of paths using $e$. By paying an approximation factor of two, the objective can be written in terms of the \emph{upfront cost} $c_e$ and \emph{pay-per-use cost} $\ell_e$, i.e.,
$\sum_{e \in \cup P_i} c_e + \sum_{e \in E} \ell_e \cdot load_e$. In the online setting, the graph and the cost functions are given in advance. The terminal pairs arrive one at a time. The goal is to irrevocably assign an $s_i$-$t_i$ path upon the arrival of pair $i$ while approximately minimizing the overall cost. In the learning-augmented setting, upon the arrival of pair $i$, we are also given a $s_i$-$t_i$ path as advice.

The seminal framework for online buy-at-bulk in \cite{cekp} gave a modular online algorithm with competitive ratio $O(\alpha \beta \gamma \log^5 n)$. The ratio was later improved by \cite{shen2020online} to $O(\alpha \beta \gamma \log^3 n)$. Here,
\begin{itemize}
    \item $\alpha$ denotes the \emph{junction tree} approximation ratio. A junction tree rooted at a vertex $r \in V$ is a union of $s_i$-$t_i$ paths where each path passes through $r$. A junction tree solution is a collection of junction trees rooted at different vertices, such that each terminal pair $(s_i,t_i)$ is connected by using one of the junction trees with a specified root $r \in V$. The cost of a junction tree solution is not shared across junction trees with different roots. The value $\alpha$ is the worst-case ratio between the cost of an optimal junction tree solution and the optimum.
    \item $\beta$ is the integrality gap of the natural LP relaxation for single-source buy-at-bulk instances.
    \item $\gamma$ is the competitive ratio of an online algorithm for single-source buy-at-bulk instances.
\end{itemize}
We now describe the convex optimization formulation used in \cite{shen2020online}. Let $T = \{s_i,t_i \mid i \in [k]\}$ denote the set of terminal vertices. Let $x_{r,u,e}$ denote the \emph{flow} on edge $e \in E$ connecting $u$ and $r$ in a junction tree rooted at $r \in V$. For any $S \subseteq E$, $r \in V$, and $u \in T$, we define $x_{r,u}(S):= \sum_{e \in S}x_{r,u,e}$. This notion is useful when $S$ forms a $u$-$r$ cut while considering a junction tree rooted at $r$. The optimization problem is as follows.
\begin{equation} \label{opt:bab}
\begin{aligned}
& \min_{x} & & \sum_{r \in V}\sum_{e \in E}c_e \max_{u \in T}\left\{x_{r,u,e}\right\} + \sum_{r \in V} \sum_{e \in E} \ell_e \sum_{u \in T} x_{r,u,e}\\
& \text{s.t.}
& & \sum_{r \in R_s} x_{r,s_i}(S_r) + \sum_{r \in R_t} x_{r,t_i}(T_r) \ge 1 \\
& & & \qquad \qquad \forall i \in [k], \quad \forall (R_s,R_t) \text{ partition of }V, \\
& & & \qquad \qquad \forall S_r: s_i\text{-}r \text{ cut}, \forall r \in R_s, \\
& & & \qquad \qquad \forall T_r: t_i\text{-}r \text{ cut}, \forall r \in R_t,\\
& & & \; x_{r,u,e} \geq 0 \qquad\qquad \forall e \in E, \quad \forall u \in T, \quad \forall r \in V.\\
\end{aligned}
\end{equation}
Consider a feasible $0$-$1$ solution $f$. In the objective, for each junction tree rooted at $r \in V$, the upfront cost paid for using edge $e \in E$ is $c_e \max_{u \in T}\left\{x_{r,u,e}\right\}$ since $x_{r,u,e}$ would indicate if $e$ is used in the junction tree connecting $r$ and $u$; the pay-per-use cost is $\ell_e \sum_{u \in T} x_{r,u,e}$ since $\sum_{u \in T} x_{r,u,e}$ counts the number of paths using $e$. The first constraint says that $s_i$ and $t_i$ must be fractionally connected to the roots of different junction trees with a total flow value of at least one.

An equivalent LP formulation for \eqref{opt:bab} can be solved online by paying a factor of $O(\log^3 n)$. The improvement in \cite{shen2020online} is done as follows. The term $\max_{u \in T}\left\{x_{r,u,e}\right\}$ can be written as the $\ell_{\infty}$-norm of $x$. Note that $\ell_{\log n}$-norm is a constant approximation for $\ell_{\infty}$-norm. The objective can be reformulated as the sum of $\ell_{\log n}$-norm and $\ell_1$-norm and the constraint is $0$-$1$. This is in the form of \eqref{eq:lqcovering} with $\kappa = 1$, which implies an $O(\log n)$ competitive algorithm as presented in \cite{shen2020online}.

Now we consider the learning-augmented setting. Recall that upon the arrival of pair $i$, we are also given an $s_i$-$t_i$ path as advice. Consider the solution constructed by taking the union of the $s_j$-$t_j$ paths for $j \in [i]$. We convert the advice solution into a junction tree solution by paying a factor of $\alpha$ using previous frameworks~\cite{antonakopoulos2010approximating,chekuri2010approximation}. This introduces advice for the $0$-$1$ flow variables termed as $x'$. Let $\adv$ denote the cost of the advice. We have that $f(x') \le \alpha \adv$. Using Theorem \ref{thm:lqmain} and the rounding scheme for \eqref{opt:bab} in \cite{cekp} which pays a factor of $O(\log^2 n)$, we have the following.

\begin{corollary}\label{cor:bab}
For the learning-augmented online buy-at-bulk problem, there exists an online algorithm that takes a problem instance and advice consisting of $s_i$-$t_i$ paths and generates an online solution with cost at most
\[\min \left\{ O \left( \frac{\alpha}{1-\lambda} \right) \adv, O\left(\alpha \beta \gamma \log^2 n \log \frac{n}{\lambda} \right) \opt \right\}.\]
\end{corollary}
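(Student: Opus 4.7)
The plan is to instantiate Theorem \ref{thm:lqmain} on the convex relaxation \eqref{opt:bab} studied in \cite{shen2020online}, carefully feeding in an advice vector derived from the given $s_i$-$t_i$ paths, and then rounding the resulting fractional solution through the standard buy-at-bulk framework of \cite{cekp}. The key ingredients are already in place in the excerpt: the LP \eqref{opt:bab} has $0/1$ constraint coefficients so $\kappa=1$, its objective can be rewritten as a sum of an $\ell_{\log n}$-norm term (replacing the $\ell_\infty$ at constant-factor loss) and an $\ell_1$-norm term, and the row sparsity satisfies $d = O(n)$. Thus the problem fits the template \eqref{eq:lqcovering} exactly, and Theorem \ref{thm:lqmain} applies with robustness factor $O(\log(\kappa d/\lambda)) = O(\log(n/\lambda))$.

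First I would process the advice. Taking the union of the advice paths yields an integral feasible solution to the buy-at-bulk instance of total cost exactly $\adv$. To turn this into advice for the junction-tree relaxation \eqref{opt:bab}, I invoke the junction-tree conversion of \cite{antonakopoulos2010approximating,chekuri2010approximation}, which produces a $0/1$ flow vector $x'$ whose cost in \eqref{opt:bab} is at most $\alpha \adv$; that is $f(x') \le \alpha \adv$. I would feed this $x'$ into \Cref{alg:lqCover} along with the confidence parameter $\lambda$ and process the covering constraints of \eqref{opt:bab} as the pairs $(s_i, t_i)$ arrive online.

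Next I would apply Theorem \ref{thm:lqmain} to obtain a fractional output $\bar x$ satisfying
\[
f(\bar x) \;\le\; \min\!\left\{ O\!\left(\tfrac{1}{1-\lambda}\right)\! f(x'),\; O\!\left(\log \tfrac{n}{\lambda}\right)\! \opt_{\mathrm{LP}}\right\},
\]
where $\opt_{\mathrm{LP}}$ is the optimum of \eqref{opt:bab}. The junction-tree framework gives $\opt_{\mathrm{LP}} \le \alpha \cdot \opt$, so the second argument of the minimum is $O(\alpha \log(n/\lambda)) \opt$. Rounding $\bar x$ into an integral buy-at-bulk solution via the scheme of \cite{cekp} pays an additional $O(\beta \gamma \log^2 n)$ factor, which yields a robustness guarantee of $O(\alpha \beta \gamma \log^2 n \cdot \log(n/\lambda)) \opt$, as required. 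For the consistency branch, the algorithm simply also maintains the integral solution formed by the union of advice paths, whose cost is $\adv \le O(\alpha/(1-\lambda)) \adv$, and outputs whichever of the two candidates has smaller cost; this gives the first term in the $\min$ without incurring the extra $O(\beta \gamma \log^2 n)$ rounding overhead.

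The main obstacle I expect is \emph{online} bookkeeping: the advice path for pair $i$ is only revealed when that pair arrives, so the vector $x'$ is itself revealed incrementally, and we must check that \Cref{alg:lqCover}'s per-round update rule remains well-defined when $x'$ is only determined up to round $t$. Concretely, the additive term $\tfrac{(1-\lambda) x'_j \bfone_{x_j < x'_j}}{A_t x'_c}$ only depends on entries of $x'$ corresponding to variables that have already been instantiated (flow variables touching the pairs that have already arrived), so the update is consistent with the online revelation order; I would verify this compatibility carefully. A secondary subtlety is that the junction-tree conversion is itself typically described offline, so I would either appeal to the online variant of this conversion used in previous buy-at-bulk frameworks or argue that the conversion needs only be applied to the advice in hindsight for bounding purposes (since the final rounded solution is produced from $\bar x$ and from the raw union-of-paths advice, neither of which requires $x'$ to be explicitly constructed online).
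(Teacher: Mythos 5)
Your proposal follows the same overall route as the paper's brief sketch: convert the advice paths to a junction-tree $0/1$ flow vector $x'$ with $f(x')\le\alpha\,\adv$, instantiate \Cref{thm:lqmain} on the $\ell_q$-norm reformulation of \eqref{opt:bab} (using $\kappa=1$, $d=\poly(n)$, and the $\ell_{\log n}$-norm approximating $\ell_\infty$ at constant loss), and post-process via the rounding machinery of \cite{cekp} at a further $O(\beta\gamma\log^2 n)$ factor, together with $\opt_{LP}\le\alpha\cdot\opt$ for the robustness branch. The ingredients and the order in which you combine them match the paper's argument.

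The weak point is your handling of the consistency branch. You propose that the algorithm maintain two candidate integral solutions — the union of advice paths and the rounded output of \Cref{alg:lqCover} — and ``output whichever has smaller cost.'' Online buy-at-bulk requires an irrevocable $s_i$-$t_i$ path to be committed upon the arrival of each pair, and the algorithm cannot know at commitment time which candidate will be cheaper in the end, so the per-instance choice between the two candidates is not available to an online algorithm. If the committed output is the rounded $\bar x$, the $O(\beta\gamma\log^2 n)$ factor applies to both branches of the $\min$, not just the robustness one, since \Cref{thm:lqmain}'s $\min$ is a guarantee on the \emph{same} fractional $\bar x$ and rounding scales both. Your two-candidate device thus does not yield the stated consistency term without further argument about how the $\min$ is realized by a single online solution. (The paper's own proof sketch is terse on exactly this point, so be aware that the issue you are trying to patch is real; but the patch as stated is not an online algorithm.)

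Your secondary concern is valid and worth flagging: $x'$ is only revealed incrementally as terminal pairs arrive, and \Cref{alg:lqCover} uses $x'_j$ and $A_t x'_c$ in the per-round rate $D_j^{(t)}$, so the junction-tree conversion of the advice cannot be deferred ``to hindsight'' — it must be maintainable online. The paper glosses over this compatibility check, and your plan to verify it (or to cite an online version of the conversion) is exactly the right instinct.
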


We note that using a minimum cut algorithm as a separation oracle, the learning-augmented algorithm runs in polynomial time. Here we state the high-level idea that slightly modifies Algorithm \ref{alg:lqCover}. In the line \ref{line:cover} while loop, whenever $A_t x < 1/2$, we increment $x$ until $A_t x \ge 1$. The while loop is not entered if $A_t x \ge 1/2$. This guarantees that the growth of $x$ is sufficiently large so that the number of iterations entering the line \ref{line:cover} while loop is $O(E)$. We return $2x$ as the final solution. Since the objective function $f$ is convex and $f(\bfzero)=0$, the solution is feasible and we pay a constant factor for the objective.

\section{Conclusion}\label{sec:Conclusion}

In this paper we present two unifying learning-augmented algorithmic frameworks: A switching-based simple framework for online concave packing problems, using any state-of-the-art online algorithm as a black-box, and a primal-dual framework for online convex covering problems based on the ideas of~\cite{grigorescu2022arxiv} and~\cite{azar2016online}. Our algorithm for online concave packing, \Cref{alg:simple}, is $\frac{1}{1-\lambda}$-consistent, $\frac{\alpha}{\lambda}$-robust, and $(2-\lambda)\beta$-feasible, when given an $\alpha$-competitive $\beta$-feasible classical online algorithm as a black-box subroutine; Our algorithm for online convex covering, \Cref{alg:ConvexCover}, is a $O(\frac{1}{1-\lambda})$-consistent, $O((p \log \frac{d}{\lambda})^p)$-robust primal-dual learning-augmented algorithm, that accepts fractional problem instances and advice. In the context of online covering problems with $\ell_q$-norm objectives, we show that \Cref{alg:ConvexCover} can be adapted with slight modifications to obtain a $O(\frac{1}{1-\lambda})$-consistent, $O(\log \frac{\kappa d}{\lambda})$-robust algorithm. 

One open question immediately raised by our results is whether there are other specific forms of objective functions that our PDLA framework can adapt to and obtain improved robustness ratio over the general case. As illustrated by \Cref{alg:lqCover} and \Cref{thm:lqmain} in \Cref{sec:Lq}, our framework is general-purpose and flexible, and can easily specialize for $\ell_q$-norm objectives to utilize their structural properties. It would be exciting to see other settings for which our framework, or the PDLA framework in general, can be applied to.

Another potential direction following ideas in the field of learning-augmented algorithms is to lessen or remove our dependence on the external accuracy of the confidence parameter $\lambda$, the hyper-parameter to the algorithm. Our consistency and robustness ratios are functions of $\lambda$, and rely on it being a somewhat accurate representation of the performance of the advice; If the confidence parameter is chosen inappropriately, or in extremal cases adversarially, our PDLA algorithm can perform arbitrarily badly. While there have been efforts to accurately and efficiently learn hyper-parameters for learning-augmented algorithms via online learning techniques, there are learning-augmented algorithms for specific structured problems that does not require any hyper-parameters, and can maintain robustness against adversarial advice oracles. Replicating such parameter-independent robustness, or proving impossibility results, for general problems such as online covering programs with linear or convex objectives will be an important contribution to the field of learning-augmented algorithms.

We also raised the conceptual question of characterizing online problems with advice that can utilize classical online algorithms as black-boxes to obtain optimal performance. We believe that while meta-questions like these are hard to formulate and study, they are vital to further understand the fundamental power of learning-augmentation, and would like to invite more researchers to explore these directions.

\newpage
\bibliographystyle{alpha}
\bibliography{reference}

\appendix

\section{Proof of \Cref{cor:convexcomplex}} \label{app:ConvexComplex}

In this section, we provide a formal proof of \Cref{cor:convexcomplex}. For completeness, we restate the corollary here:

\ConvexComplex*

As with the proof of \Cref{thm:convexmain}, we prove the two clauses of \Cref{cor:convexcomplex} separately: that $f(\bar x) \leq O(1/(1-\lambda)) f(x') + R(1) \opt$, and that $f(\bar x) \leq R(\lambda) \opt$.

\paragraph{Robustness.} We show the following subclaims:
\begin{enumerate}
    \item $\bar x$ is feasible and monotone;
    \item $(\bar y, \mu)$ is feasible, and $\mu$ is monotone;
    \item The primal objective $P$ is at most $R(\lambda)$ times the dual objective $D$.
\end{enumerate}
which, together with weak duality, implies:
\[P = f(\bar x) \leq R(\lambda) D \leq R(\lambda) \opt\]
as desired.

For the two feasibility subclaims, \Cref{lem:convexfeasibility} holds by algorithm design, regardless of our choice of $\mu$ and $r$, the dual growth rate.

Towards the third subclaim bounding the ratio between the primal and dual objectives, we continue using the sandwich strategy, starting with the lower bound of $x_j$ at time $\tau$, corresponding to \Cref{lem:convexxjbound}:
\begin{lemma}\label{lem:convexcomplexxjbound}
    In context of \Cref{alg:ConvexCoverComplex}, for a variable $x_j$, let $T_j = \{i | a_{ij} > 0\}$ and let $S_j$ be any subset of $T_j$. Then for any $t \in T_j$ and $\tau_t \leq \tau \leq \tau_{t+1}$,
    \begin{align*}
        x_j^{(\tau)} \geq\; & \frac{\lambda}{\max_{i \in S_j} \{a_{ij}\} \cdot d} \cdot \left( \exp \left( \frac{\log (1 + \frac{2}{\lambda}d^2)}{\mu_j} \sum_{i \in S_j} a_{ij} y_i^{(\tau)} \right) - 1 \right)
    \end{align*}
    where $\tau_t$ denotes the value of $\tau$ at the arrival of the $t$-th primal constraint.
\end{lemma}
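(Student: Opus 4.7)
The plan is to mirror the proof of \Cref{lem:convexxjbound}, noting that the only substantive change from \Cref{alg:ConvexCover} to \Cref{alg:ConvexCoverComplex} is in the choice of $\mu$ and in the numerator of $r$, and that these two changes are calibrated precisely so that the key differential inequality relating $x_j$ to $y_t$ retains the same form. Concretely, I would first compute
\[
\frac{\partial x_j}{\partial y_t} \;=\; \frac{\partial x_j}{\partial \tau}\cdot\frac{\partial \tau}{\partial y_t}
\;=\; \frac{a_{tj}(x_j + D_j^{(t)})}{\nabla_j f(x)}\cdot \frac{\log(1+\tfrac{2}{\lambda}d^2)}{\min_{\ell}\{\nabla_\ell f(\delta \bar x)/\nabla_\ell f(\bar x)\}},
\]
and then, as indicated in the text preceding the statement, use monotonicity of $\nabla f$ together with $x \le \bar x$ to bound
\[
\min_{\ell}\!\left\{\tfrac{\nabla_\ell f(\delta \bar x)}{\nabla_\ell f(\bar x)}\right\}\cdot \nabla_j f(x) \;\le\; \tfrac{\nabla_j f(\delta \bar x)}{\nabla_j f(\bar x)}\cdot \nabla_j f(x) \;\le\; \nabla_j f(\delta \bar x) \;=\; \mu_j.
\]
This yields $\partial x_j/\partial y_t \ge \log(1+\tfrac{2}{\lambda}d^2)\cdot a_{tj}(x_j + D_j^{(t)})/\mu_j$, which is exactly the analog (with $\mu_j$ in place of $\delta \nabla_j f(\bar x)$) of the inequality driving the proof of \Cref{lem:convexxjbound}.

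Next, I would solve this separable differential inequality on the interval $[\tau_t,\tau]$, treating $\mu_j$ as a constant (since $\mu$ is set once at the start of the round, and in fact \Cref{alg:ConvexCoverComplex} sets it once over the whole execution), to obtain
\[
\frac{x_j^{(\tau)} + D_j^{(t)}}{x_j^{(\tau_t)} + D_j^{(t)}} \;\ge\; \exp\!\left(\frac{\log(1+\tfrac{2}{\lambda}d^2)}{\mu_j}\cdot a_{tj}\, y_t^{(\tau)}\right).
\]
Then I would take the product of these ratios over all $i \in S_j$, using \Cref{claim:telescope} and \Cref{claim:Dj} exactly as in the original proof to replace each $D_j^{(i)}$ by the common lower bound $\lambda/(\max_{i \in S_j} a_{ij}\cdot d)$, and extend the product from $S_j$ to $T_j$ by monotonicity of $x_j$ (so inserted ratios are $\ge 1$). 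A standard telescoping argument over $T_j$ then collapses the product into $(x_j^{(\tau)} + \lambda/(\max_{i\in S_j} a_{ij}\cdot d))$ divided by $\lambda/(\max_{i\in S_j} a_{ij}\cdot d)$, and rearranging gives the claimed lower bound.

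I do not expect a genuine obstacle in this proof: all the moving pieces (the differential equation, \Cref{claim:telescope}, \Cref{claim:Dj}, and the telescoping argument) are already set up in \Cref{sec:Convex}. The only conceptually new step is the bound that turns $\min_\ell\{\nabla_\ell f(\delta \bar x)/\nabla_\ell f(\bar x)\}\cdot \nabla_j f(x)$ into $\mu_j = \nabla_j f(\delta \bar x)$; once this is in place, the rest of the argument proceeds verbatim as in \Cref{lem:convexxjbound}. The main care needed is simply to verify that the substitution $\mu_j = \nabla_j f(\delta \bar x)$ (rather than $\delta \nabla_j f(\bar x)$) is compatible with using $\mu_j$ as a constant in the integration, which follows because $\bar x$ denotes the final primal solution and $\delta$ is a fixed parameter.
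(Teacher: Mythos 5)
Your proposal is correct and follows essentially the same route as the paper's proof: the same computation of $\partial x_j/\partial y_t$, the same use of monotonicity of $\nabla f$ (via $x\le\bar x$) to bound $\min_\ell\{\nabla_\ell f(\delta\bar x)/\nabla_\ell f(\bar x)\}\cdot\nabla_j f(x)\le\nabla_j f(\delta\bar x)=\mu_j$, the same solved differential inequality, and the same telescoping over $S_j\subseteq T_j$ using \Cref{claim:telescope} and \Cref{claim:Dj}. You compress the paper's intermediate inequality $\frac{\nabla_j f(\delta\bar x)}{\nabla_j f(\bar x)}\nabla_j f(x)\le\frac{\nabla_j f(\delta\bar x)}{\nabla_j f(x)}\nabla_j f(x)$ into one step, but the underlying reasoning is identical.
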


\begin{proof}
Note that 
\begin{align*}
\frac{\partial x_j}{\partial y_i} &= \frac{\partial x_j}{\partial \tau} \cdot \frac{\partial \tau}{\partial y_i} = \frac{a_{ij} (x_j + D_j)}{\nabla_j f(x)} \cdot \frac{\log (1 + \frac{2}{\lambda}d^2)}{\min_{\ell=1}^n \left\{ \frac{\nabla_\ell f(\delta \bar x)}{\nabla_\ell f(\bar x)} \right\}}\\
&\geq \log (1 + \frac{2}{\lambda}d^2) \frac{a_{ij} (x_j + D_j)}{\nabla_j f(\delta \bar x)}\\
&= \log (1 + \frac{2}{\lambda}d^2) \frac{a_{ij} (x_j + D_j)}{\mu_j}
\end{align*}
where the inequality follows from the monotonicity of $\nabla f(x)$ and
\begin{align*}
    \min_{\ell=1}^n \left\{ \frac{\nabla_\ell f(\delta \bar x)}{\nabla_\ell f(\bar x)} \right\} \cdot \nabla_j f(x) &\leq \frac{\nabla_j f(\delta \bar x)}{\nabla_j f(\bar x)} \cdot \nabla_j f(x)\\
    &\leq \frac{\nabla_j f(\delta \bar x)}{\nabla_j f(x)} \cdot \nabla_j f(x)\\
    &= \nabla_j f(\delta \bar x)
\end{align*}
where the last inequality is due to our assumption that the gradient is monotone, and that $x \leq \bar x$.

Solving this differential equation, we have
\[\frac{x_j^{(\tau)} + D_j^{(t)}}{x_j^{(\tau_t)} + D_j^{(t)}} \geq \exp \left( \frac{\log (1 + \frac{2}{\lambda}d^2)}{\mu_j} \cdot a_{tj} y_t^{(\tau)} \right)\]

Multiplying over all indices, where for notational convenience we set $\tau_{t+1} = \tau$,
\begin{align*}
\exp \left(\frac{\log (1 + \frac{2}{\lambda}d^2)}{\mu_j} \sum_{i\in S_j} a_{ij} y_i^{(\tau)} \right) &\leq \prod_{i \in S_j} \frac{x_j^{(\tau_{i+1})} + D_j^{(i)}}{x_j^{(\tau_i)} + D_j^{(i)}} \\
&\leq \prod_{i \in S_j} \frac{x_j^{(\tau_{i+1})} + \frac{\lambda}{\max_{i \in S_j} a_{ij} d}}{x_j^{(\tau_i)} + \frac{\lambda}{\max_{i \in S_j} a_{ij} d}} \qquad \text{ by \Cref{claim:telescope} and \Cref{claim:Dj}}\\
&\leq \prod_{i \in T_j} \frac{x_j^{(\tau_{i+1})} + \frac{\lambda}{\max_{i \in S_j} a_{ij} d}}{x_j^{(\tau_i)} + \frac{\lambda}{\max_{i \in S_j} a_{ij} d}} \qquad \text{ since $x_j$ is monotone}\\
&= \frac{x_j^{(\tau)} + \frac{\lambda}{\max_{i \in S_j} a_{ij} d}}{\frac{\lambda}{\max_{i \in S_j} a_{ij} d}} \qquad \text{ by a telescoping argument over all indices}
\end{align*}

Reorganizing the terms yields the desired bound on $x_j^{(\tau)}$.
\end{proof}

With the lower bound of \Cref{lem:convexcomplexxjbound} and the trivial upper bound of $1/a_{tj}$, we continue to proving the third subclaim bounding the ratio between the primal and dual objectives, analogous to \Cref{lem:convexrobust}:
\begin{lemma}\label{lem:convexcomplexrobust}
Let $p := \sup_{x \geq \mathbf{0}} \frac{\langle x, \nabla f(x) \rangle}{f(x)}$. Then
\begin{align*}
    P &= f(\bar x) \leq R(\lambda) D
\end{align*}
where
\begin{align*}
    \frac{1}{R(\lambda)} = \max_{\delta > 0} \min_z &\left( \frac{\min_\ell \left\{ \frac{\nabla_\ell f(\delta z)}{\nabla_\ell f(z)} \right\}}{4 \log (1 + \frac{2}{\lambda} d^2)}  - \frac{ \delta z^T \nabla f(\delta z) - f(\delta z)}{f(z)} \right).
\end{align*}
\end{lemma}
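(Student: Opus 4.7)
The plan is to mirror the proof of \Cref{lem:convexrobust}, with a single structural change: the choice $\mu = \nabla f(\delta \bar x)$ in \Cref{alg:ConvexCoverComplex} (rather than $\delta \nabla f(\bar x)$) lets us dispense with \Cref{fact:boundedgrowth}, and instead invoke the Fenchel--Legendre identity $f^*(\nabla f(w)) = w^T \nabla f(w) - f(w)$ directly to handle the $f^*(\mu)$ term.

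First I would reproduce the primal/dual rate-of-change bounds from \Cref{lem:convexrobust}. Applying \Cref{lem:convexcomplexxjbound} on the set $U(\tau)$ of tight dual constraints (of size at most $d$), with $S_j = \{i \mid a_{ij} > 0, y_i^{(\tau)} > 0\}$, I would obtain $a_{tj}/a_{m^*_j j} \leq 1/(2d)$ exactly as in the original proof; the decrement step on the tight constraints thus shaves at most a $1/2$ factor, yielding $\partial(\sum_{i=1}^t y_i)/\partial \tau \geq r/2$. Since the primal update rule and the unsatisfied-constraint invariant $A_t x < 1$ are identical to those of \Cref{alg:ConvexCover}, the bound $\partial f(x^{(\tau)})/\partial \tau \leq 2$ from \Cref{lem:convexrobust} transfers verbatim (using the definition of $x'_c$). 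Combining and integrating yields $\sum_{i=1}^m \bar y_i \geq \tfrac{r}{4} f(\bar x)$, where now $r = \min_\ell\{\nabla_\ell f(\delta \bar x)/\nabla_\ell f(\bar x)\}/\log(1 + \tfrac{2}{\lambda}d^2)$.

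Next, instead of applying \Cref{fact:boundedgrowth}(3) as in the original proof, I would plug $\mu = \nabla f(\delta \bar x)$ into the Fenchel--Legendre identity to write $f^*(\mu) = \delta \bar x^T \nabla f(\delta \bar x) - f(\delta \bar x)$ exactly. Thus
\[
D = \sum_{i=1}^m \bar y_i - f^*(\mu) \geq \tfrac{r}{4} f(\bar x) - \bigl(\delta \bar x^T \nabla f(\delta \bar x) - f(\delta \bar x)\bigr).
\]
Dividing by $f(\bar x)$ and substituting the formula for $r$ gives
\[
\frac{D}{f(\bar x)} \geq \frac{\min_\ell\{\nabla_\ell f(\delta \bar x)/\nabla_\ell f(\bar x)\}}{4 \log(1 + \tfrac{2}{\lambda}d^2)} - \frac{\delta \bar x^T \nabla f(\delta \bar x) - f(\delta \bar x)}{f(\bar x)}.
\]
Since $\delta > 0$ is a free parameter of \Cref{alg:ConvexCoverComplex} chosen to maximize the robustness guarantee while $\bar x$ depends on the adversarial instance, taking the worst case (min over $z$ playing the role of $\bar x$) and the best case for the algorithm (max over $\delta$) reproduces exactly $1/R(\lambda)$, proving $P = f(\bar x) \leq R(\lambda) D$ as required. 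The case of a locally infeasible advice row ($A_t x' < 1$) is absorbed by setting $\lambda = 1$ in that round, exactly as in the original proof.

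The only real subtlety I anticipate is bookkeeping: verifying that the pivotal inequality $\partial x_j/\partial y_i \geq \log(1 + \tfrac{2}{\lambda}d^2) \cdot a_{ij}(x_j + D_j)/\mu_j$, which drives \Cref{lem:convexcomplexxjbound}, continues to hold with the modified $r$ and $\mu$. This is precisely the cancellation that motivated the particular form of $r$ in \Cref{alg:ConvexCoverComplex} (the factor $\min_\ell\{\nabla_\ell f(\delta \bar x)/\nabla_\ell f(\bar x)\}$ in $r$ combines with $\nabla_j f(x)$ in the denominator of $\partial x_j/\partial \tau$ to produce $\nabla_j f(\delta \bar x) = \mu_j$, via monotonicity of $\nabla f$). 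Because this cancellation is already carried out inside the proof of \Cref{lem:convexcomplexxjbound}, no new technical ingredient is needed — the lemma may simply be invoked as a black box.
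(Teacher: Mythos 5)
Your proposal is correct and follows essentially the same route as the paper: bound the dual-rate loss from the decrement step via \Cref{lem:convexcomplexxjbound} to get $r/2$, transfer the primal-rate bound of $2$ verbatim, integrate, and then evaluate $f^*(\mu)$ at $\mu = \nabla f(\delta\bar x)$ via the Legendre identity before taking $\min_z$ and $\max_\delta$. The one cosmetic point is that what you call ``dispensing with \Cref{fact:boundedgrowth}'' is exactly invoking part~(2) of that fact (the paper cites it as such), so the ingredient is the same, just named differently.
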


\begin{proof}
We assume that $x'$ is feasible, i.e., $A_t x' \geq 1$. A similar analysis can be obtained for the case of $A_t x' < 1$ by setting $\lambda = 1$.

Consider the update when primal constraint $t$ arrives and $\tau$ is the current time. Let $U(\tau)$ denote the set of tight dual constraints at time $\tau$, i.e., for every $j \in U(\tau)$ we have $a_{tj} > 0$ and $\sum_{i=1}^t a_{ij} y_i^{(\tau)} = \nabla_j f(\delta \bar x) = \mu_j$. $|U(\tau)| \leq d$, and define for every $j$ the set $S_j := \{i | a_{ij} > 0, y_i^{(\tau)} > 0\}$. 
Clearly $\sum_{i \in S_j} a_{ij} y_i^{(\tau)} = \sum_{i=1}^t a_{ij} y_i^{(\tau)} = \nabla_j f(\delta \bar x) = \mu_j$, and $\sum_j a_{tj} x_j^{(\tau)} < 1$, thus by \Cref{lem:convexxjbound}, we have
\begin{align*}
    \frac{1}{a_{tj}} > x_j^{(\tau)} \geq\;& \frac{\lambda}{\max_{i \in S_j} \{a_{ij}\} \cdot d} \cdot \left( \exp \left( \log (1 + \frac{2}{\lambda}d^2) \right) - 1 \right)
\end{align*}

Rearranging, we have
\[\frac{a_{tj}}{a_{m^*_j j}} = \frac{a_{tj}}{\max_{i \in S_j} a_{ij}} \leq \frac{1}{2d}\]
Thus, we can bound the rate of change of $\sum_{i=1}^t y_i$ at time $\tau$:
\begin{align*}
    \frac{\partial (\sum_{i=1}^t y_i)}{\partial \tau} &\geq r - \sum_{j \in U(\tau)} \frac{a_{tj}}{a_{m^*_j j}} \cdot r \geq r \left( 1 - \sum_{j \in U(\tau)} \frac{1}{2d} \right) \geq \frac{1}{2} r
\end{align*}
where the last inequality follows from $|U(\tau)| \leq d$.

The rate of increase of the primal is unaffected by the changes we made in \Cref{alg:ConvexCoverComplex}. Thus, we can similarly bound the rate of change between the primal and the dual as before:
\begin{align*}
    \frac{\partial (\sum_{i=1}^t y_i^{(\tau)})}{\partial f(x^{(\tau)})} &= \frac{\partial (\sum_{i=1}^t y_i^{(\tau)})}{\partial \tau} \cdot \frac{\partial \tau}{\partial f(x^{(\tau)})}\\
    &\geq \frac{1}{2} r \cdot \frac{1}{2} = \frac{1}{4} r = \frac{\min_\ell \left\{ \frac{\nabla_\ell f(\delta \bar x)}{\nabla_\ell f(\bar x)} \right\}}{4 \log (1 + \frac{2}{\lambda} d^2)}
\end{align*}

Taking integral over $\tau$, we have
\begin{equation*}
    \sum_{i=1}^m \bar y_i \geq \frac{\min_\ell \left\{ \frac{\nabla_\ell f(\delta \bar x)}{\nabla_\ell f(\bar x)} \right\}}{4 \log (1 + \frac{2}{\lambda} d^2)} \cdot f(\bar x).
\end{equation*}

By definition of the dual objective $D$, we have
\begin{align*}
    D &= \sum_{i=1}^m \bar y_i - f^*(\mu)\\
    &\geq \frac{\min_\ell \left\{ \frac{\nabla_\ell f(\delta \bar x)}{\nabla_\ell f(\bar x)} \right\}}{4 \log (1 + \frac{2}{\lambda} d^2)} \cdot f(\bar x) - f^*(\nabla f(\delta \bar x))\\
    &= \left( \frac{\min_\ell \left\{ \frac{\nabla_\ell f(\delta \bar x)}{\nabla_\ell f(\bar x)} \right\}}{4 \log (1 + \frac{2}{\lambda} d^2)} - \frac{\delta \bar x^T \nabla f(\delta \bar x) - f(\delta \bar x)}{f(\bar x)} \right) f(\bar x)\\
    & \geq \min_z \left( \frac{\min_\ell \left\{ \frac{\nabla_\ell f(\delta z)}{\nabla_\ell f(z)} \right\}}{4 \log (1 + \frac{2}{\lambda} d^2)}  - \frac{ \delta z^T \nabla f(\delta z) - f(\delta z)}{f(z)} \right) \cdot P
\end{align*}
where the equality on the third line is from (2) of \Cref{fact:boundedgrowth}. 
Our desired result follows from maximizing this ratio and choosing $\delta$ to be the maximizer.
\end{proof}

\paragraph{Consistency.} We note that \Cref{lem:convexconsistency} is unaffected by our changes in \Cref{alg:ConvexCoverComplex}, since the primal growth rate is unaltered, and thus it holds for \Cref{alg:ConvexCoverComplex} as well.

\end{document}